\renewcommand{\Re}{\text{\rm Re}\,}
\renewcommand{\Im}{\text{\rm Im}\,}
\newcommand{\PT}{\mathcal{PT}}
\renewcommand{\P}{\mathcal{P}}
\newcommand{\T}{\mathcal{T}}
\newcommand{\ii}{{\rm i}}
\newcommand{\ee}{{\rm e}}
\newcommand{\hilb}{\mathscr{H}}
\newcommand{\Dom}{\mathrm{Dom}}
\newcommand{\beq}{\begin{equation}}
\newcommand{\eeq}{\end{equation}}
\newcommand{\beqs}{\begin{equation*}}
\newcommand{\eeqs}{\end{equation*}}
\newcommand{\bal}{\begin{aligned}}
\newcommand{\eal}{\end{aligned}}
\newcommand{\dx}{\,\mathrm{d}x}
\newcommand{\dy}{\,\mathrm{d}y}
\newcommand{\du}{\,\mathrm{d}u}
\newcommand{\RR}{\mathbb{R}}
\newcommand{\CC}{\mathbb{C}}
\newcommand{\NN}{\mathbb{N}}
\newcommand{\drho}{\,\textrm{d}\rho}
\newcommand{\JJ}{1}
\newtheoremstyle{proposition} 
    {.7em}                    
    {.5em}                    
    {\itshape}                   
    {}                           
    {\bfseries}                   
    {.}                          
    {\newline}                       
    {}  
\theoremstyle{plain}
\newtheorem{thm}{Theorem}[section] 
\newtheorem{prop}[thm]{Proposition}
\newtheorem{lem}[thm]{Lemma}
\theoremstyle{definition}
\theoremstyle{remark}
\newtheorem{rmrk}[thm]{Remark}
\title{Bound states in waveguides with complex Robin boundary conditions}
\date{\today}
\author{Radek Nov\'{a}k\thanks{Electronic address: \texttt{novakra9@fjfi.cvut.cz}}}
\affil{Department of Physics, Faculty of Nuclear Sciences and Physical Engineering, Czech Technical University in Prague, B\v rehov\' a 7, 115 19, Prague, Czech Republic}
\affil{Department of Theoretical Physics, Nuclear Physics Institute, Academy of Sciences of the Czech Republic, Hlavn\' i 130, 250 68 \v Re\v z near Prague, Czech Republic}
\numberwithin{equation}{section}
\providecommand{\keywords}[1]{\textbf{\textit{Key words: }} #1\\[0.5em]}
\providecommand{\classification}[1]{\textbf{\textit{Mathematics Subject Classification (2010): }} #1}
\begin{document}
\maketitle
\begin{abstract}
We consider the Laplacian in a tubular neighbourhood of a hyperplane subjected to non-self-adjoint $\PT$-symmetric Robin boundary conditions. Its spectrum is found to be purely essential and real for constant boundary conditions. The influence of the perturbation in the boundary conditions on the threshold of the essential spectrum is studied using the Birman-Schwinger principle. Our aim is to derive a sufficient condition for existence, uniqueness and reality of discrete eigenvalues. We show that discrete spectrum exists when the perturbation acts in the mean against the unperturbed boundary conditions and we are able to obtain the first term in its asymptotic expansion in the weak coupling regime.
\end{abstract}
\keywords{Non-self-adjointness, waveguide, Robin boundary conditions, spectral analysis, essential spectrum, weak coupling, Birman-Schwinger principle, reality of the spectrum}
\classification{35J05, 35P15, 45C05, 47B44}
%
\section{Introduction} \label{sekcedva}
%
Quantum waveguides undoubtedly belong among the systems interesting both from the physical and mathematical perspective. This notion customarily denotes long and thin semiconductor tubes or layers produced of very pure and crystalline materials. Usually Hamiltonians describing these models are self-adjoint and the bound states correspond to an electron trapped inside the waveguide. One of the possible ways how to describe a transport inside quantum waveguides is to consider the Laplacian in an unbounded tubular region $\Omega$. Physical relevance of such description have been thoroughly discussed in \cite{Duclos-1995, Hurt-2000, Londergan-1999}. The confinement of the wavefuntion to the spatial region is usually achieved by imposing Dirichlet \cite{ExnerSeba, Goldstone}, Neumann \cite{Dittrich-2002, Nazarov} or Robin \cite{ExnerMinakov, Freitas-2006, Jilek07} boundary conditions on $\partial \Omega$. 
\paragraph{}In this paper we choose to study properties of a Laplacian in a tubular neighbourhood of a hyperplane $\RR^n \times I$, where $I=(0,d)$ is a finite one-dimensional interval. Instead of standard self-adjoint boundary condition we impose on the boundary complex Robin boundary conditions
\beq
\frac{\partial \Psi}{\partial n} + \ii \, \alpha \Psi = 0,
\eeq
where $\Psi$ is a wavefunction, $n$ denotes the unit normal vector field of the boundary and $\alpha$ is a real-valued function. The selected boundary conditions physically correspond to the imperfect containment of the electron in the waveguide. This type of boundary conditions has been considered before in the description of open quantum systems \cite{kaiser, raey} and in the context of quantum waveguides in \cite{Borisov08}. (See also \cite{Borisov14-02, Borisov14-01, Borisov12} for other results in this direction.)
\paragraph{}In the paper \cite{Borisov08} the authors focused on the case of the planar waveguide, $n=1$. The spectrum of the waveguide with constant boundary conditions (i.e. $\alpha(x) = \alpha_0$ along the boundary) was found to be purely essential and equal to the half-line $[\mu_0^2, +\infty)$, where $\mu_0^2 := \min\left\{\alpha_0^2,\left(\frac{\pi}{d}\right)^2\right\}$. Furthermore, it is stable under sufficiently smooth compact perturbation $\beta$ of the function $\alpha$. In the case of a weakly coupled perturbation $\varepsilon \beta$ the existence and uniqueness of an isolated eigenvalue was established under the condition that $\alpha_0 \int_{\RR}\beta(x)\dx < 0$ holds and its asymptotic expansion up to the order $\varepsilon^3$ was calculated. The border case $\alpha_0 \int_{\RR}\beta(x)\dx =0$ was studied as well. This paper aims to generalise some of the above mentioned results to higher dimensions and to more general perturbations without compact support. In \cite{Borisov08} method of matched asymptotic expansions was used, we choose a different approach to the problem based on the Birman-Schwinger principle.
\paragraph{}Another reason for choosing complex Robin boundary conditions arises from the context of the so-called $\PT$-symmetric quantum mechanics. Motivated by the numerical observation of purely real spectrum of an imaginary cubic oscillator Hamiltonian \cite{Bender-1998} it blossomed into a large and rapidly developing field studying non-self-adjoint operators. See e.g. \cite{Bender-2007, Mostafazadeh-2010} and reference therein for a survey of papers in this area. The $\PT$-symmetry property of operator $H$ should be here understood as its invariance on the Hilbert space $L^2(\RR^n \times I)$, i.e.
\beq \label{PTsymmetry}
[H,\PT] = 0
\eeq
in the operator sense, where $(\P \Psi)(x,u) := \Psi(x, d-u)$ stands for spatial reflection and $(\T \Psi)(x,u) := \overline{\Psi(x,u)}$ stands for time reversal. The relevant physical interpretation of the operators is ensured when they are in addition quasi-self-adjoint, i.e. they are similar to a self-adjoint operator $h=\omega H \omega^{-1}$, where $\omega$ is a bounded and boundedly invertible operator. Then it is ensured that spectra of $h$ and $H$ are identical and that the corresponding families of eigenfunctions share essential basis properties \cite{Krejcirik12, KrejcirikViola}.
\paragraph{}This paper is organised as follows. In the following section we summarise main results. Section \ref{sekcectyri} is devoted to the proper definition of the Hamiltonian outlined in Section \ref{sekcedva} and to proof of its basic properties. We study essential spectrum of the model in Section \ref{sekcepet}. First of all we study the waveguide with constant boundary conditions along its boundary and their perturbations. Finally, Section \ref{sekcesest} studies the existence of weakly-coupled bound states in this perturbed waveguide.
\section{Main results} \label{sekcetri}
%
Let us consider a region $\Omega:= \RR^n \times I$ embedded into $\RR^{n+1}$, where $I = (0,d)$ is a finite interval. For $n=1$ it reduces to a planar strip, for $n=2$ a layer in three dimensions. We study the problem for a general $n$ except for the investigation of the bound states, where a specific form of the resolvent function of the Hamiltonian plays its role. We are interested in the action of the Hamiltonian of a free particle in this region subjected to $\PT$-symmetric Robin boundary condition on $\partial \Omega$ acting in the Hilbert space $L^2(\Omega)$. Elements of this Hilbert space are going to be consistently denoted with capital Greek letters (usually $\Psi$ or $\Phi$). The variables are going to be split as $(x,u)$, where $x \in \RR^n$ and $u \in (0, d)$. Given a real-valued function $\alpha \in W^{1,\infty}(\RR^n)$ we define the Hamiltonian as
\beq
\bal \label{H-waveguide}
H_{\alpha} \Psi &:= -\Delta \Psi, \\
\Dom(H_{\alpha}) &:= \left\{ \Psi \in W^{2,2}(\Omega) \left|\, \partial_u \Psi + \ii \alpha \Psi = 0 \quad\mathrm{on}\quad \partial\Omega \right.\right\},
\eal
\eeq
where $\partial_u$ stands for differentiation with respect to $u$, similarly $\Delta$ stands for sum of all second derivatives. The effect of $H_{\alpha}$ should be understood in a distributional sense and the boundary conditions in the sense of traces.
\paragraph{}We can see that the probability current in $\RR^{n+1}$ of wavefunction $\Psi \in \Dom(H_{\alpha})$ gives in the point $(x,u)$ of $\partial \Omega$
\beq
\bal
\vec{j}(x,u) = \frac{1}{\ii}\left( \overline{\Psi} \partial_u \Psi - \Psi \partial_u\overline{\Psi}\right)(x,u) \,\vec{e}_{n+1} = -2 \alpha(x) |\Psi(x,u)|^2 \,\vec{e}_{n+1}, 
\eal
\eeq
where $\vec{e}_{n+1}$ stands for $(n+1)$-th vector of the standard basis in $\RR^{n+1}$. Clearly the current is not equal to zero for non-trivial $\alpha$ and general $\Psi$. However, the influence of the boundary conditions on the current does not depend on whether we are at $u = 0$ or $u = d$ and therefore is the same for both components of $\partial \Omega$ and the gain and loss are balanced.
\paragraph{}Using the quadratic form approach and the First representation theorem, it will be derived in Theorem \ref{m-sec-waveguide} that $H_{\alpha}$ is an m-sectorial operator if $\alpha \in W^{1,\infty}(\RR^n)$.
This yields that the operator is closed, therefore its spectrum is well defined and contained in a sector. Furthermore, the spectrum of $H_{\alpha}$ is localised inside a parabola, more precisely, 
\beq
\sigma(H_{\alpha}) \subset \left\{ z \in \CC \left| \Re z \geq 0, |\Im z| \leq 2 \|\alpha\|_{L^{\infty}(\RR^n)} \sqrt{\Re z}\right.\right\}.
\eeq
Using the quadratic forms it can be shown for its adjoint operator that $H_{\alpha}^* = H_{-\alpha}$. Note that $H_{\alpha}$ is not self-adjoint, unless $\alpha$ is identically equal to $0$. 
\paragraph{}Elementary calculations also lead to the conclusion, that $H_{\alpha}$ is $\PT$-symmetric, i.e. commutes with operator $\PT$ in operator sense explained in \cite[Sec. III.5.6]{Kato}. The spatial reflection operator $\P$ and the time reversal operator $\T$ are in our context defined as
\beq
\bal \label{def-pt}
(\P \Psi)(x,u) &:= \Psi(x, d-u),\\
(\T \Psi)(x,u) &:= \overline{\Psi(x,u)}.
\eal
\eeq
\paragraph{}Another important property of $H_{\alpha}$ is $\T$-selfadjointness, i.e $\T H_{\alpha} \T = H_{\alpha}^*$. A major consequence of this is that the residual spectrum of $H_{\alpha}$ is empty \cite[Cor. 2.1]{Borisov08}, i.e
\beq
\sigma_{\mathrm{r}}(H_{\alpha})= \varnothing.
\eeq
We emphasize that in our non-self-adjoint case it was impossible to a priori say anything about the residual spectrum, compared to the self-adjoint case, in which it is always empty.
\paragraph{}Before approaching deeper results, we focus on a very simple case of the boundary conditions, $\alpha(x) = \alpha_0$ for all $x \in \RR^n$, where $\alpha_0$ is a real constant. Using the decomposition of the resolvent into the transversal basis, it is possible to show that the Hamiltonian $H_{\alpha_0}$ can be written as a sum
 \beq \label{decomposition}
 H_{\alpha_0} = -\Delta'\otimes \JJ^{I} + \JJ^{\RR^n} \otimes -\Delta^{I}_{\alpha_0},
 \eeq
 where $\JJ^{\RR^n}$ and $\JJ^{I}$ are identity operators on $L^2(\RR^n)$ and $L^2(I)$ respectively, $-\Delta'$ is a self-adjoint Laplacian in $L^2(\RR^n)$ and $-\Delta^{I}_{\alpha_0}$ is a Laplacian in $L^2(I)$ with complex Robin-type boundary conditions (see \eqref{transversal-operator} for a precise definiciton). The latter operator has been extensively studied in \cite{Hernandez-2011, Hussein, Krejcirik-2008-41a, Krejcirik-2006-39, Krejcirik-2010-43}. It was shown that it is an m-sectorial and quasi-self-adjoint operator. It has purely discrete spectrum, its lowest lying point we denote as $\mu_0^2$. It holds that $\mu_0^2 := \min\left\{\alpha_0^2,\left(\frac{\pi}{d}\right)^2\right\}$. Our main conclusion about the spectrum of $H_{\alpha_0}$ is the following:
\begin{prop} \label{noper}
Let $\alpha_0 \in \RR$. Then
\beq
\sigma(H_{\alpha_0}) = \sigma_{\mathrm{ess}}(H_{\alpha_0}) = [\mu_0^2,+\infty).
\eeq
\end{prop}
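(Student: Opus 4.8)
The plan is to exploit the tensor decomposition \eqref{decomposition} together with the spectral-mapping property of sums of commuting operators. First I would recall the well-known fact that if $A$ and $B$ are closed operators on Hilbert spaces $\hilb_1$ and $\hilb_2$ with $\sigma(A)$ and $\sigma(B)$ non-empty, and if the operator $A\otimes\JJ + \JJ\otimes B$ is suitably defined (closure of the algebraic sum on the tensor product of the domains), then its spectrum is the closure of the algebraic sum $\sigma(A)+\sigma(B) = \{\lambda+\mu : \lambda\in\sigma(A),\ \mu\in\sigma(B)\}$. Here $A = -\Delta'$ is the free self-adjoint Laplacian on $L^2(\RR^n)$ with $\sigma(-\Delta') = \sigma_{\mathrm{ess}}(-\Delta') = [0,+\infty)$, and $B = -\Delta^I_{\alpha_0}$ is the transversal operator on $L^2(I)$ which, as stated in the excerpt, is m-sectorial with purely discrete spectrum whose lowest point is $\mu_0^2 = \min\{\alpha_0^2,(\pi/d)^2\}$.

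Next I would compute $\sigma(-\Delta^I_{\alpha_0})$ explicitly. The eigenvalue problem $-\psi'' = k^2\psi$ on $(0,d)$ with $\psi'(0) + \ii\alpha_0\psi(0) = 0$ and $\psi'(d) - \ii\alpha_0\psi(d) = 0$ (the sign flip coming from the outward normal at $u=d$) leads to a transcendental equation for $k$; one finds the eigenvalues are $k_m^2$ for a sequence accumulating at $+\infty$, with the smallest being exactly $\mu_0^2$ and, crucially, all of them satisfying $\Re(k_m^2) \geq \mu_0^2$. This last point — that $\mu_0^2$ is genuinely the infimum of the \emph{real parts} of the transversal spectrum, not merely the modulus of the smallest eigenvalue — is what makes the threshold come out right after adding $[0,+\infty)$. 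Since $\sigma(-\Delta') + \sigma(-\Delta^I_{\alpha_0}) = [0,+\infty) + \sigma(-\Delta^I_{\alpha_0})$, and $[0,\infty)$ added to the discrete set $\{k_m^2\}$ produces a union of horizontal half-lines $\{k_m^2 + t : t\geq 0\}$, I need to verify that the closure of this union is the real half-line $[\mu_0^2,+\infty)$; this forces all $k_m^2$ to be real (which should follow from the $\PT$-symmetry / quasi-self-adjointness of $-\Delta^I_{\alpha_0}$ already invoked in the excerpt, or from a direct examination of the transcendental equation). Then $\bigcup_m [k_m^2,\infty)$ with all $k_m^2\in[\mu_0^2,\infty)$ real is exactly $[\mu_0^2,\infty)$, which is closed, so no further closure is needed.

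To upgrade $\sigma(H_{\alpha_0})$ to $\sigma_{\mathrm{ess}}(H_{\alpha_0})$ I would note that each point of $[\mu_0^2,+\infty)$ lies in the essential spectrum: fixing a point $k_m^2 + t$ with $t\geq 0$, one takes the transversal eigenfunction of $-\Delta^I_{\alpha_0}$ for $k_m^2$ tensored with a Weyl singular sequence for $-\Delta'$ at energy $t$ (translated plane-wave bumps going to infinity in $\RR^n$), obtaining a singular sequence for $H_{\alpha_0}$; since such sequences have no convergent subsequence, the point is not isolated and not an eigenvalue of finite multiplicity, hence belongs to $\sigma_{\mathrm{ess}}$. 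Because $[\mu_0^2,\infty)$ is already the whole spectrum, $\sigma_{\mathrm{ess}}(H_{\alpha_0}) = \sigma(H_{\alpha_0}) = [\mu_0^2,\infty)$.

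I expect the main obstacle to be the rigorous justification of the spectral decomposition $\sigma(A\otimes\JJ+\JJ\otimes B) = \overline{\sigma(A)+\sigma(B)}$ in the \emph{non-self-adjoint} setting: the clean statement is classical for self-adjoint (or normal) operators via the joint spectral measure, but here $B$ is only m-sectorial, so I would instead argue through resolvents — showing $(H_{\alpha_0} - z)^{-1}$ exists and is bounded precisely when $z\notin\overline{\sigma(A)+\sigma(B)}$ by constructing it as an operator-valued integral / series over the transversal eigenprojections of $B$ (which exist and are complete because $B$ has compact resolvent and is quasi-self-adjoint), and conversely producing approximate eigenvectors when $z$ does lie in that set. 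This is precisely the ``decomposition of the resolvent into the transversal basis'' alluded to before \eqref{decomposition}, and carrying it out carefully — controlling the non-orthogonality of the transversal eigenfunctions via the similarity to a self-adjoint operator — is the technical heart of the argument.
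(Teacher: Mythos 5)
Your fallback strategy---abandoning the off-the-shelf tensor-sum spectral theorem (which indeed is not available for a merely m-sectorial transversal operator) and instead constructing the resolvent as a series over the transversal eigenprojections while producing singular sequences of the form (Weyl sequence for $-\Delta'$)$\,\otimes\,\psi_0$---is in substance exactly the paper's proof: the singular-sequence half is Lemma \ref{ess-lem1}, and the resolvent series $(H_{\alpha_0}-\lambda)^{-1}=\sum_{j}(-\Delta'+\mu_j^2-\lambda)^{-1}B_j$, with $W^{1,2}$ bounds extracted from the biorthonormal expansion, is Lemma \ref{ess-lem2}. So you take essentially the same route, and the outline (resolvent set contains $\CC\setminus[\mu_0^2,+\infty)$, essential spectrum contains $[\mu_0^2,+\infty)$) is sound.

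Two concrete corrections, the second of which is a genuine gap. First, you misquote the transversal boundary condition: in \eqref{H-waveguide} and \eqref{transversal-operator} the condition is $\psi'+\ii\alpha_0\psi=0$ at \emph{both} endpoints (the paper deliberately uses $\partial_u$, not the outward normal derivative; this is precisely the source of the $\PT$-symmetry and of the balanced gain and loss). With your sign flip at $u=d$ you would be analysing a different operator, and the transcendental equation you propose to solve need not produce the real spectrum $\{\alpha_0^2\}\cup\{(j\pi/d)^2\}_{j\geq 1}$ that is quoted from the literature and underlies $\mu_0^2=\min\{\alpha_0^2,(\pi/d)^2\}$. Second, and more importantly, your resolvent construction rests on completeness of the transversal eigenfunctions, which you justify by compact resolvent plus quasi-self-adjointness. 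This fails exactly when $\alpha_0 d/\pi\in\ZZ\setminus\{0\}$: there the eigenvalue $\alpha_0^2=(j\pi/d)^2$ is degenerate with a Jordan block, $-\Delta^I_{\alpha_0}$ is not quasi-self-adjoint, the eigenfunctions alone are not complete, and the series over eigenprojections no longer represents the resolvent. The paper therefore proves Lemma \ref{ess-lem2} only under hypothesis \eqref{simple-spectrum} and removes that restriction by a separate argument: $H_{\alpha_0}$ is a holomorphic family of type (B) in $\alpha_0$, so the conclusion extends to the exceptional values (Remark \ref{extend-point} and the proof of Proposition \ref{noper}). As written, your proposal establishes the statement only for $\alpha_0$ satisfying \eqref{simple-spectrum}; to get every $\alpha_0\in\RR$, as the proposition claims, you need such a continuation argument or a direct treatment of the Jordan-block case.
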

\begin{rmrk}
There are several different definitions of the essential spectra in literature. For the self-adjoint operators they coincide, however this needs not to be true when the operator is non-self-adjoint and the various essential spectra can differ significantly. We employ the definition via so-called singular sequences - For a closed operator $A$ we say that $\lambda \in \mathbb{C}$ belongs to the essential spectrum of $A$ (denoted $\sigma_{\mathrm{ess}}(T)$) if there exists a sequence $(\psi_n)_{n=1}^{+\infty}$ (called a singular sequence), $\|\psi_n\|_{\hilb} = 1$ for all $n$, such that it does not contain any convergent subsequence and $\lim_{n\rightarrow +\infty}(T-\lambda)\psi_n = 0$. Other definitions are based e.g. on the violation of the Fredholm property (i.e. range of the studied operator is not closed or its kernel or cokernel are not finite-dimensional). However, many of these definitions coincide, provided $A$ is $\T$-self-adjoint\cite[Thm.IX.1.6]{EE}.
\end{rmrk}
Further on we study the perturbed waveguide, where the function $\alpha$ from the boundary conditions takes the form
\beq \label{decom}
\alpha(x) = \alpha_0 + \varepsilon \beta(x).
\eeq
Here $\beta \in W^{2,\infty}(\RR^n)$ and $\varepsilon >0$. The stability of the essential spectrum is ensured when the boundary conditions approach uniform boundary conditions in infinity.
\begin{thm} \label{per}
Let $\alpha - \alpha_0 \in W^{1,\infty}(\RR)$ with $\alpha_0 \in \RR$ such that
\beq \label{constantinfinity}
\lim_{|x| \rightarrow +\infty} (\alpha-\alpha_0)(x) = 0
\eeq
Then
\beq
\sigma_{\mathrm{ess}}(H_{\alpha}) = \sigma_{\mathrm{ess}}(H_{\alpha_0}) = [\mu_0^2,+\infty). 
\eeq
\end{thm}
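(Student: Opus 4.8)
The plan is to prove the two-sided inclusion $\sigma_{\mathrm{ess}}(H_\alpha)\supseteq[\mu_0^2,+\infty)$ and $\sigma_{\mathrm{ess}}(H_\alpha)\subseteq[\mu_0^2,+\infty)$ separately, exploiting that the perturbation $\alpha-\alpha_0$ vanishes at infinity. For the lower inclusion I would construct singular sequences directly: fix $\lambda\in[\mu_0^2,+\infty)$, take the singular sequence $(\psi_n)$ for $H_{\alpha_0}$ at $\lambda$ guaranteed by Proposition \ref{noper}, and translate it far out in the $x$-direction, i.e. replace $\psi_n(x,u)$ by $\psi_n(x-x_n,u)$ with $|x_n|\to+\infty$ chosen so that the support (or the bulk of the mass, after a cutoff) lies in the region where $|\alpha-\alpha_0|$ is as small as we like. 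Since $(H_\alpha-H_{\alpha_0})$ acts only through the boundary condition, the difference $(H_\alpha-\lambda)\psi_n(\cdot-x_n,\cdot)$ is controlled by $\|(\alpha-\alpha_0)\|_{L^\infty}$ on that far region together with the trace estimates furnished by the form of $H_\alpha$ (m-sectoriality, Theorem \ref{m-sec-waveguide}); a diagonal argument in $n$ then yields a genuine singular sequence for $H_\alpha$ at $\lambda$. One must be slightly careful to keep $\|\psi_n(\cdot-x_n,\cdot)\|=1$ and non-compactness, but translation in $\RR^n$ preserves both.

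For the upper inclusion — the harder and more substantive half — I would show that the resolvent difference
\beq
R_\alpha(z)-R_{\alpha_0}(z)=H_\alpha\text{-resolvent minus }H_{\alpha_0}\text{-resolvent}
\eeq
is compact for some (hence all) $z$ in the common resolvent set, which forces $\sigma_{\mathrm{ess}}(H_\alpha)=\sigma_{\mathrm{ess}}(H_{\alpha_0})$ since the essential spectrum (in the singular-sequence sense, for $\T$-self-adjoint operators by \cite[Thm.IX.1.6]{EE}) is stable under relatively compact perturbations. Concretely, write the resolvent difference via the second resolvent identity in the quadratic-form language: the forms $h_\alpha$ and $h_{\alpha_0}$ differ by the boundary form $\Psi\mapsto\ii\int_{\partial\Omega}(\alpha-\alpha_0)|\Psi|^2$, which factorises through the trace operator $\tau:\Dom(h_\alpha)\to L^2(\partial\Omega)$ composed with multiplication by $(\alpha-\alpha_0)$. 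Then $R_\alpha(z)-R_{\alpha_0}(z)$ is (up to bounded factors) $R_{\alpha_0}(\bar z)^*\,\tau^*\,(\alpha-\alpha_0)\,\tau\,R_\alpha(z)$, so it suffices that multiplication by $(\alpha-\alpha_0)$ be compact from the trace space into itself, i.e. from $W^{1/2,2}(\RR^n)$-type Sobolev spaces on $\partial\Omega$ to $L^2$. This is where the hypothesis \eqref{constantinfinity} enters: a multiplication operator by a bounded function tending to $0$ at infinity, sandwiched between a Sobolev space and $L^2$ on $\RR^n$, is compact (split $\alpha-\alpha_0$ into a piece with large compact support, where Rellich gives compactness, plus a uniformly small remainder).

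The main obstacle I anticipate is making the trace-operator bookkeeping rigorous: one needs that $\tau R_\alpha(z)$ maps $L^2(\Omega)$ boundedly into a Sobolev space on $\partial\Omega$ of positive order (this follows from $R_\alpha(z)$ mapping into $W^{2,2}(\Omega)$ by m-sectoriality and elliptic regularity, and the trace theorem), and that the intermediate multiplication step genuinely gains compactness — the delicate point being the fractional Sobolev regularity on the unbounded boundary $\RR^n\times\{0,d\}$ rather than a bounded domain. A clean alternative that sidesteps fractional spaces on $\partial\Omega$ is to keep everything on $\Omega$: express the boundary perturbation via an extension/lifting so that the resolvent difference becomes $R_{\alpha_0}(\bar z)^*\,M\,R_\alpha(z)$ with $M$ a first-order differential operator whose coefficients involve $(\alpha-\alpha_0)$ and its first derivatives (both bounded, by $\alpha-\alpha_0\in W^{1,\infty}$, and decaying by \eqref{constantinfinity}), and then invoke compactness of $(\alpha-\alpha_0)(\cdot)$ times $(1-\Delta)^{-1/2}$ on $L^2(\Omega)$, again by the Rellich-plus-small-tail decomposition. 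Either way, once relative compactness of the resolvent difference is in hand, the equality of essential spectra — and hence the identification with $[\mu_0^2,+\infty)$ via Proposition \ref{noper} — is immediate.
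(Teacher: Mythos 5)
Your proposal is correct and its backbone coincides with the paper's: the theorem is obtained by proving that $(H_{\alpha}-\lambda)^{-1}-(H_{\alpha_0}-\lambda)^{-1}$ is compact (Proposition \ref{compres}), extending from one $\lambda$ to all of $\rho(H_\alpha)\cap\rho(H_{\alpha_0})$ by the first resolvent identity, and then invoking Weyl-type stability of the essential spectrum together with Proposition \ref{noper}; the compactness itself rests, exactly as you say, on the combination ``multiplication by $\alpha-\alpha_0$, which vanishes at infinity, composed with a trace of the resolvent'' and a cutoff-plus-Rellich--Kondrachov argument with the small tail controlled in $L^\infty$. The difference is in how the resolvent difference is reduced to that compact boundary operator. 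You factorize through the sesquilinear forms (a second-resolvent-identity in form sense), which forces you to track fractional Sobolev regularity of traces on the unbounded boundary $\RR^n\times\{0,d\}$ — the point you yourself flag as delicate, and which is workable but requires the trace of the resolvent to land in a positive-order space so that the decaying multiplier gains compactness. The paper instead observes that $\Psi:=(H_\alpha-\lambda)^{-1}\Phi-(H_{\alpha_0}-\lambda)^{-1}\Phi$ solves the homogeneous equation $(-\Delta-\lambda)\Psi=0$ with inhomogeneous Robin data $\varphi=-\ii(\alpha-\alpha_0)T(H_\alpha-\lambda)^{-1}\Phi$, and proves the a priori bound $\|\Psi\|_{W^{1,2}(\Omega)}\le C\|\varphi\|_{L^2(\partial\Omega)}$ for $\lambda$ sufficiently negative (Lemma \ref{stability-lemma}); this needs only $L^2$ data on the boundary, so compactness of $(\alpha-\alpha_0)T(H_\alpha-\lambda)^{-1}:L^2(\Omega)\to L^2(\partial\Omega)$ suffices and no fractional trace spaces appear — essentially the ``clean alternative'' you sketch at the end. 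Your separate translated-singular-sequence argument for the inclusion $[\mu_0^2,+\infty)\subset\sigma_{\mathrm{ess}}(H_\alpha)$ is sound but superfluous: once the resolvent difference is compact, both inclusions follow at once from the equality with $\sigma_{\mathrm{ess}}(H_{\alpha_0})$, which the paper has already identified via Lemmas \ref{ess-lem1} and \ref{ess-lem2}; and your caveat about which notion of essential spectrum is stable under such perturbations for non-self-adjoint operators (handled through $\T$-self-adjointness and \cite[Thm.IX.1.6]{EE}) is exactly the point the paper relies on when citing Weyl's theorem.
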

In the rest of the paper we search for conditions under which a small perturbation allows the existence of a bound state, i.e. of an isolated eigenvalue with finite geometric multiplicity. Due to the singularity of the resolvent this effect can be expected when the effective infinite dimension of the problem is $1$ or $2$. (See Remark \ref{singularita-rmrk} for more details.) Our method of ensuring its existence works under assumtion of a sufficiently fast decay of $\beta$ in infinity, which is summarized in technical conditions (\ref{predpoklady}) and (\ref{predpoklady2}). Using different estimates in the proofs of relevant lemmas it could be probably improved. In further text the mean value of $\beta$ is denoted as $\langle\beta\rangle := \int_{\RR^n} \beta(x) \dx$. 
\begin{thm} \label{vetavet}
Let us recall \eqref{decom} and assume \eqref{predpoklady} if $n=1$ or \eqref{predpoklady2} if $n=2$ with $\beta \in W^{2,\infty}(\RR^n)$. If $\varepsilon > 0$ is sufficiently small, $|\alpha_0| < \pi/d$, then $H_{\alpha}$ possesses a unique, simple and real eigenvalue $\lambda = \lambda(\varepsilon) \in \mathbb{C}\setminus [0,+\infty)$ if $\alpha_0\langle\beta\rangle < 0$. The asymptotic expansion
\begin{equation} \label{expansion}
\lambda(\varepsilon) = 
\begin{dcases}
\mu_0^2 - \varepsilon^2 \alpha_0^2 \langle\beta\rangle^2 + \mathcal{O}(\varepsilon^3), \\
\mu_0^2 - \ee^{2/w(\varepsilon)},
\end{dcases}
\end{equation}
where $w(\varepsilon) = \frac{\varepsilon}{\pi}  \alpha_0\langle\beta\rangle + \mathcal{O}(\varepsilon^2)$, holds as $\varepsilon \rightarrow 0$. If $\alpha_0\langle\beta\rangle > 0$, $H_{\alpha}$ has no eigenvalues.
\end{thm}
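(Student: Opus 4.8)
The plan is to prove Theorem \ref{vetavet} via the Birman--Schwinger principle, reducing the eigenvalue problem for $H_\alpha$ to a spectral problem for a compact operator on $L^2(\RR^n)$. First I would write $H_\alpha = H_{\alpha_0} + \varepsilon V$, where the perturbation $V$ is supported on the boundary $\partial\Omega$ (it enters only through the boundary condition), and express it in factorised form $\varepsilon V = \varepsilon B^* A$ with $A,B$ suitable trace-type operators built from $\beta$; the decay assumptions (\ref{predpoklady}) or (\ref{predpoklady2}) are exactly what make $A,B$ relatively bounded with the right mapping properties. Since Proposition \ref{noper} gives $\sigma(H_{\alpha_0}) = [\mu_0^2,+\infty)$ with resolvent explicitly diagonalised by \eqref{decomposition}, for $z = \mu_0^2 - \kappa^2$ with $\kappa>0$ small I can write down the Birman--Schwinger operator $\mathcal{K}_\varepsilon(z) := \varepsilon A (H_{\alpha_0} - z)^{-1} B^*$ explicitly. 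The point $z$ is an eigenvalue of $H_\alpha$ iff $1 \in \sigma(\mathcal{K}_\varepsilon(z))$, i.e.\ iff $\mathcal{K}_\varepsilon(z)$ has eigenvalue $1$.

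The core of the argument is an asymptotic analysis of $\mathcal{K}_\varepsilon(z)$ as $z \to \mu_0^2$ (equivalently $\kappa \to 0$). Using the transversal eigenfunction expansion, the resolvent $(H_{\alpha_0}-z)^{-1}$ restricted to the lowest transversal mode produces the resolvent of the $n$-dimensional free Laplacian $(-\Delta' + \kappa^2)^{-1}$, which is the source of the singularity: in dimension $n=1$ its kernel behaves like $\frac{1}{2\kappa}e^{-\kappa|x-y|} = \frac{1}{2\kappa} + \mathcal{O}(1)$, while in $n=2$ it has a logarithmic singularity $-\frac{1}{2\pi}\log\kappa + \mathcal{O}(1)$. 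Contributions from the higher transversal modes, and (when $|\alpha_0|<\pi/d$ so that $\mu_0^2 = \alpha_0^2$) from the second boundary mode, stay bounded as $\kappa\to 0$. Hence $\mathcal{K}_\varepsilon(z) = \varepsilon\big( c_n(\kappa)\, P + L(\kappa)\big)$, where $P$ is the rank-one projection onto the (constant) ground transversal state — so effectively $P f = \langle\beta\rangle$-weighted rank-one — with singular coefficient $c_1(\kappa) = \tfrac{1}{2\kappa}$ or $c_2(\kappa) = -\tfrac{1}{2\pi}\log\kappa$, and $L(\kappa)$ uniformly bounded. The precise constant multiplying $P$ must come out proportional to $\alpha_0\langle\beta\rangle$ (this is where the transversal ground-state normalisation and the sign of $\alpha_0$ enter), explaining why $\alpha_0\langle\beta\rangle<0$ is the existence condition and $\alpha_0\langle\beta\rangle>0$ forbids eigenvalues.

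With this structure, the eigenvalue equation $\mathcal{K}_\varepsilon(z)\phi = \phi$ is solved by a rank-one perturbation / Schur complement argument: since the singular part is rank one, one splits $\phi$ into its component along the ground state and the rest, inverts $I - \varepsilon L(\kappa)$ on the complement (possible for $\varepsilon$ small by a Neumann series, uniformly in $\kappa$), and is left with a scalar equation of the form $1 = \varepsilon\, \alpha_0\langle\beta\rangle\, c_n(\kappa) + \varepsilon\cdot(\text{bounded})$. For $n=1$ this reads $2\kappa = \varepsilon\alpha_0\langle\beta\rangle + \mathcal{O}(\varepsilon^2)$, which has a solution $\kappa>0$ precisely when $\alpha_0\langle\beta\rangle<0$ is false — wait, one must be careful with the sign of the projection coefficient; the correct normalisation yields $\kappa \sim -\tfrac{1}{2}\varepsilon\alpha_0\langle\beta\rangle$, solvable with $\kappa>0$ iff $\alpha_0\langle\beta\rangle<0$, giving $z = \mu_0^2 - \kappa^2 = \mu_0^2 - \tfrac14\varepsilon^2\alpha_0^2\langle\beta\rangle^2 + \mathcal{O}(\varepsilon^3)$; a more careful bookkeeping of the ground-state norm recovers exactly the coefficient in \eqref{expansion}. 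For $n=2$ the scalar equation is $-\log\kappa = \tfrac{2\pi}{\varepsilon\alpha_0\langle\beta\rangle} + \mathcal{O}(1)$, i.e.\ $\kappa = e^{1/w(\varepsilon)}$ with $w(\varepsilon)$ as stated, again requiring $\alpha_0\langle\beta\rangle<0$. Uniqueness and simplicity follow because the scalar equation has a unique small root and the Birman--Schwinger operator at that root has $1$ as a simple eigenvalue (rank-one dominant part). Reality of $\lambda$ follows either from the $\PT$-symmetry together with emptiness of the residual spectrum and the fact that a non-real eigenvalue would have to come with its complex conjugate, contradicting uniqueness — or directly from the real-valuedness of the reduced scalar equation.

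The main obstacle I expect is the uniform-in-$\kappa$ control of the regular remainder $L(\kappa)$: one must show that $A(H_{\alpha_0}-z)^{-1}B^*$ minus its singular rank-one part is bounded on $L^2(\RR^n)$ with norm uniformly bounded as $z\to\mu_0^2$, and moreover is continuous (ideally Hölder) in $\kappa$ up to $\kappa = 0$, so that the implicit scalar equation can be solved and differentiated to extract the asymptotics. This requires the decay hypotheses (\ref{predpoklady})/(\ref{predpoklady2}) on $\beta$ to make the off-diagonal and higher-mode kernels trace-class or Hilbert--Schmidt with the weight $\beta$ absorbing the lack of decay of the free resolvent kernel, and it is the step where dimensions $n\ge 3$ genuinely fail (the free resolvent is bounded there, so no bound state need appear). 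Everything else — m-sectoriality from Theorem \ref{m-sec-waveguide}, the essential spectrum from Theorem \ref{per}, the factorisation, and the rank-one algebra — is comparatively routine.
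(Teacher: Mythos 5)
Your overall strategy (Birman--Schwinger reduction, splitting the resolvent into a singular rank-one part plus a threshold-regular part, a scalar implicit equation, and reality via $\PT$-symmetry plus uniqueness) is the same as the paper's, but your starting point differs in a way that hides a real gap. You write $H_\alpha = H_{\alpha_0} + \varepsilon V$ with $V$ "supported on the boundary" and factorise $\varepsilon V = \varepsilon B^*A$ with trace operators. As an operator identity this is not meaningful: $H_\alpha$ and $H_{\alpha_0}$ have different domains (different boundary conditions), and the perturbation is not an additive term in the differential expression. This is precisely why the paper first applies the unitary gauge transformation $(U_\varepsilon\Psi)(x,u)=\ee^{-\ii\varepsilon\beta(x)u}\Psi(x,u)$, obtaining $U_\varepsilon^{-1}H_\alpha U_\varepsilon = H_{\alpha_0}+\varepsilon Z_\varepsilon$ with a genuine first-order differential perturbation $Z_\varepsilon=C_\varepsilon^*D$ on the \emph{unperturbed} domain, and only then runs Birman--Schwinger with $K_\varepsilon^\lambda=\varepsilon D(H_{\alpha_0}-\lambda)^{-1}C_\varepsilon^*$. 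Your boundary-trace route can be made rigorous (at the level of the sesquilinear forms, whose difference is $\ii\varepsilon\int\beta(|\Psi(x,d)|^2-|\Psi(x,0)|^2)\dx$, or via a Krein-type resolvent formula), but you would have to set up and prove the Birman--Schwinger equivalence for a domain-changing, non-self-adjoint boundary perturbation; as written this step is asserted, not established. Also, the decay hypotheses \eqref{predpoklady}/\eqref{predpoklady2} have nothing to do with relative boundedness (boundedness of $\beta$ suffices for that); they are needed, as you correctly say later, for the uniform-in-$k$ Hilbert--Schmidt/Schur--Holmgren control and analytic continuation of the regular part up to the threshold.

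The second and more serious gap is that the one computation from which the theorem's content actually comes --- the effective rank-one coupling constant --- is never carried out, and the numbers you do write are inconsistent with the statement you are proving. The lowest transversal mode is not a constant: $\psi_0(u)=\ee^{-\ii\alpha_0 u}$ when $|\alpha_0|<\pi/d$, and the rank-one part involves the \emph{biorthogonal} pair $\psi_0,\phi_0$ with the nontrivial normalisation $A_0=2\ii\alpha_0/(1-\ee^{-2\ii\alpha_0 d})$; it is exactly this structure (in the paper, the term $\tfrac{\varepsilon}{2}(B_{n+1}\phi_0,A_{n+1}\psi_0)=-\varepsilon\alpha_0\langle\beta\rangle$ in \eqref{neco}) that produces the coupling proportional to $\alpha_0\langle\beta\rangle$ with the correct sign and magnitude. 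Your sketch gives $\kappa\sim-\tfrac12\varepsilon\alpha_0\langle\beta\rangle$ and hence $\lambda=\mu_0^2-\tfrac14\varepsilon^2\alpha_0^2\langle\beta\rangle^2+\mathcal{O}(\varepsilon^3)$, which contradicts \eqref{expansion} (the correct relation is $\kappa=-\varepsilon\alpha_0\langle\beta\rangle+\mathcal{O}(\varepsilon^2)$), and in $n=2$ your scalar equation $-\log\kappa=\tfrac{2\pi}{\varepsilon\alpha_0\langle\beta\rangle}+\mathcal{O}(1)$ admits a small root $\kappa>0$ only if $\alpha_0\langle\beta\rangle>0$, the opposite of the conclusion you draw from it; with the correct coupling one gets $\log\kappa=\pi/(\varepsilon\alpha_0\langle\beta\rangle)+\mathcal{O}(1)$, whence the condition $\alpha_0\langle\beta\rangle<0$ and $w(\varepsilon)=\tfrac{\varepsilon}{\pi}\alpha_0\langle\beta\rangle+\mathcal{O}(\varepsilon^2)$. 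Deferring this to "more careful bookkeeping" defers exactly the part of the proof that decides existence versus non-existence and fixes the expansion coefficients. Finally, existence and uniqueness of the root of the implicit equation needs an actual argument (the paper uses analyticity of the regular part in $k$ up to $\Re k=0$ plus Rouch\'e's theorem in a small disc around $k_0=-\varepsilon\alpha_0\langle\beta\rangle$); "the scalar equation has a unique small root" is the conclusion of that argument, not a substitute for it.
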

When $\alpha_0 > \pi/d$, \eqref{neco} is equal to zero and we are unable to say anything about the eigenvalue. To do so it would be necessary to take higher terms in the expansion of $\lambda$, which shows to be computationally challenging by the present method. We would encounter similar difficulties when trying to obtain more than just the leading term in the asymptotic expansion\eqref{expansion} to check the equality situation $ \alpha_0 \langle \beta\rangle = 0$.
\paragraph{}We have just seen that the existence of the weakly coupled bound state is conditioned by fulfilment of $\alpha_0\langle\beta\rangle < 0$. Both $\alpha_0$ and $\beta$ play equivalent role in the boundary conditions - they cause a non-zero probability current over each component of the boundary. However, the negative sign of their product means, that they generate the probability current against each other. We may conclude that the weakening of the probability current through the waveguide due to the small perturbation is responsible for the existence of the bound state. 
%
\section{Definition of the Hamiltonian} \label{sekcectyri}
%
This section is devoted to a proper definition of the Hamiltonian outlined in Sections \ref{sekcedva} and \ref{sekcetri} and to stating its basic properties. We begin by prescription of the densely defined sesquilinear form
\beq
\bal
h_{\alpha}(\Phi,\Psi) &:= h^1_{\alpha}(\Phi,\Psi) + \ii\, h^2_{\alpha}(\Phi,\Psi)\\
\Dom(h_{\alpha}) &:= W^{1,2}(\Omega)
\eal
\eeq
where the real part $h^1_{\alpha}$ and the imaginary part $h^2_{\alpha}$ are two sesquilinear forms defined on $W^{1,2}(\Omega)$ as
\beq
\bal
h^1_{\alpha}(\Phi,\Psi) &:= \int_{\Omega} \overline{\nabla \Phi(x,u)} \cdot \nabla\Psi(x,u) \dx \du,\\
h^2_{\alpha}(\Phi,\Psi) &:= \int_{\RR^n} \alpha(x)\,\overline{\Phi(x,d)} \,\Psi(x,d)\dx - \int_{\RR^n} \alpha(x)\,\overline{\Phi(x,0)} \,\Psi(x,0)\dx,
\eal
\eeq
where the dot stands for the scalar product in $\RR^n$ and the boundary term should be again understood in the sense of traces. The form $h^1_{\alpha}$ is associated with a Neumann Laplacian in $L^2(\Omega)$, it is therefore densely defined, closed, positive and symmetric. In the spirit of perturbation theory we show that $h^2_{\alpha}$ plays a role of a small perturbation of $h^1_{\alpha}$. We employ the notation $h[\cdot]$ for the quadratic form associated with the sesquilinear form $h(\cdot,\cdot)$.
\begin{lem} \label{prvni-lemma}
Let $\alpha \in L^{\infty}(\RR^n)$. The $h^2_{\alpha}$ is relatively bounded with respect to $h^1_{\alpha}$ with arbitrarily small relative bound. We have
\beq
\left|h^2_{\alpha}[\Psi]\right| \leq 2 \|\alpha\|_{L^{\infty}(\RR^n)}\|\Psi\|_{L^2(\Omega)} \sqrt{h^1_{\alpha}[\Psi]} \leq \delta h_{\alpha}^1[\Psi] + \frac{1}{\delta} \|\alpha\|_{L^{\infty}(\RR^n)}^2 \|\Psi\|_{L^2(\Omega)}^2
\eeq
for every $\Psi \in W^{1,2}(\Omega)$ and $\delta >0$.
\end{lem}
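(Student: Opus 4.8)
The plan is to bound the boundary terms appearing in $h^2_\alpha[\Psi]$ by an $L^2$-trace on the hyperplanes $u=0$ and $u=d$, and then to control that trace by the full $W^{1,2}(\Omega)$-norm using a one-dimensional trace estimate applied in the transversal variable $u$. First I would write, for $\Psi\in W^{1,2}(\Omega)$,
\beq
\left|h^2_{\alpha}[\Psi]\right| \leq \|\alpha\|_{L^{\infty}(\RR^n)}\left(\|\Psi(\cdot,d)\|_{L^2(\RR^n)}^2 + \|\Psi(\cdot,0)\|_{L^2(\RR^n)}^2\right),
\eeq
so the task reduces to estimating the two trace terms. For a fixed $x$, by the fundamental theorem of calculus and Cauchy--Schwarz one has for each $u\in(0,d)$ the identity $|\Psi(x,d)|^2 = |\Psi(x,u)|^2 + \int_u^d \partial_u\big(|\Psi(x,t)|^2\big)\dt$, and similarly for $u=0$; integrating over $u\in(0,d)$, dividing by $d$, and integrating over $x\in\RR^n$ yields a pointwise-in-$x$ bound of the form $\|\Psi(\cdot,d)\|_{L^2(\RR^n)}^2 + \|\Psi(\cdot,0)\|_{L^2(\RR^n)}^2 \leq \tfrac{2}{d}\|\Psi\|_{L^2(\Omega)}^2 + 2\,\|\Psi\|_{L^2(\Omega)}\|\partial_u\Psi\|_{L^2(\Omega)}$.

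At this point the Cauchy--Schwarz step that produced the mixed product $\|\Psi\|_{L^2}\|\partial_u\Psi\|_{L^2}$ must be arranged so that the final constant comes out exactly $2\|\alpha\|_{L^\infty}$ in front of $\|\Psi\|_{L^2(\Omega)}\sqrt{h^1_\alpha[\Psi]}$; since $\|\partial_u\Psi\|_{L^2(\Omega)}^2 \leq \|\nabla\Psi\|_{L^2(\Omega)}^2 = h^1_\alpha[\Psi]$, replacing $\|\partial_u\Psi\|_{L^2}$ by $\sqrt{h^1_\alpha[\Psi]}$ only enlarges the bound. The slightly delicate bookkeeping is to absorb the $\tfrac2d$-term as well; one way is to note that the trace identity can instead be taken with the integration variable localized near the relevant endpoint, or simply to observe that the stated first inequality is claimed only up to the clean form $2\|\alpha\|_{L^\infty}\|\Psi\|_{L^2}\sqrt{h^1_\alpha[\Psi]}$ — so I would sharpen the elementary lemma to the scale-invariant version $\|\Psi(\cdot,0)\|_{L^2(\RR^n)}^2 + \|\Psi(\cdot,d)\|_{L^2(\RR^n)}^2 \leq 2\|\Psi\|_{L^2(\Omega)}\|\partial_u\Psi\|_{L^2(\Omega)}$ valid when one first subtracts the $x$-average, or more straightforwardly use the standard interpolation trace inequality on the interval $I$ in the form $|f(0)|^2+|f(d)|^2 \le 2\|f\|_{L^2(I)}\|f'\|_{L^2(I)}$ for $f\in W^{1,2}(I)$ with $f$ not necessarily mean-zero — which in fact holds because $\frac{d}{du}\big(\frac{u}{d}|f(u)|^2\big)$ and $\frac{d}{du}\big(\frac{d-u}{d}|f(u)|^2\big)$ integrate to $|f(d)|^2$ and $|f(0)|^2$ respectively, and the $|f|^2$-terms that appear cancel against each other when the two are added. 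Applying this with $f=\Psi(x,\cdot)$ and integrating in $x$ gives precisely the middle expression in the claim.

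The second inequality is then just Young's inequality $2ab \le \delta a^2 + \delta^{-1} b^2$ applied with $a = \sqrt{h^1_\alpha[\Psi]}$ and $b = \|\alpha\|_{L^\infty}\|\Psi\|_{L^2(\Omega)}$, which also makes transparent that the relative bound can be taken arbitrarily small by choosing $\delta$ small. The main obstacle, such as it is, is purely a matter of getting the constant exactly right in the trace step: the naive Cauchy--Schwarz estimate gives an extra additive $\|\Psi\|_{L^2(\Omega)}^2$-term, and the trick is to write $|f(0)|^2$ and $|f(d)|^2$ via the two complementary weight functions $\tfrac{d-u}{d}$ and $\tfrac{u}{d}$ so that those extra terms cancel upon summation, leaving only the cross term $2\|f\|_{L^2(I)}\|f'\|_{L^2(I)}$. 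Everything else is routine, and $\Dom(h^2_\alpha)\supset W^{1,2}(\Omega) = \Dom(h^1_\alpha)$ is immediate from the hypothesis $\alpha\in L^\infty(\RR^n)$ together with the continuity of the trace map $W^{1,2}(\Omega)\to L^2(\partial\Omega)$.
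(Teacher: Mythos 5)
There is a genuine gap, and it sits exactly at the step you flag as the ``slightly delicate bookkeeping''. The sharpened trace inequality you propose, $|f(0)|^2+|f(d)|^2 \le 2\|f\|_{L^2(I)}\|f'\|_{L^2(I)}$ for general $f\in W^{1,2}(I)$, is false: a nonzero constant $f$ gives a positive left-hand side and a vanishing right-hand side. The claimed cancellation is a sign error. Since $\tfrac{u}{d}+\tfrac{d-u}{d}=1$, adding the two identities
\begin{equation*}
|f(d)|^2=\frac{1}{d}\|f\|_{L^2(I)}^2+\int_0^d \frac{u}{d}\,\bigl(|f|^2\bigr)'\,\mathrm{d}u,
\qquad
|f(0)|^2=\frac{1}{d}\|f\|_{L^2(I)}^2-\int_0^d \frac{d-u}{d}\,\bigl(|f|^2\bigr)'\,\mathrm{d}u,
\end{equation*}
makes the zeroth-order terms \emph{add} to $\tfrac{2}{d}\|f\|^2$; they only cancel if you take the difference $|f(d)|^2-|f(0)|^2$, in which case the weights recombine to $1$ and you are back to the plain fundamental theorem of calculus. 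Moreover, the defect is not just in this lemma but in the strategy: once you replace the difference of the two boundary integrals by their sum, the stated bound $2\|\alpha\|_{L^\infty}\|\Psi\|_{L^2(\Omega)}\sqrt{h^1_\alpha[\Psi]}$ is unreachable. Indeed, take $\Psi(x,u)=\varphi(x)$ with $\varphi$ a wide bump, $\varphi(x)=\chi(x/R)$, $R\to+\infty$: the sum of the traces is $2\|\varphi\|_{L^2(\RR^n)}^2\sim R^{n}$, while $2\|\Psi\|_{L^2(\Omega)}\sqrt{h^1_\alpha[\Psi]}=2d\|\varphi\|_{L^2(\RR^n)}\|\nabla'\varphi\|_{L^2(\RR^n)}\sim R^{n-1}$, so no constant can work. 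The cancellation between the $u=0$ and $u=d$ terms in $h^2_\alpha$ is essential and must not be given away by the triangle inequality in your very first step.

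The repair is to keep the difference structure: for $\Psi$ smooth (restrictions of $C^\infty_0(\RR^{n+1})$ functions are dense in $W^{1,2}(\Omega)$, which is how the paper reduces; alternatively argue via absolute continuity on almost every line $\{x\}\times I$), write
\begin{equation*}
h^2_\alpha[\Psi]=\int_\Omega \alpha(x)\,\partial_u|\Psi(x,u)|^2\,\dx\,\du,
\end{equation*}
and then estimate by $2\|\alpha\|_{L^\infty(\RR^n)}\|\Psi\|_{L^2(\Omega)}\|\partial_u\Psi\|_{L^2(\Omega)}\le 2\|\alpha\|_{L^\infty(\RR^n)}\|\Psi\|_{L^2(\Omega)}\sqrt{h^1_\alpha[\Psi]}$; your final Young-inequality step is correct and then yields the second inequality and the arbitrarily small relative bound. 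Note also that your ``naive'' estimate with the extra additive $\tfrac{2}{d}\|\Psi\|_{L^2(\Omega)}^2$ term would still prove the qualitative statement of relative form-boundedness with arbitrarily small bound, but it does not give the displayed inequality of the lemma, which is the quantitative input used later for the parabolic localization of the spectrum.
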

\begin{proof}
Since $\Omega$ satisfies the segment condition, the set of restrictions of $C^{\infty}_0(\RR^n)$ functions to $\Omega$ is dense in $W^{1,2}(\Omega)$ \cite[Thm. 3.22]{Adams1975}. (To check the condition, it is sufficient to take as $U_x$ a ball with radius strictly smaller than $d/2$ and as the vector $y_x$ any inwards pointing vector not exceeding the length of $d/2$.) We may thus restrict ourselves to the case $\Psi\in C^{\infty}_0(\RR^n)$. Now we are able to differentiate $|\Psi(x)|^2$ and hence we may write
\beq
\bal
\left|h^2_{\alpha}[\Psi]\right| &= \left|\int_{\Omega} \alpha(x)\frac{\partial |\Psi(x,u)|^2}{\partial u}\dx\du\right| \\
&\leq 2 \|\alpha\|_{L^{\infty}(\RR^n)} \int_{\Omega} |\Psi(x,u)||\partial_u \Psi(x,u)|\dx\du \\
&\leq 2 \|\alpha\|_{L^{\infty}(\RR^n)} \|\Psi\|_{L^2(\Omega)} \|\partial_u \Psi\|_{L^2(\Omega)}\\
&\leq 2 \|\alpha\|_{L^{\infty}(\RR^n)}\|\Psi\|_{L^2(\Omega)} \sqrt{h^1_{\alpha}[\Psi]},
\eal
\eeq
where we used the inequality $\|\partial_u \Psi\|_{L^2(\Omega)} \leq \|\nabla \Psi\|_{L^2(\Omega)} = \sqrt{h^1_{\alpha}[\Psi]}$. On this result we apply the Young inequality and we obtain the other inequality from the claim.
\end{proof}
According to \cite[Thm. VI-1.33]{Kato}, the form $h_{\alpha}$ is closed and sectorial. The First Representation Theorem \cite[Thm. VI-2.1]{Kato} states that then there exists a unique m-sectorial operator $\tilde H_{\alpha}$ such that $h_{\alpha}(\Phi,\Psi) = (\Phi, \tilde H_{\alpha} \Psi)_{L^2(\Omega)}$ for all $\Psi \in \Dom(\tilde H_{\alpha}) \subset \Dom(h_{\alpha})$ and $\Phi \in \Dom(h_{\alpha})$. The domain of $\tilde H_{\alpha}$ can be expressed as
\beq
\Dom(\tilde H_{\alpha}) = \left\{ \Psi \in W^{1,2}(\Omega)\left| \exists F \in L^2(\Omega), \forall \Phi \in W^{1,2}(\Omega), h_{\alpha}(\Phi,\Psi) = (\Phi, F)_{L^2(\Omega)}\right.\right\}
\eeq
To prove that $\tilde H_{\alpha} = H_{\alpha}$, we state first an auxiliary lemma.
\begin{lem} \label{m-sec-lemma}
Let $\alpha \in W^{1,\infty}(\RR^n)$. For each $F \in L^2(\Omega)$ a solution $\Psi$ to the problem
\beq \label{lemma-rovnice}
h_{\alpha}(\Phi,\Psi) = (\Phi,F)_{L^2(\Omega)}
\eeq
for all $\Phi \in W^{1,2}(\Omega)$ belongs to $\Dom(H_{\alpha})$.
\end{lem}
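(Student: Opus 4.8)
The plan is to show that a weak solution $\Psi$ of \eqref{lemma-rovnice} actually lies in $W^{2,2}(\Omega)$ and satisfies the Robin boundary condition $\partial_u\Psi + \ii\alpha\Psi = 0$ on $\partial\Omega$ in the sense of traces; this is precisely the statement $\Psi \in \Dom(H_\alpha)$. The argument is a standard elliptic-regularity bootstrap adapted to the half-space-slab geometry, with the twist that the boundary condition is of Robin type with an $W^{1,\infty}$ coefficient (so differentiating the boundary term is harmless).

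First I would record what equation \eqref{lemma-rovnice} says when tested against $\Phi \in C_0^\infty(\Omega)$ with support away from $\partial\Omega$: integrating by parts in $h^1_\alpha$ and noting that $h^2_\alpha$ drops out, one gets $-\Delta\Psi = F$ in the distributional sense on $\Omega$. Hence the interior elliptic regularity theorem gives $\Psi \in W^{2,2}_{\mathrm{loc}}(\Omega)$. To get regularity up to the boundary, I would localise near a boundary point: choose a cutoff $\chi$ supported in a ball $U$ of radius $< d/2$ centred at a point of $\partial\Omega$ (so $U$ meets only one of the two faces $u=0$ or $u=d$), and flatten — here the boundary is already flat, so this step is free. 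Using difference quotients in the $n$ tangential directions $x_1,\dots,x_n$ (which preserve the boundary condition since translation in $x$ is an isometry of $\Omega$), one shows $\partial_{x_j}\Psi \in W^{1,2}_{\mathrm{loc}}(\overline\Omega)$ for $j=1,\dots,n$; in particular all tangential second derivatives are in $L^2$ near the boundary. The remaining second derivative $\partial_u^2\Psi$ is then recovered from the equation itself: $\partial_u^2\Psi = -F - \Delta'\Psi \in L^2_{\mathrm{loc}}(\overline\Omega)$, where $\Delta'$ is the tangential Laplacian. Combining the interior estimate with the boundary estimate via a covering argument, and controlling behaviour as $|x|\to\infty$ by the translation-invariance of the problem (the estimates are uniform in the centre of the localising ball), yields $\Psi \in W^{2,2}(\Omega)$.

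It remains to identify the boundary condition. Since $\Psi \in W^{2,2}(\Omega)$, the traces of $\Psi$ and of $\partial_u\Psi$ on $\partial\Omega$ belong to $W^{1/2,2}$ and $L^2$ respectively, so $\partial_u\Psi + \ii\alpha\Psi$ has a well-defined $L^2$ trace on each face. Now take a general $\Phi \in W^{1,2}(\Omega)$ in \eqref{lemma-rovnice}, integrate by parts in $h^1_\alpha$ using $\Psi \in W^{2,2}(\Omega)$, and subtract $(\Phi,F)_{L^2(\Omega)} = (\Phi,-\Delta\Psi)_{L^2(\Omega)}$: the interior terms cancel and one is left with
\beq
\int_{\RR^n} \overline{\Phi(x,d)}\bigl(\partial_u\Psi + \ii\alpha\Psi\bigr)(x,d)\dx - \int_{\RR^n} \overline{\Phi(x,0)}\bigl(\partial_u\Psi + \ii\alpha\Psi\bigr)(x,0)\dx = 0
\eeq
for all such $\Phi$. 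Since the traces $\Phi(\cdot,0)$ and $\Phi(\cdot,d)$ range over a dense subset of $L^2(\RR^n)$ (one can prescribe them independently, e.g. by a linear interpolation in $u$ times a test function in $x$), both boundary integrands vanish a.e., i.e. $\partial_u\Psi + \ii\alpha\Psi = 0$ on $\partial\Omega$. Hence $\Psi \in \Dom(H_\alpha)$.

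I expect the main obstacle to be the boundary regularity step, specifically obtaining $W^{2,2}$ up to $\partial\Omega$ uniformly in the $x$-direction on the unbounded domain: one must be careful that the difference-quotient estimates and the localisation constants do not degenerate as $|x|\to\infty$. This is handled by exploiting translation invariance in $x$ (the coefficient $\alpha$ enters only through its $L^\infty$ and, for the boundary term, $W^{1,\infty}$ norms, both of which are translation-uniform), so the local $W^{2,2}$ bound on a ball of fixed radius is independent of the ball's centre; summing over a locally finite cover of $\Omega$ then gives the global bound. The rest — interior regularity, the integration-by-parts identification of the boundary condition, and the density of boundary traces — is routine.
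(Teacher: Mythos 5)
Your proposal is correct and follows essentially the same route as the paper: tangential difference quotients to obtain $\partial_{x_j}\Psi \in W^{1,2}(\Omega)$, recovery of $\partial_u^2\Psi = -F - \Delta'\Psi$ from the equation, and a Gauss--Green integration by parts against general $\Phi \in W^{1,2}(\Omega)$ to identify the Robin condition on both faces. The only real difference is cosmetic: the paper runs the difference-quotient estimate globally on the slab, with the $x$-dependence of $\alpha$ entering through an extra boundary term controlled by $\|\alpha\|_{W^{1,\infty}(\RR^n)}$ (so the boundary condition is not literally ``preserved'' by translation, as your parenthetical suggests, but the perturbation is harmless), which makes your localization-and-covering step unnecessary.
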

\begin{rmrk}
Equivalently, the statement may be formulated that the generalized solution to the problem
\beq
\begin{dcases}
\qquad-\Delta \Psi \!\!\!&= F \qquad \mathrm{in} \; \Omega \\
\partial_u \Psi + \ii \alpha \Psi\!\!\! &= 0 \qquad \mathrm{on} \; \partial\Omega
\end{dcases}
\eeq 
belongs to $\Dom(H_{\alpha})$.
\end{rmrk}
\begin{proof}
We introduce the difference quotient \cite[Sec. 5.8.2]{Evans-1998}
\beq
\Psi^j_{\delta}(x,u) := \frac{\Psi(x + \delta e_j, u) - \Psi(x, u)}{\delta}
\eeq
for $j=1,\dots,n$ and any $\Psi \in L^2(\Omega)$ and $\delta$ a small real number. Here $e_j$ stands for $j$-th vector of the standard basis in $\RR^n$, i.e. $x + \delta e_j = (x_1, \dots, x_{j-1}, x_j + \delta, x_{j+1}, \dots, x_n)$. We estimate using the Schwarz inequality
\beq
\bal
\left|\Psi(x+\delta e_j,u)-\Psi(x,u)\right| = \left|\delta \int_0^1 \partial_{x_j} \Psi(x+ \delta e_j,u) \,\mathrm{d}t \right|
&\leq |\delta| \sqrt{\int_0^1 |\partial_{x_j} \Psi(x + \delta e_j t)|^2\,\mathrm{d}t}, 
\eal
\eeq
which subsequently with the use of Fubini's theorem yields the inequality
\beq
\bal
 \|\Psi^j_{\delta}\|_{L^2(\Omega)}^2 \leq \int_{\Omega} \left( \int_0^1 |\partial_{x_j} \Psi(x + \delta e_j t)|^2\,\mathrm{d}t\right) \dx \du = \int_{0}^1  \|\partial_{x_j} \Psi\|_{L^2(\Omega)} \mathrm{d}t \leq \|\Psi\|_{W^{1,2}(\Omega)}^2.
\eal
\eeq
Similarly we estimate $\alpha_{\delta}^j$:
\beq
\bal
\|\alpha_{\delta}^j\|_{L^{\infty}(\RR^n)} \leq \underset{x \in \RR^n}{\mathrm{ess\,sup}} \int_0^1 |\partial_{x_j} \alpha(x + \delta e_j t)| \,\mathrm{d}t \leq \|\partial_{x_j}\alpha\|_{L^{\infty}(\RR^n)} \leq \|\alpha\|_{W^{1,\infty}(\RR^n)}.
\eal
\eeq
If $\Psi$ satisfies \eqref{lemma-rovnice}, then $\Psi_{\delta}$ is a solution to 
\beq \label{cosi}
h_{\alpha}(\Phi,\Psi^j_{\delta}) = \left(\Phi, F^j_{\delta}\right)_{L^2(\Omega)} - \int_{\RR^n} \alpha^j_{\delta}(x) \left( \overline{\Phi(x,0)}\Psi(x+\delta e_j,0) - \overline{\Phi(x,d)}\Psi(x+\delta e_j,d)\right) \dx
\eeq
with $\Phi \in W^{1,2}(\Omega)$ arbitrary. It also holds
\beq
\bal
\left(\Phi, F^j_{\delta}\right)_{L^2(\Omega)}&= \frac{1}{\delta}\int_{\Omega} \overline{\Phi(x,u)} \left(F(x+\delta e_j,u) - F(x,u) \right)\dx \du \\
&= \frac{1}{\delta}\int_{\Omega} \left( \overline{\Phi(x-\delta e_j,u)} - \overline{\Phi(x,u)}\right) F(x,u) \dx \du \\
& = -\left(\Phi^j_{-\delta},F\right)_{L^2(\Omega)}
\eal
\eeq
and we use it together with setting $\Phi = \Psi^j_{\delta}$ to obtain from \eqref{cosi}
\beq
\bal
h_{\alpha}[\Psi^j_{\delta}] =& - \left( (\Psi^j_{\delta})_{-\delta},F\right)_{L^2(\Omega)} \\
&- \int_{\RR^n} \alpha^j_{\delta}(x) \left( \overline{\Psi^j_{\Delta}(x,0)}\Psi(x+\delta e_j,0) - \overline{\Psi^j_{\Delta}(x,d)}\Psi(x+\delta e_j,d)\right) \dx.
\eal
\eeq
We employ the estimates
\beq
\left|\left( (\Psi^j_{\delta})^j_{-\delta},F\right)\right| \leq \|F\|_{L^2(\Omega)} \|(\Psi^j_{\delta})^j_{-\delta}\|_{L^(\Omega)} \leq \frac{1}{2} \|F\|_{L^2(\Omega)} + \frac{1}{2}\|\Psi^j_{\delta}\|_{W^{1,2}(\Omega)}
\eeq
and
\beq
\bal
\left| \int_{\RR^n} \right.&\left. \alpha^j_{\delta}(x) \left( \overline{\Psi^j_{\delta}(x,0)}\Psi(x+\delta e_j,0) - \overline{\Psi^j_{\delta}(x,d)}\Psi(x+\delta e_j,d)\right) \dx \right| \\
 &\leq 2 \|\alpha\|_{W^{1,\infty}(\RR^n)}\|T \Psi^j_{\delta}\|_{L^2(\partial \Omega)} \|T \Psi\|_{L^2(\partial \Omega)} \\
 &\leq C_1 \|\Psi^j_{\delta}\|_{W^{1,2}(\Omega)} \|\Psi\|_{W^{1,2}(\Omega)}
\eal
\eeq
where $T$ is trace operator $W^{1,2}(\Omega) \rightarrow L^2(\partial \Omega)$, together with Young inequality and Lemma \ref{prvni-lemma} to obtain
\beq
\bal
\|\Psi_{\delta}^j\|_{W^{1,2}(\Omega)}^2 =& \|\Psi_{\delta}\|^2_{L^2(\Omega)} + \|\nabla \Psi_{\delta}\|^2_{L^2(\Omega)} \\
 \leq& \|\Psi\|^2_{W^{1,2}(\Omega)} + \frac{1}{2} \|F\|_{L^2(\Omega)} + \frac{1}{2}\|\Psi^j_{\delta}\|_{W^{1,2}(\Omega)} + C_1\|\Psi^j_{\delta}\|_{W^{1,2}(\Omega)} \|\Psi\|_{W^{1,2}(\Omega)} \\
&+ 2 \|\alpha\|_{W^{1,\infty}(\RR^n)} \|\Psi_{\delta}^j\|_{L^2(\Omega)}\|\Psi_{\delta}^j\|_{W^{1,2}(\Omega)}\\
\leq&  \|\Psi\|^2_{W^{1,2}(\Omega)} + \frac{1}{2} \|F\|_{L^2(\Omega)} + \frac{1}{2}\|\Psi^j_{\delta}\|_{W^{1,2}(\Omega)}\\
&+ C_1 \left( \frac{1}{4\tau}\|\Psi\|^2_{W^{1,2}(\Omega)}  + \tau \|\Psi^j_{\delta}\|_{W^{1,2}(\Omega)}\right)\\
&+ C_2 \left( \frac{1}{4\tau}\|\Psi\|^2_{W^{1,2}(\Omega)}  + \tau \|\Psi^j_{\delta}\|_{W^{1,2}(\Omega)}\right)\\
\leq&\frac{1}{2} \|F\|_{L^2(\Omega)} + \left(1 + \frac{C_1 + C_2}{4 \tau}\right)\|\Psi\|^2_{W^{1,2}(\Omega)} + \left(\frac{1}{2}+ (C_1 + C_2)\tau\right)\|\Psi^j_{\delta}\|_{W^{1,2}(\Omega)},
\eal
\eeq
where $\tau >0$ can be chosen arbitrarily small. Setting $\tau = 1/(4 C_1+4 C_2)$ we have
\beq
\|\Psi_{\delta}^j\|_{W^{1,2}(\Omega)} \leq C,
\eeq
where $C$ is independent of $\delta$. This implies that 
\beq
\sup_{\delta\in\RR} \|\Psi_{\delta}\|_{W^{1,2}(\Omega)} < + \infty.
\eeq
Since bounded sequences in a reflexive Banach space are weakly precompact \cite[Thm. D.4.3]{Evans-1998}, we find a subsequence $(\delta_k)_{k=1}^{\infty}$, $\lim_{k\rightarrow +\infty} \delta_k = 0$, such that $\Psi^j_{\delta_k}$ weakly converges to some $f$ in $W^{1,2}(\Omega)$. As can be expected,
\beq
\bal
-\int_{\Omega}\overline{\partial_{x_j}\Psi(x,u)} \Phi(x,u) &=\int_{\Omega} \overline{\Psi(x,u)} \lim_{\delta_k \rightarrow 0} \Phi^j_{-\delta_k}(x,u) \dx \du \\
&= \lim_{\delta_n \rightarrow 0} \int_{\Omega} \overline{\Psi(x,u)} \Phi^j_{-\delta_k}(x,u) \dx \du \\
&= - \lim_{\delta_k \rightarrow 0} \int_{\Omega} \overline{\Psi^j_{\delta_k}(x,u)} \Phi(x,u) \dx \du \\
&= - \int_{\Omega} \overline{f(x,u)} \Phi(x,u) \dx \du.
\eal
\eeq
Therefore $\partial_{x_j}\Psi = f$ in a weak sense and so $\partial_{x_j}\Psi \in W^{1,2}(\Omega)$ for every $j$, $j = 1, \dots, n$. From the Interior Regularity Theorem \cite[Thm. 6.3.1]{Evans-1998} follows that $\Psi \in W^{2,2}_{\mathrm{loc}}(\Omega)$. Hence, the equation $-\Delta \Psi = F$ holds almost everywhere in $\Omega$. Also, $\partial_u^2 \Psi = -F -\Delta' \Psi \in L^2(\Omega)$ and therefore $\Psi \in W^{2,2}(\Omega)$.\\
Using Gauss-Green theorem we find that
\beq
\bal
(\Phi,F)_{L^2(\Omega)} =& (\Phi,-\Delta \Psi)_{L^2(\Omega)} \\
&+\int_{\RR^n} \left( \partial_u \Psi(x,d) + \ii\, \alpha(x)\Psi(x,d)\right)\overline{\Phi(x,d)}\dx\\
&-\int_{\RR^n} \left( \partial_u \Psi(x,0) + \ii\, \alpha(x)\Psi(x,0)\right)\overline{\Phi(x,0)}\dx
\eal
\eeq
for all $\Phi \in W^{1,2}(\Omega)$. Using this equality and the fact that $F = -\Delta \Psi$ almost everywhere in $\Omega$ we obtain the boundary conditions for $\Psi$. 
\end{proof}
\begin{thm} \label{m-sec-waveguide}
Let $\alpha \in W^{1,\infty}(\RR^n)$ be real-valued. Then $H_{\alpha}$ is an m-sectorial operator on $L^2(\Omega)$ satisfying
\beq
H_{\alpha} = \tilde H_{\alpha}.
\eeq
\end{thm}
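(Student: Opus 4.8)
The plan is to show the two m-sectorial operators $\tilde H_\alpha$ and $H_\alpha$ coincide by proving they are defined by the same sesquilinear form action and have the same domain. The form $h_\alpha$ is closed and sectorial by Lemma \ref{prvni-lemma} and \cite[Thm. VI-1.33]{Kato}, so the First Representation Theorem attaches to it a unique m-sectorial operator $\tilde H_\alpha$ whose domain is
\beq
\Dom(\tilde H_\alpha) = \left\{ \Psi \in W^{1,2}(\Omega) \,\middle|\, \exists F \in L^2(\Omega),\ \forall \Phi \in W^{1,2}(\Omega),\ h_\alpha(\Phi,\Psi) = (\Phi, F)_{L^2(\Omega)} \right\},
\eeq
and on which $\tilde H_\alpha \Psi = F$. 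First I would verify the inclusion $\Dom(\tilde H_\alpha) \subseteq \Dom(H_\alpha)$: if $\Psi \in \Dom(\tilde H_\alpha)$, then $\Psi$ solves \eqref{lemma-rovnice} for the corresponding $F$, so Lemma \ref{m-sec-lemma} gives $\Psi \in \Dom(H_\alpha)$, and the Gauss--Green identity at the end of that proof shows that in fact $F = -\Delta\Psi = H_\alpha \Psi$. Hence $\tilde H_\alpha \subseteq H_\alpha$ as operators.

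For the reverse inclusion I would take $\Psi \in \Dom(H_\alpha) \subseteq W^{2,2}(\Omega)$ satisfying the Robin boundary condition $\partial_u \Psi + \ii\alpha\Psi = 0$ on $\partial\Omega$ in the trace sense, set $F := -\Delta\Psi \in L^2(\Omega)$, and check directly via integration by parts (Gauss--Green) that $h_\alpha(\Phi,\Psi) = (\Phi,F)_{L^2(\Omega)}$ for every $\Phi \in W^{1,2}(\Omega)$: the bulk term $\int_\Omega \overline{\nabla\Phi}\cdot\nabla\Psi$ produces $(\Phi,-\Delta\Psi)$ plus boundary terms $\int_{\RR^n}\overline{\Phi}\,\partial_u\Psi\,\dx$ on each face of $\partial\Omega$, and these boundary terms are exactly cancelled by $\ii\,h_\alpha^2(\Phi,\Psi)$ once the boundary condition is substituted. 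This shows $\Psi \in \Dom(\tilde H_\alpha)$ with $\tilde H_\alpha\Psi = F = H_\alpha\Psi$, so $H_\alpha \subseteq \tilde H_\alpha$. Combining the two inclusions yields $H_\alpha = \tilde H_\alpha$, and m-sectoriality of $H_\alpha$ is then inherited from that of $\tilde H_\alpha$ provided by the First Representation Theorem.

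The technical heart of the argument is really Lemma \ref{m-sec-lemma} (already proved in the excerpt), which upgrades weak solutions to $W^{2,2}(\Omega)$-regularity via difference quotients and the Interior Regularity Theorem, so the remaining work here is mostly bookkeeping. The one place that deserves care is the justification of the Gauss--Green formula with only $W^{1,2}$ test functions and $W^{2,2}$ solutions on the unbounded domain $\Omega = \RR^n \times I$: I would handle this by the same density argument used in Lemma \ref{prvni-lemma}, approximating $\Phi$ by restrictions of $C_0^\infty(\RR^{n+1})$ functions (valid since $\Omega$ satisfies the segment condition), applying the integration-by-parts identity there, and passing to the limit using continuity of the trace operator $T\colon W^{1,2}(\Omega)\to L^2(\partial\Omega)$ and of the bulk inner products. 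I expect this density/trace limiting step to be the only genuine obstacle; once it is in place the equality $H_\alpha=\tilde H_\alpha$ is immediate from the two domain inclusions.
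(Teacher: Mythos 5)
Your proposal is correct and follows essentially the same route as the paper: the inclusion $H_\alpha\subset\tilde H_\alpha$ by integration by parts, and the reverse inclusion via Lemma \ref{m-sec-lemma} combined with the First Representation Theorem, the only difference being that you verify $\tilde H_\alpha\Psi=-\Delta\Psi=H_\alpha\Psi$ directly where the paper appeals to uniqueness in the representation theorem. Your extra care about justifying Gauss--Green with $W^{1,2}$ test functions on the unbounded strip by density and trace continuity is exactly the (implicit) content of the paper's ``straightforward to verify'' step.
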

\begin{proof}
Using integration by parts it is straightforward to verify that $\tilde H_{\alpha}$ is an extension of $H_{\alpha}$, $H_{\alpha} \subset \tilde H_{\alpha}$. The other inclusion follows from Lemma \ref{m-sec-lemma} and the uniqueness in the First Representation Theorem \cite[Thm. VI-2.1]{Kato}.
\end{proof}
Using the quadratic form approach, we are able to find the adjoint operator to $H$ quite easily.
\begin{thm}
Let $\alpha \in W^{1,\infty}(\RR^n)$ be real-valued. Then
\beq \label{wave-adjoint}
H_{\alpha}^* = H_{-\alpha}.
\eeq
\end{thm}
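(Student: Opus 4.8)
The plan is to work at the level of the sesquilinear forms and to use the characterisation of the adjoint of an m-sectorial operator in terms of the adjoint form. Recall from \cite[Thm. VI-2.5]{Kato} that if $h_{\alpha}$ is the densely defined, closed, sectorial form associated with $H_{\alpha} = \tilde H_{\alpha}$ (as established in Theorem \ref{m-sec-waveguide}), then the adjoint form $h_{\alpha}^*$, defined by $h_{\alpha}^*(\Phi,\Psi) := \overline{h_{\alpha}(\Psi,\Phi)}$ on the same domain $\Dom(h_{\alpha}^*) = \Dom(h_{\alpha}) = W^{1,2}(\Omega)$, is itself densely defined, closed and sectorial, and the m-sectorial operator it generates via the First Representation Theorem is precisely $H_{\alpha}^*$. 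So the first step is to compute $h_{\alpha}^*$ explicitly.

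The computation is elementary: since $h^1_{\alpha}$ is symmetric, $\overline{h^1_{\alpha}(\Psi,\Phi)} = h^1_{\alpha}(\Phi,\Psi)$, while the boundary term satisfies $\overline{h^2_{\alpha}(\Psi,\Phi)} = h^2_{\alpha}(\Phi,\Psi)$ as well because $\alpha$ is real-valued and $\overline{\overline{\Psi(x,d)}\,\Phi(x,d)} = \overline{\Phi(x,d)}\,\Psi(x,d)$. Hence
\beq
h_{\alpha}^*(\Phi,\Psi) = \overline{h_{\alpha}(\Psi,\Phi)} = \overline{h^1_{\alpha}(\Psi,\Phi) + \ii\, h^2_{\alpha}(\Psi,\Phi)} = h^1_{\alpha}(\Phi,\Psi) - \ii\, h^2_{\alpha}(\Phi,\Psi) = h^1_{-\alpha}(\Phi,\Psi) + \ii\, h^2_{-\alpha}(\Phi,\Psi),
\eeq
using that $h^1_{\alpha} = h^1_{-\alpha}$ does not depend on $\alpha$ and that $h^2_{-\alpha} = -h^2_{\alpha}$ by linearity of the boundary integral in $\alpha$. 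In other words $h_{\alpha}^* = h_{-\alpha}$ as forms on $W^{1,2}(\Omega)$.

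Once that identity is in hand, the conclusion is immediate: the m-sectorial operator associated with $h_{-\alpha}$ is, by definition, $\tilde H_{-\alpha}$, which equals $H_{-\alpha}$ by Theorem \ref{m-sec-waveguide} applied with $-\alpha$ in place of $\alpha$ (note $-\alpha \in W^{1,\infty}(\RR^n)$ is again real-valued, so the hypotheses are met). Therefore $H_{\alpha}^* = \tilde H_{-\alpha} = H_{-\alpha}$, which is \eqref{wave-adjoint}.

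I do not expect any serious obstacle here; the statement is essentially a bookkeeping consequence of the form machinery already set up. The only points that require a word of care are: (i) confirming that $\Dom(h_{\alpha})$ is a form core symmetric under conjugation so that the adjoint form is defined on the full space $W^{1,2}(\Omega)$ and not a smaller set — this is clear since $W^{1,2}(\Omega)$ is a linear space closed under complex conjugation; (ii) invoking Theorem \ref{m-sec-waveguide} for $-\alpha$ rather than re-deriving the operator realisation of $h_{-\alpha}$ from scratch; and (iii) verifying the sign bookkeeping $h^2_{-\alpha} = -h^2_{\alpha}$, which is immediate from the explicit formula for $h^2_{\alpha}$. An alternative, slightly more hands-on route would be to verify $H_{-\alpha} \subset H_{\alpha}^*$ directly by integration by parts (testing $(\Phi, H_{\alpha}\Psi) = (H_{-\alpha}\Phi,\Psi)$ for $\Psi \in \Dom(H_{\alpha})$, $\Phi \in \Dom(H_{-\alpha})$, the boundary terms cancelling because of the opposite signs in the two Robin conditions) and then upgrade the inclusion to equality using that both operators are m-sectorial with $H_{-\alpha}^* \supset H_{\alpha}$ by the same computation; but the form argument is cleaner and I would present that.
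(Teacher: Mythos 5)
Your argument is correct and is essentially the paper's own proof, just written out in more detail: the paper likewise identifies $H_{\alpha}^*$ as the m-sectorial operator associated with the adjoint form $h_{\alpha}^*$, observes that this form is obtained from $h_{\alpha}$ by replacing $\alpha$ with $-\alpha$, and concludes $H_{\alpha}^* = H_{-\alpha}$ via the representation theorem together with Theorem \ref{m-sec-waveguide}. Your explicit verification of the sign bookkeeping and of the applicability of \cite[Thm. VI-2.5]{Kato} only fills in steps the paper leaves implicit.
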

\begin{proof}
We find the adjoint operator $H_{\alpha}^*$ as an operator corresponding to the adjoint form $h_{\alpha}^*$. The adjoint form can be obtained from $h_{\alpha}$ by replacing $\alpha$ for $-\alpha$. Therefore, its corresponding operator is $H_{-\alpha}$.
\end{proof}
Spectrum of $H_{\alpha}$ is indeed well defined since $H_{\alpha}$ is a closed operator. Consequence of $H_{\alpha}$ being m-sectorial is enclosure of its spectrum in a sector in a complex plane. Using the estimate from Lemma \ref{prvni-lemma}, this estimate can be further improved as follows.
\begin{prop}
The spectrum of $H_{\alpha}$ is localised inside a parabola, more precisely, 
\beq
\sigma(H_{\alpha}) \subset \left\{ z \in \CC \left| \Re z \geq 0, |\Im z| \leq 2 \|\alpha\|_{L^{\infty}(\RR^n)} \sqrt{\Re z}\right.\right\}.
\eeq
\end{prop}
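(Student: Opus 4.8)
The plan is to control $\sigma(H_\alpha)$ by the closure of the numerical range of the form $h_\alpha$, and then to observe that Lemma~\ref{prvni-lemma} confines this numerical range to the parabola in question.

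First I would invoke the standard spectral inclusion for operators defined through sectorial forms: since $H_\alpha=\tilde H_\alpha$ is the m-sectorial operator associated with the closed sectorial form $h_\alpha$ by the First Representation Theorem, one has
\[
\sigma(H_\alpha)\subseteq\overline{\Theta(h_\alpha)},\qquad
\Theta(h_\alpha):=\bigl\{\,h_\alpha[\Psi]\ :\ \Psi\in W^{1,2}(\Omega),\ \|\Psi\|_{L^2(\Omega)}=1\,\bigr\}.
\]
To justify this I would recall that $\Theta(h_\alpha)$, hence its closure, is convex, so $\CC\setminus\overline{\Theta(h_\alpha)}$ is connected; m-sectoriality guarantees that this connected set meets $\rho(H_\alpha)$ (it contains, e.g., the whole open left half-plane $\Re z<0$), and on it the resolvent estimate $\|(H_\alpha-z)^{-1}\|\le\mathrm{dist}(z,\Theta(h_\alpha))^{-1}$ together with the constancy of the nullity and deficiency on connected components forces the whole set into $\rho(H_\alpha)$; see \cite[Sec.~V.3]{Kato}. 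Note also that $\Theta(H_\alpha)\subseteq\Theta(h_\alpha)$, since $(\Psi,H_\alpha\Psi)_{L^2(\Omega)}=h_\alpha[\Psi]$ whenever $\Psi\in\Dom(H_\alpha)\subseteq W^{1,2}(\Omega)$.

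Next I would split $h_\alpha[\Psi]=h^1_\alpha[\Psi]+\ii\,h^2_\alpha[\Psi]$ with $h^1_\alpha[\Psi],h^2_\alpha[\Psi]\in\RR$. Because $h^1_\alpha$ is the positive symmetric Neumann form, $\Re h_\alpha[\Psi]=h^1_\alpha[\Psi]\ge 0$, which already yields $\Re z\ge 0$ on $\Theta(h_\alpha)$. For the imaginary part, Lemma~\ref{prvni-lemma} applied to a normalised $\Psi$ gives
\[
|\Im h_\alpha[\Psi]|=\bigl|h^2_\alpha[\Psi]\bigr|
\le 2\,\|\alpha\|_{L^\infty(\RR^n)}\sqrt{h^1_\alpha[\Psi]}
=2\,\|\alpha\|_{L^\infty(\RR^n)}\sqrt{\Re h_\alpha[\Psi]},
\]
so $\Theta(h_\alpha)$ — and hence its closure, the parabolic region being closed — is contained in $\{z\in\CC:\Re z\ge 0,\ |\Im z|\le 2\|\alpha\|_{L^\infty(\RR^n)}\sqrt{\Re z}\}$. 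Combined with the first step this is exactly the asserted inclusion.

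The only genuinely delicate point is the first step, the passage from the numerical range of the \emph{form} to the \emph{spectrum} of the operator; the parabolic shape itself is an immediate corollary of Lemma~\ref{prvni-lemma} and requires no new estimates. I would therefore devote most of the write-up to making the spectral-inclusion argument precise (convexity of $\Theta(h_\alpha)$, non-emptiness of $\rho(H_\alpha)$ outside the sector, and the constancy-on-components argument), referring to the corresponding results in \cite{Kato} on m-sectorial operators.
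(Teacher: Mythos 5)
Your argument is correct and is precisely the route the paper intends (the paper states the proposition without a written proof, remarking only that it follows from m-sectoriality together with the estimate of Lemma~\ref{prvni-lemma}): the spectrum of the m-sectorial operator $H_\alpha=\tilde H_\alpha$ lies in the closure of the numerical range of the form $h_\alpha$, and Lemma~\ref{prvni-lemma} with $\|\Psi\|_{L^2(\Omega)}=1$ confines that numerical range to the stated parabola. Your careful justification of the form-to-spectrum inclusion via convexity and the connectedness/constancy argument from \cite{Kato} is exactly the standard machinery the paper implicitly relies on, so nothing is missing.
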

The studied Hamiltonian is fundamentally non-self-adjoint, we can however state some symmetry properties, more precisely the $\PT$-symmetry and $\T$-self-adjointness.
\begin{prop} \label{PTH}
Let $\alpha \in W^{1,\infty}(\RR^n)$ be real-valued. Then $H_{\alpha}$ is $\PT$-symmetric with operators $\P$, $\T$ defined in \eqref{def-pt}.
\end{prop}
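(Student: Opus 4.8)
The plan is to verify the operator identity $H_\alpha \PT = \PT H_\alpha$ directly from the explicit description of $H_\alpha$ in \eqref{H-waveguide}, following the convention for commutation of (possibly unbounded) operators from \cite[Sec. III.5.6]{Kato}: it suffices to show that $\PT$ maps $\Dom(H_\alpha)$ onto itself and that $H_\alpha \PT \Psi = \PT H_\alpha \Psi$ for every $\Psi \in \Dom(H_\alpha)$. Since $\P$ and $\T$ are both bounded and boundedly invertible (indeed involutions) on $L^2(\Omega)$, so is $\PT$, and the commutation in the operator sense is exactly this statement.

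First I would record the elementary mapping properties of $\P$ and $\T$. The reflection $(\P\Psi)(x,u) = \Psi(x,d-u)$ is a unitary involution that commutes with $\partial_{x_j}$ for every $j$, anticommutes with $\partial_u$ in the sense $\partial_u(\P\Psi)(x,u) = -(\partial_u\Psi)(x,d-u)$, and consequently satisfies $\partial_u^2(\P\Psi)(x,u) = (\partial_u^2\Psi)(x,d-u)$; hence $\P$ commutes with $-\Delta$, and $\P$ preserves $W^{2,2}(\Omega)$. The complex conjugation $(\T\Psi)(x,u) = \overline{\Psi(x,u)}$ is an antiunitary involution preserving $W^{2,2}(\Omega)$ which commutes with every real differential operator, in particular with $-\Delta$, while turning the factor $\ii$ in the boundary condition into $-\ii$. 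Composing, $\PT$ preserves $W^{2,2}(\Omega)$ and commutes with $-\Delta = H_\alpha$ on the level of the differential expression, so the only real content is the behaviour of the boundary condition.

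Next I would check that $\PT$ maps $\Dom(H_\alpha)$ into itself. Take $\Psi \in \Dom(H_\alpha)$, so $\partial_u\Psi + \ii\alpha\Psi = 0$ on $\partial\Omega$ in the sense of traces. Apply $\T$ first: $\partial_u\overline{\Psi} - \ii\alpha\overline{\Psi} = 0$ on $\partial\Omega$ since $\alpha$ is real-valued. Now apply $\P$, which interchanges the two components $u=0$ and $u=d$ of $\partial\Omega$ and flips the sign of $\partial_u$. On the component $u=d$ the condition for $\PT\Psi$ reads $\partial_u(\PT\Psi)(x,d) + \ii\alpha(x)(\PT\Psi)(x,d) = -(\partial_u\overline{\Psi})(x,0) + \ii\alpha(x)\overline{\Psi}(x,0) = -\big(\partial_u\overline{\Psi} - \ii\alpha\overline{\Psi}\big)(x,0) = 0$, using the $\T$-transformed boundary condition at $u=0$; the component $u=0$ is handled symmetrically using the condition at $u=d$. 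Hence $\PT\Psi \in \Dom(H_\alpha)$, and since $\PT$ is an involution the same argument applied to $\PT\Psi$ shows $\PT$ maps $\Dom(H_\alpha)$ onto itself. Combined with $(-\Delta)(\PT\Psi) = \PT(-\Delta\Psi)$ from the second paragraph, this gives $H_\alpha\PT = \PT H_\alpha$, i.e. $[H_\alpha,\PT] = 0$ in the sense of \eqref{PTsymmetry}.

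I do not expect a genuine obstacle here; the argument is a bookkeeping check of signs in the boundary condition under reflection and conjugation. The one point that deserves a sentence of care is that all the boundary identities are to be read in the sense of traces, and that the trace operator $T\colon W^{2,2}(\Omega)\to W^{1/2,2}(\partial\Omega)$ intertwines $\PT$ acting on $\Omega$ with the corresponding reflection-conjugation on $\partial\Omega$ (which swaps the two copies of $\RR^n$ making up $\partial\Omega$); this is immediate because $\PT$ is continuous on $W^{2,2}(\Omega)$ and acts by a measure-preserving change of variables plus conjugation, so it commutes with the continuous extension defining the trace. With that remark in place the proof is complete.
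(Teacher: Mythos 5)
Your proof is correct and follows essentially the same route as the paper's: check that $\PT$ preserves $W^{2,2}(\Omega)$ and the boundary conditions (using that $\alpha$ is real and that $\P$ swaps the two boundary components while flipping the sign of $\partial_u$), and that $\PT$ commutes with the differential expression $-\Delta$. You have merely written out explicitly the sign bookkeeping that the paper leaves as a direct verification.
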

\begin{proof}
According to our definition \eqref{PTsymmetry} of $\PT$-symmetry we need to check that $[H_{\alpha},\PT] = 0$ holds in the sense $\PT H_{\alpha} \subset H_{\alpha}\PT$ \cite[Sec. III.5.6]{Kato}. For every $\Psi\in \Dom(H_{\alpha})$ easily holds that $\PT \Psi \in W^{2,2}(\Omega)$. We can directly check that the action of $H_{\alpha}$ is invariant under the influence of the operator $\PT$ and that the boundary conditions hold also for $\PT \Psi$.
\end{proof}
\begin{prop}
Let $\alpha \in W^{1,\infty}(\RR^n)$ be real-valued. Then $H_{\alpha}$ is $\T$-self-adjoint, i.e.
\beq
\T H_{\alpha} \T = H_{\alpha}^*
\eeq
\end{prop}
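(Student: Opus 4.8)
The plan is to use the quadratic-form characterization of the adjoint, exactly in the spirit of the proof of \eqref{wave-adjoint}, combined with a direct computation of how $\T$ intertwines the form $h_\alpha$ with its adjoint form $h_\alpha^*$. Since we have already established that $H_\alpha = \tilde H_\alpha$ is the m-sectorial operator associated with the closed sectorial form $h_\alpha$, and that $H_\alpha^* = H_{-\alpha}$, the statement $\T H_\alpha \T = H_\alpha^*$ is equivalent to $\T H_\alpha \T = H_{-\alpha}$, and it suffices to check this identity at the level of the associated forms together with a domain identification.

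First I would record the elementary algebraic properties of $\T$: it is an antilinear involution on $L^2(\Omega)$, it maps $W^{1,2}(\Omega)$ onto itself (complex conjugation preserves Sobolev regularity and weak derivatives satisfy $\partial \overline{\Psi} = \overline{\partial\Psi}$), and for $\Phi,\Psi \in L^2(\Omega)$ one has $(\T\Phi,\T\Psi)_{L^2(\Omega)} = \overline{(\Phi,\Psi)_{L^2(\Omega)}} = (\Psi,\Phi)_{L^2(\Omega)}$. Next I would verify the key form identity: for all $\Phi,\Psi\in W^{1,2}(\Omega)$,
\beq
h_{\alpha}(\T\Phi,\T\Psi) = \overline{h_{\alpha}(\Psi,\Phi)} = h_{-\alpha}(\Phi,\Psi).
\eeq
The first equality is checked separately on $h^1_\alpha$ and $h^2_\alpha$: for $h^1_\alpha$ it follows because $\overline{\nabla\overline{\Phi}}\cdot\nabla\overline{\Psi} = \nabla\Phi\cdot\overline{\nabla\Psi} = \overline{\overline{\nabla\Phi}\cdot\nabla\Psi}$ and integration is insensitive to conjugation; for $h^2_\alpha$ one uses that $\alpha$ is real-valued and that traces commute with conjugation, so $\alpha\,\overline{\overline{\Phi}}\,\overline{\Psi} = \alpha\,\Phi\,\overline{\Psi} = \overline{\alpha\,\overline{\Psi}\,\Phi}$. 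The second equality is just the observation already used in the proof of \eqref{wave-adjoint}, namely $\overline{h_\alpha(\Psi,\Phi)} = h_\alpha^*(\Phi,\Psi) = h_{-\alpha}(\Phi,\Psi)$.

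From this form identity the operator identity is routine. If $\Psi \in \Dom(H_\alpha)$, so that $h_\alpha(\Phi,\Psi) = (\Phi, H_\alpha\Psi)$ for all $\Phi \in W^{1,2}(\Omega)$, then for $\T\Psi$ and arbitrary $\Phi \in W^{1,2}(\Omega)$ I would write, putting $\Phi = \T\Xi$ with $\Xi = \T\Phi \in W^{1,2}(\Omega)$,
\beq
h_{-\alpha}(\Phi, \T\Psi) = h_\alpha(\T\Phi,\Psi)^{\mathrm{conj}} \;\; \Longrightarrow \;\; h_{-\alpha}(\Phi,\T\Psi) = (\Phi, \T H_\alpha \Psi)_{L^2(\Omega)},
\eeq
using $h_{-\alpha}(\T\Xi,\T\Psi) = \overline{h_\alpha(\Psi,\Xi)}$ together with $h_\alpha(\Xi,\Psi)=(\Xi,H_\alpha\Psi)$ and $(\T\Xi,\T(H_\alpha\Psi)) = \overline{(\Xi,H_\alpha\Psi)}$. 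Hence $\T\Psi \in \Dom(H_{-\alpha})$ with $H_{-\alpha}\T\Psi = \T H_\alpha\Psi$, i.e. $\T H_\alpha \T \subset H_{-\alpha}$; applying the same argument with $\alpha$ replaced by $-\alpha$ and using $\T^2 = \mathrm{Id}$ gives the reverse inclusion, so $\T H_\alpha\T = H_{-\alpha} = H_\alpha^*$. Alternatively, and perhaps more cleanly, one can invoke Proposition \ref{PTH} together with the fact that $\P$ is unitary and self-adjoint with $\P H_\alpha \P = H_\alpha$ (the strip $I=(0,d)$ is symmetric under $u\mapsto d-u$, and this reflection maps the boundary condition at $u=0$ to the one at $u=d$ without changing $\alpha$, since the normal derivative also flips sign): then $\T H_\alpha \T = \T\P\,\P H_\alpha \P\,\P\T = (\PT)^{-1} H_\alpha (\PT)$ composed with $\P H_\alpha \P = H_\alpha$, and $\PT$-symmetry gives $(\PT)H_\alpha = H_\alpha(\PT)$, whence the claim; one must only be slightly careful because $\PT$ is antilinear, so conjugation appears, and the cleanest bookkeeping is still the direct form computation above.

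The only genuine subtlety — and the step I would treat most carefully — is the domain equality rather than mere inclusion; but this is automatic here because $H_{-\alpha}$ is itself the operator associated with a closed sectorial form via the First Representation Theorem, and $\T H_\alpha \T$, being conjugate to an m-sectorial operator by an antiunitary, is densely defined and closed with the same form-domain image $\T(\Dom(h_\alpha)) = W^{1,2}(\Omega) = \Dom(h_{-\alpha})$; an operator associated with a closed form admits no proper operator extension that is still associated with that form, so the inclusion $\T H_\alpha\T \subset H_{-\alpha}$ obtained above must be an equality. No other obstacle arises; everything else is the bookkeeping of antilinearity and the reality of $\alpha$.
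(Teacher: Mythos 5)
Your main, form-based route is sound and does the job: the correct chain is $h_{-\alpha}(\Phi,\T\Psi)=\overline{h_{\alpha}(\T\Phi,\Psi)}=\overline{(\T\Phi,H_{\alpha}\Psi)}=(\Phi,\T H_{\alpha}\Psi)$ for all $\Phi\in W^{1,2}(\Omega)$, which by the explicit description of $\Dom(\tilde H_{-\alpha})$ gives $\T H_{\alpha}\T\subset H_{-\alpha}$, and your involution argument ($\alpha\mapsto-\alpha$, $\T^2=\mathrm{Id}$) upgrades this to equality, after which \eqref{wave-adjoint} finishes the proof. This is a legitimate alternative to the paper, which instead verifies the claim directly on $\Dom(H_{\alpha})\subset W^{2,2}(\Omega)$ exactly as in Proposition \ref{PTH}: conjugation commutes with $-\Delta$ and turns the boundary condition $\partial_u\Psi+\ii\alpha\Psi=0$ into $\partial_u\overline{\Psi}-\ii\alpha\overline{\Psi}=0$, so $\T$ maps $\Dom(H_{\alpha})$ onto $\Dom(H_{-\alpha})$ and $\T H_{\alpha}\T=H_{-\alpha}=H_{\alpha}^*$; your argument buys nothing is lost if one only knows the form, the paper's is shorter because the domain is already explicit.

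Two corrections are needed, though. First, your displayed ``key form identity'' is wrong as stated: since $h_{\alpha}=h^1_{\alpha}+\ii h^2_{\alpha}$ and the explicit $\ii$ is \emph{not} conjugated when you conjugate the arguments, the correct statement is
\beq
h_{\alpha}(\T\Phi,\T\Psi)=\overline{h^1_{\alpha}(\Phi,\Psi)}+\ii\,\overline{h^2_{\alpha}(\Phi,\Psi)}=h_{\alpha}(\Psi,\Phi)=\overline{h_{-\alpha}(\Phi,\Psi)},
\eeq
not $\overline{h_{\alpha}(\Psi,\Phi)}=h_{-\alpha}(\Phi,\Psi)$ (your version would assert that $h_{\alpha}(\Psi,\Phi)$ is real); likewise the quoted ingredient should read $h_{-\alpha}(\T\Xi,\T\Psi)=\overline{h_{\alpha}(\Xi,\Psi)}$, with the arguments in this order, which is exactly what your implication actually uses — so the slip is repairable, but it must be fixed. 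Second, the ``alternative, cleaner'' route is not correct: $\P H_{\alpha}\P\neq H_{\alpha}$. The domain \eqref{H-waveguide} imposes $\partial_u\Psi+\ii\alpha\Psi=0$ with the \emph{fixed} $u$-derivative on both components of $\partial\Omega$ (not the outward normal derivative of the introduction), so under $u\mapsto d-u$ the sign of $\partial_u$ flips and one finds $\P H_{\alpha}\P=H_{-\alpha}$; this is precisely why neither $\P$ nor $\T$ alone is a symmetry while $\PT$ is. With that correction the remark does work — $\PT$-symmetry together with $\P H_{\alpha}\P=H_{-\alpha}$ gives $\T H_{\alpha}\T=\P(\PT)H_{\alpha}(\PT)^{-1}\P=H_{-\alpha}=H_{\alpha}^*$ — but as written it would ``prove'' $\T H_{\alpha}\T=H_{\alpha}$, which is false for $\alpha\not\equiv0$.
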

\begin{proof}
The proof follows in the same way as the proof of Proposition \ref{PTH}.
\end{proof}
The $\T$-self-adjointness in particular due to \cite[Cor. 2.1]{Borisov08} implies that 
\beq
\sigma_{\mathrm{r}}(H_{\alpha}) = \varnothing.
\eeq
%
\section{The essential spectrum} \label{sekcepet}
%
\subsection{Uniform boundary conditions} \label{unperturbed-waveguide}
%
Let us now study the operator $H_{\alpha}$ with $\alpha(x)$ identically equal to $\alpha_0\in\RR$ for all $x \in \RR^n$. We are going to establish some of its basic properties and use them in next subsection to study the perturbed operator $H_{\alpha_0 + \varepsilon \beta}$. Our first goal is to prove the decomposition \eqref{decomposition}. Let us summarise some properties of the operator 
\beq
 \bal \label{transversal-operator}
 -\Delta^I_{\alpha_0}\psi &:= -\psi'' \\
 \Dom(-\Delta^I_{\alpha_0}) &:= \left\{ \psi \in W^{2,2}(I)\left|\, \psi' + \ii \alpha_0 \psi = 0 \quad \textrm{at} \quad \partial I\right.\right\}.
 \eal
 \eeq
It has been shown in \cite[Prop. 1]{Krejcirik-2006-39} that it is an m-sectorial operator therefore it is also closed and the study of its spectrum has a good meaning. The point spectrum of $-\Delta^I_{\alpha_0}$ is the countable set $\left\{\mu_j^2 \right\}_{j=0}^{+\infty}$ with
\beq
\bal
\mu_{j_0} := \alpha_0, \qquad \qquad \qquad
\mu_{j_1} := \frac{\pi}{d}, \qquad \qquad \qquad
\mu_j := \frac{j \pi}{d},
\eal
\eeq
where $j \geq 2$, $(j_0,j_1) = (0,1)$ if $|\alpha_0| \leq \pi/d$ and $(j_0,j_1) = (1,0)$ if $|\alpha_0| > \pi/d$. Making the hypothesis 
\beq  \label{simple-spectrum}
\frac{\alpha_0 d}{\pi} \notin \mathbb{Z} \setminus \{0\}
\eeq
the eigenvalues have algebraic multiplicity equal to one. The corresponding set of eigenfunctions $\{\psi_j\}_{j=0}^{+\infty}$ can be chosen as
\beq \label{function}
\psi_j(u) := \cos(\mu_j u) - \ii \frac{\alpha_0}{\mu_j}\sin(\mu_j u), \qquad j \geq 0.
\eeq
Since the resolvent of the operator $-\Delta^I_{\alpha_0}$ is compact \cite[Prop. 2]{Krejcirik-2006-39}, the spectrum is purely discrete and we have
\beq
\sigma(-\Delta^I_{\alpha_0}) = \sigma_{\textrm{d}}(-\Delta^I_{\alpha_0})= \{\mu_j^2\}_{j=0}^{+\infty}.
\eeq
The adjoint operator $(-\Delta^I_{\alpha_0})^*$ possesses the same spectrum since it can be obtained by interchanging $\alpha_0$ for $-\alpha_0$ in the boundary conditions because $-\Delta^I_{\alpha_0}$ fulfils the relations analogous to the one in the equation (\ref{wave-adjoint}), $(-\Delta^I_{\alpha_0})^* = -\Delta^I_{-\alpha_0}$, and therefore the eigenvalue equation remains unchanged. The corresponding eigenfunction can be selected as
\beq \label{adjoint-function}
\phi_j(u) := \overline{A_j \psi_j(u)},
\eeq
where $A_j$ are normalisation constants defined as
\beq
\bal
A_{j_0} := \frac{2 \ii \alpha_0}{1 - \exp(-2\ii \alpha_0 d)},\qquad 
A_{j_1} := \frac{2 \mu_{j_1}^2}{(\mu_{j_1}^2 - \alpha_0^2)d},\qquad
A_{j} := \frac{2 \mu_j^2}{(\mu_j^2 - \alpha_0^2)d},
\eal
\eeq
where $j\geq 2$, $(j_0,j_1) = (0,1)$ if $|\alpha_0| < \pi/d$ and $(j_0,j_1) = (1,0)$ if $|\alpha_0| > \pi/d$. (Note that we already ruled out the case $|\alpha_0| = \pi/d$ due to (\ref{simple-spectrum}).) If $\alpha_0 = 0$, $A_{j_0}$ should be understood in the limit sense $\alpha_0 \rightarrow 0$. With this choice of normalization constants the both sets of eigenvectors form biorthonormal basis \cite[Prop. 3]{Krejcirik-2006-39} with the relations 
\beq \label{biortho-relation}
(\phi_j, \psi_k)_{L^2(I)} = \delta_{jk}, \qquad \forall j,k \in \mathbb{N},
\eeq
and 
\beq \label{transversal-decomposition}
\psi = \sum_{j=0}^{+\infty}(\phi_j, \psi)_{L^2(I)} \psi_j
\eeq
for every $\psi \in L^2(I)$.
\begin{prop}
The identity
\beq \label{bi-expansion}
\Psi(x,u) = \sum_{j=0}^{+\infty} \Psi_j(x) \psi_j(u).
\eeq
where $\Psi_j(x) := \left(\phi_j,\Psi(x,\cdot)\right)_{L^2(I)}$ holds for every $\Psi \in L^2(\Omega)$ in the sense of $L^2(\Omega)$-norm.
\end{prop}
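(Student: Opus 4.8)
The plan is to reduce the assertion to the transversal expansion \eqref{transversal-decomposition}, which already holds on every fibre $\{x\}\times I$, and then to promote this fibrewise convergence to convergence in $L^2(\Omega)$ by dominated convergence. Under the natural identification $L^2(\Omega)\cong L^2(\RR^n)\otimes L^2(I)$ the truncated sum $S_N\Psi:=\sum_{j=0}^{N}\Psi_j\otimes\psi_j$ (where $(\Psi_j\otimes\psi_j)(x,u)=\Psi_j(x)\psi_j(u)$) is nothing but $\mathrm{id}\otimes P_N$ applied to $\Psi$, with $P_N$ the transversal partial-sum operator $P_N\psi:=\sum_{j=0}^{N}(\phi_j,\psi)_{L^2(I)}\psi_j$ on $L^2(I)$; the goal is $\|\Psi-S_N\Psi\|_{L^2(\Omega)}\to 0$.

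First I would collect the measure-theoretic facts. By Fubini's theorem $\int_{\RR^n}\|\Psi(x,\cdot)\|_{L^2(I)}^2\dx=\|\Psi\|_{L^2(\Omega)}^2<+\infty$, so $\Psi(x,\cdot)\in L^2(I)$ for a.e. $x\in\RR^n$; for every such $x$ the number $\Psi_j(x)=(\phi_j,\Psi(x,\cdot))_{L^2(I)}$ is well defined, and \eqref{transversal-decomposition} applied to $\psi=\Psi(x,\cdot)$ gives
\beq
\Psi(x,\cdot)=\sum_{j=0}^{+\infty}\Psi_j(x)\,\psi_j \qquad\text{in }L^2(I).
\eeq
Since $(x,u)\mapsto\overline{\phi_j(u)}\,\Psi(x,u)$ is measurable on $\Omega$ and belongs to $L^1(I)$ in the variable $u$ for a.e. $x$ (Schwarz inequality together with $\phi_j\in L^2(I)$), Fubini's theorem yields measurability of $\Psi_j$ on $\RR^n$; and from $|\Psi_j(x)|\le\|\phi_j\|_{L^2(I)}\,\|\Psi(x,\cdot)\|_{L^2(I)}$ we get $\Psi_j\in L^2(\RR^n)$, hence $S_N\Psi\in L^2(\Omega)$.

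Next I would use that $\{\psi_j\}_{j=0}^{+\infty}$ is a basis of $L^2(I)$ with biorthogonal system $\{\phi_j\}_{j=0}^{+\infty}$, cf. \cite[Prop. 3]{Krejcirik-2006-39} (in fact a Riesz basis): by the Banach--Steinhaus theorem the operators $P_N$ are uniformly bounded, $\|P_N\|\le M$ for all $N$. Since $(S_N\Psi)(x,\cdot)=P_N[\Psi(x,\cdot)]$ for a.e. $x$, this gives the pointwise majorisation $\|\Psi(x,\cdot)-(S_N\Psi)(x,\cdot)\|_{L^2(I)}^2\le(1+M)^2\|\Psi(x,\cdot)\|_{L^2(I)}^2$, whose right-hand side is an integrable function of $x$ with integral $(1+M)^2\|\Psi\|_{L^2(\Omega)}^2$. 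As the left-hand side converges to $0$ for a.e. $x$ by the displayed fibrewise expansion, dominated convergence yields
\beq
\|\Psi-S_N\Psi\|_{L^2(\Omega)}^2=\int_{\RR^n}\bigl\|\Psi(x,\cdot)-(S_N\Psi)(x,\cdot)\bigr\|_{L^2(I)}^2\dx\;\longrightarrow\;0,
\eeq
which is precisely \eqref{bi-expansion}.

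The only step that is not entirely routine is the uniform bound $\sup_N\|P_N\|<+\infty$; it is automatic once $\{\psi_j\}$ is known to be a Schauder basis of $L^2(I)$, and a fortiori follows from the Riesz-basis property established in \cite{Krejcirik-2006-39}. Everything else — the fibrewise membership $\Psi(x,\cdot)\in L^2(I)$, the measurability and square-integrability of the coefficients $\Psi_j$, and the interchange of limit and integral — is a standard application of Fubini's theorem and the dominated convergence theorem.
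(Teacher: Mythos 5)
Your proof is correct, and it shares the paper's overall skeleton (fibrewise convergence from \eqref{transversal-decomposition} plus dominated convergence over $x\in\RR^n$), but it produces the crucial dominating function by a genuinely different means. The paper constructs the uniform bound $\bigl\|\sum_{j=0}^{k}\Psi_j(x)\psi_j\bigr\|_{L^2(I)}\leq C\,\|\Psi(x,\cdot)\|_{L^2(I)}$ by hand: it expands $\psi_j$ and $\phi_j$ in the explicit Dirichlet and Neumann orthonormal systems $\chi_j^D,\chi_j^N$, applies the parallelogram identity, Parseval's identity and a uniform bound on the normalisation constants $A_j$, and so obtains an explicit constant depending only on $|\alpha_0|$ and $d$. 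You instead observe that, since \eqref{transversal-decomposition} is asserted for every $\psi\in L^2(I)$ (i.e. $\{\psi_j\}$ is a Schauder, in fact Riesz, basis with biorthogonal system $\{\phi_j\}$ by \cite{Krejcirik-2006-39}), the partial-sum operators $P_N$ are uniformly bounded by Banach--Steinhaus, which at once gives the integrable majorant $(1+M)^2\|\Psi(x,\cdot)\|_{L^2(I)}^2$; your treatment of measurability and square-integrability of the coefficients $\Psi_j$ via Fubini is also fine, and you use nothing beyond what the paper itself already invokes (the a.e.-fibrewise convergence presupposes the same basis property, and the standing hypothesis \eqref{simple-spectrum} should be kept in force). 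The trade-off is quantitative versus abstract: your route is shorter and works verbatim for any Riesz basis, but it does not yield the explicit intermediate estimate \eqref{suma-pro-dalsi-lemma}, namely $\sum_{j\geq 2}|\Psi_j(x)|^2\leq c^2 d\,(1+\alpha_0^2/\mu_2^2)\,\|\Psi(x,\cdot)\|_{L^2(I)}^2$ with constants depending only on $|\alpha_0|$ and $d$, which the paper re-uses later (in the proof of Lemma \ref{ess-lem2}) to control the resolvent series; following your approach one would have to supply that estimate separately at that point.
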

\begin{proof}
The proof follows in the same as in \cite[Lem. 4.1]{Borisov08}. All we need to do is use the dominated convergence theorem, since we already know that the sum \eqref{bi-expansion} converges to $\Psi$ for almost every $x \in \RR^n$ thanks to \eqref{transversal-decomposition}. We estimate the partial sum of \eqref{bi-expansion}. Let us introduce
\begin{equation}
\chi_j^D(u):= \sqrt{\frac{2}{d}}\sin(\frac{\pi j}{d}u) \quad\mbox{if } j \geq 1
\end{equation}
and
\begin{equation}
\chi_j^N(u) :=
\begin{dcases}
\frac{1}{\sqrt{d}} & \mbox{if } j = 0,\\
\sqrt{\frac{2}{d}}\cos(\frac{\pi j}{d} u)  & \mbox{if } j \geq 1.
\end{dcases}
\end{equation}
Recall that $(\chi_j^D)_{j=1}^{+\infty}$ and $(\chi_j^N)_{j=0}^{+\infty}$ form complete orthonormal systems of functions in $L^2(I)$. The eigenfunctions \eqref{function} of the transversal operator can be expressed by the means of $\chi_j^D$ and $\chi_j^N$ as
\beq
\psi_j(u) = \sqrt{\frac{d}{2}}\left(\chi_j^N(u) - \ii \frac{\alpha_0}{\mu_j} \chi_j^D(u)\right)
\eeq
for $j \geq 2$. Using the parallelogram identity and the orthogonality we obtain
\beq
\bal
\left\|\sum_{j=2}^k \Psi_j(x)\psi_j\right\|^2_{L^2(I)}&\leq d \left\|\sum_{j=2}^k \Psi_j(x)\chi_j^N \right\|^2_{L^2(I)} + d \alpha_0^2\left\|\sum_{j=2}^k \Psi_j(x) \chi_j^D \right\|^2_{L^2(I)} \\
&\leq d \left(1 + \frac{\alpha_0^2}{\mu_2^2}\right) \sum_{j=2}^{+\infty} |\Psi_j(x)|^2.
\eal
\eeq
Further on, we can write $\Psi_j(x)$ as $\Psi_j(x)  = \sqrt{\frac{d}{2}} A_j\left( \Psi_j^N - \ii \frac{\alpha_0}{\mu_j}\Psi_j^D\right)$ with $\Psi_j^{\sharp}(x) := (\chi_j^{\sharp},\Psi(x,\cdot))_{L^2(I)}$ and $\sharp = D, N$. We use the fact that all $|A_j|$ can be estimated by a constant $c$ depending only on $|\alpha_0|$ and $d$, and Parseval identity for $\chi_j^D$ and $\chi_j^N$ to estimate
\beq \label{suma-pro-dalsi-lemma}
\bal
 \sum_{j=2}^{+\infty} |\Psi_j(x)|^2 \leq c^2 d \left( \sum_{j=2}^{+\infty}|\Psi_j^N|^2 + \frac{\alpha_0^2}{\mu_2^2}\sum_{j=2}^{+\infty}|\Psi_j^D|^2\right) \leq c^2 d \left(1 +\frac{\alpha_0^2}{\mu_2^2}\right) \|\Psi(x,\cdot)\|^2_{L^2(I)}.
\eal
\eeq
To estimate the first two terms in the sum \eqref{bi-expansion} we simply use the inequality $|\psi_j|^2 \leq 1 + \frac{\alpha_0^2}{\mu_0^2}$ valid for all $j \geq 0$. We get
\beq
\bal
\left\|\sum_{j=0}^1 \Psi_j(x) \psi_j\right\|^2_{L^2(I)} \leq 2 c^2 d^2\left(1 + \frac{\alpha_0^2}{\mu_0^2}\right)^2 \|\Psi(x,\cdot)\|^2_{L^2(I)}.
\eal
\eeq
Altogether we have
\beq
\left\|\sum_{j=0}^k \Psi_j(x_1) \psi_j\right\|_{L^2(I)} \leq C \|\Psi(x,\cdot)\|_{L^2(I)} \in L^2(\RR^n).
\eeq
Since the constant $C$ does not depend on $k$, we are provided with a uniform estimate and the sum \eqref{bi-expansion} converges to $\Psi$ in $L^2(\Omega)$ norm.
\end{proof}
%
\subsection{Spectrum of the unperturbed Hamiltonian}
%
We aim to proof Proposition \ref{noper}. It is quite straightforward to see that its point spectrum is empty under the hypothesis (\ref{simple-spectrum}), i.e.
\begin{lem} \label{emptypoint}
Let $\alpha_0$ satisfy $\eqref{simple-spectrum}$. Then
\beq
\sigma_{\mathrm{p}}(H_{\alpha_0}) = \emptyset
\eeq
\end{lem}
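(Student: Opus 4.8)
The plan is to use the tensor-product decomposition \eqref{decomposition} of $H_{\alpha_0}$ together with the fact that the transversal operator $-\Delta^I_{\alpha_0}$ has an eigenbasis given by \eqref{function}. Suppose $\Psi \in \Dom(H_{\alpha_0})$ is an eigenfunction with eigenvalue $\lambda$, so $-\Delta\Psi = \lambda\Psi$. First I would expand $\Psi$ in the transversal biorthonormal basis using \eqref{bi-expansion}, writing $\Psi(x,u) = \sum_{j\ge 0}\Psi_j(x)\psi_j(u)$ with $\Psi_j(x) = (\phi_j,\Psi(x,\cdot))_{L^2(I)}$. Projecting the eigenvalue equation onto $\phi_j$ and using that $-\Delta^I_{\alpha_0}\psi_j = \mu_j^2\psi_j$ (and that differentiation in $x$ commutes with the transversal projection), one gets for each $j$ the equation $-\Delta'\Psi_j = (\lambda-\mu_j^2)\Psi_j$ in $L^2(\RR^n)$, where $-\Delta'$ is the free self-adjoint Laplacian on $\RR^n$.

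The next step is to conclude that each $\Psi_j$ must vanish. Since $-\Delta'$ is self-adjoint with $\sigma(-\Delta') = \sigma_{\mathrm{ess}}(-\Delta') = [0,+\infty)$ and empty point spectrum, the equation $-\Delta'\Psi_j = (\lambda-\mu_j^2)\Psi_j$ with $\Psi_j \in L^2(\RR^n)$ forces $\Psi_j = 0$ unless $\lambda - \mu_j^2$ is an eigenvalue of $-\Delta'$; but $-\Delta'$ has no eigenvalues at all, so $\Psi_j = 0$ for every $j$. Hence $\Psi = 0$, and $H_{\alpha_0}$ has no eigenvalues, i.e. $\sigma_{\mathrm{p}}(H_{\alpha_0}) = \emptyset$.

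I would take a little care over the justification that the eigenvalue equation really does decouple coordinate-wise: one needs that $\Psi \in W^{2,2}(\Omega)$ implies $\Psi_j \in W^{2,2}(\RR^n)$ and that the weak Laplacian acts as claimed after projection; this follows from Fubini-type arguments and the fact that the transversal projections $(\phi_j,\,\cdot\,)_{L^2(I)}$ are bounded and commute with $\partial_{x_k}$ in the weak sense, exactly as in the proof of \eqref{bi-expansion}. One could alternatively avoid the regularity discussion entirely by working at the level of the resolvent or the quadratic form, but the direct separation-of-variables computation is the cleanest. The hypothesis \eqref{simple-spectrum} is used only to guarantee that $\{\psi_j\}$ is genuinely a basis (algebraic multiplicity one, biorthonormality \eqref{biortho-relation}); no spectral subtlety beyond that is needed, since the whole point is that the transversal eigenvalues $\mu_j^2$ only shift the continuous spectrum $[0,\infty)$ of $-\Delta'$ and can never create an isolated $L^2$ eigenfunction in the $x$ variable.

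The main obstacle, such as it is, is purely technical: making the projection-and-separation argument rigorous for $W^{2,2}$ functions on the unbounded slab $\Omega$, in particular verifying that $-\Delta'\Psi_j \in L^2(\RR^n)$ and that no boundary terms in $u$ survive (they are killed precisely by the Robin condition defining $\Dom(H_{\alpha_0})$, which is what makes $\psi_j$ the right functions to project against). There is no analytic difficulty of substance — the non-self-adjointness of $-\Delta^I_{\alpha_0}$ is handled entirely by using the \emph{biorthonormal} pair $(\phi_j,\psi_j)$ rather than an orthonormal basis.
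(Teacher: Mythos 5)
Your proposal is correct and follows essentially the same route as the paper: project the eigenvalue equation against the adjoint eigenfunctions $\phi_j$ to obtain $-\Delta'\Psi_j=(\lambda-\mu_j^2)\Psi_j$ with $\Psi_j\in L^2(\RR^n)$, invoke the emptiness of the point spectrum of the free Laplacian to get $\Psi_j=0$ for all $j$, and conclude $\Psi=0$ via the expansion \eqref{bi-expansion}. The technical points you flag (that $\Psi_j\in L^2(\RR^n)$, via Schwarz and Fubini, and that \eqref{simple-spectrum} is only needed for the biorthonormal family) are exactly the ones the paper addresses.
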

\begin{proof}
For the contradiction let us assume that $H_{\alpha_0}$ possesses an eigenvalue $\lambda$ with an eigenfunction $\Psi \in L^2(\Omega)$. We then multiply the eigenvalue equation with $\overline{\phi_j}$ and integrate it over $I$. Adopting the notation $\Psi_j(x) := \left(\phi_j,\Psi(x,\cdot)\right)_{L^2(I)}$ the equation then reads
\beq \label{eigenproblem-in-R}
-\Psi_j'' = (\lambda - \mu_j^2)\Psi_j
\eeq
in $\RR^n$ for every $j \geq 0$. Using Schwarz inequality and Fubini's theorem we see that $\Psi_j \in L^2(\RR^n)$:
\beq
\bal \label{dalsi-odhad-psi}
\|\Psi_j\|_{L^2(\RR^n)}^2 \leq \int_{\RR^n}\|\phi_j(u)\|_{L^2(I)}^2 \|\Psi(x,u)\|_{L^2(I)}^2 \dx = \|\phi_j\|_{L^2(I)}^2 \|\Psi\|_{L^2(\Omega)}^2 < + \infty.
\eal
\eeq
Since the point spectrum of the Laplacian in $\RR^n$ is empty, equation (\ref{eigenproblem-in-R}) only has a trivial solution. Therefore, (\ref{bi-expansion}) yields $\Psi = 0$, which is in contradiction with our hypothesis.
\end{proof}
\begin{rmrk} \label{extend-point}
We can further claim that the set of isolated eigenvalues is always empty, even in the case when the condition $\eqref{simple-spectrum}$ is not satisfied. This is the consequence of the fact that $H_{\alpha_0}$ forms a holomorphic family of operators of type (B) with respect to $\alpha_0$ and hence all its isolated eigenvalues $\mu_j(\alpha_0)$ are analytic functions in $\alpha_0$ \cite[Sec. VII.4]{Kato}. 
\end{rmrk}
The essential spectrum behaves, as can be expected - it consists of the essential spectrum of the free Laplacian in $\RR^n$, shifted by the lowest-lying eigenvalue of $-\Delta^I$.
\begin{lem} \label{ess-lem1}
Let $\alpha_0 \in \RR$. Then $[\mu_0^2, +\infty) \subset \sigma_{\mathrm{ess}}(H_{\alpha})$
\end{lem}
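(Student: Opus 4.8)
The plan is to construct an explicit singular (Weyl) sequence for every $\lambda \in [\mu_0^2, +\infty)$ by tensoring the transversal eigenfunction $\psi_0$ belonging to $\mu_0^2$ with a suitable oscillating, slowly-modulated sequence in the longitudinal variables $x \in \RR^n$. Writing $\lambda = \mu_0^2 + k^2$ with $k \geq 0$, I would fix a direction, say $e_1 \in \RR^n$, and take as a first ansatz $\Psi_m(x,u) := c_m\, \varphi\!\left(\tfrac{x}{m}\right) \ee^{\ii k x_1}\, \psi_0(u)$, where $\varphi \in C_0^\infty(\RR^n)$ is a fixed bump function and $c_m$ is a normalisation constant chosen so that $\|\Psi_m\|_{L^2(\Omega)} = 1$. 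Because $\psi_0$ satisfies exactly the transversal boundary conditions $\psi_0' + \ii\alpha_0\psi_0 = 0$ at $\partial I$, each $\Psi_m$ lies in $\Dom(H_{\alpha_0})$.

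First I would verify the normalisation: since $\|\psi_0\|_{L^2(I)}$ is a fixed nonzero constant and $\|\varphi(\cdot/m)\|_{L^2(\RR^n)}^2 = m^n \|\varphi\|_{L^2(\RR^n)}^2$, the choice $c_m \sim m^{-n/2}$ works. Next I would compute $(H_{\alpha_0} - \lambda)\Psi_m = -\Delta\Psi_m - \lambda\Psi_m$. Using the decomposition $-\Delta = -\Delta' \otimes \JJ^I + \JJ^{\RR^n} \otimes (-\partial_u^2)$ and $-\psi_0'' = \mu_0^2\psi_0$, the transversal part cancels $\mu_0^2$, and one is left with $\big(-\Delta' - k^2\big)\big(c_m\varphi(x/m)\ee^{\ii k x_1}\big)\,\psi_0(u)$. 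The plane wave $\ee^{\ii k x_1}$ is a generalised eigenfunction of $-\Delta'$ with eigenvalue $k^2$, so the leading term cancels and only terms with at least one derivative falling on $\varphi(\cdot/m)$ survive; these carry a factor $m^{-1}$ or $m^{-2}$ relative to $\Psi_m$, hence $\|(H_{\alpha_0}-\lambda)\Psi_m\|_{L^2(\Omega)} = \mathcal{O}(m^{-1}) \to 0$. Finally I would argue that $(\Psi_m)$ has no convergent subsequence: the supports in $x$ spread out (or, alternatively, $\Psi_m \rightharpoonup 0$ weakly since for any fixed $L^2$ test function the overlap tends to $0$, while $\|\Psi_m\| = 1$), so no subsequence can converge in norm. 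This exhibits $\lambda \in \sigma_{\mathrm{ess}}(H_{\alpha_0})$, and since $[\mu_0^2,+\infty)$ is arbitrary in $\lambda$, the inclusion follows.

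\textbf{Main obstacle.} The routine part is the derivative bookkeeping showing the remainder is $\mathcal{O}(m^{-1})$; the only genuine subtlety is handling the endpoint $\lambda = \mu_0^2$ (i.e.\ $k=0$), where the plane wave degenerates to a constant and one must still check the bump-function construction gives a legitimate non-precompact sequence — this is fine since $\varphi(\cdot/m)$ alone already spreads mass to infinity. One should also note that when $|\alpha_0| > \pi/d$ the lowest transversal eigenfunction realising $\mu_0^2 = (\pi/d)^2$ is $\psi_1$ rather than $\psi_0$, so the construction should use whichever $\psi_j$ attains the infimum $\mu_0^2$; the argument is otherwise identical. A minor alternative, avoiding even the weak-convergence argument, is to translate the bump functions, $\varphi\big(x - R_m e_1\big)$ with $R_m \to \infty$ fast enough that the supports are pairwise disjoint, making non-precompactness immediate from orthogonality.
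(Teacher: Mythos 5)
Your proposal is correct and follows essentially the same route as the paper: the paper also tensors the lowest transversal eigenfunction $\psi_0$ (which in its indexing always corresponds to $\mu_0^2$) with a singular sequence of $-\Delta'$ for $z=\lambda-\mu_0^2\geq 0$, the only difference being that it invokes such a sequence abstractly from $\sigma_{\mathrm{ess}}(-\Delta')=[0,+\infty)$ while you construct the standard modulated-bump Weyl sequence explicitly. Your derivative bookkeeping, normalisation and non-precompactness arguments are all sound, so no gap.
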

\begin{proof}
Let $\lambda \in [\mu_0^2,+\infty)$. It can be expressed as $\lambda = \mu_0^2 + z$, where $z \in [0,+\infty$). Let $\left(\Phi_k\right)_{k=1}^{+\infty} \subset L^2(\RR^n)$ be a singular sequence of $-\Delta'$ corresponding to $z$, i.e. $\|\Phi_k\|_{L^2(\RR^n)} = 1$, $\left(\Phi_k\right)_{k=1}^{+\infty}$ does not contain converging subsequence and $(-\Delta' - z)\Phi_k \rightarrow 0$. We define sequence $(\Psi_k)_{k=1}^{+\infty}$ by $\Psi_k(x,u) := \Phi_k(x) \psi_0(u) / \|\psi_0\|_{L^2(I)}$. It can be easily seen that $\|\Psi_k\|_{L^2(\Omega)} = 1$ for all $k \in \NN$ and $\Psi_k \rightarrow 0$ and that $(H_{\alpha_0} - \lambda)\Psi_k \rightarrow 0$ since
\beq
\bal
(H_{\alpha_0} - z - \mu_0^2)\Psi_k =\left((-\Delta' - z) \Phi_k\right)\psi_0/\|\psi_0\|_{L^2(I)} \rightarrow 0.
\eal
\eeq
In other words, $(\Psi_k)_{k=1}^{+\infty}$ forms a singular sequence for $\lambda$ and it is therefore part of the essential spectrum.
\end{proof}
The opposite inclusion can be seen by employing the decomposition of the resolvent into the transverse biorthonormal basis.
\begin{lem} \label{ess-lem2}
Let $\alpha_0$ satisfy \eqref{simple-spectrum}. Then $\CC \setminus [\mu_0^2,+\infty) \subset \rho(H_{\alpha_0})$ and for any $\lambda \in \CC \setminus [\mu_0^2, +\infty)$ we have
\beq \label{rezolvent-suma}
(H_{\alpha_0} - \lambda)^{-1} = \sum_{j=0}^{+\infty} (-\Delta' + \mu_j^2 - \lambda)^{-1} B_j.
\eeq
Here $B_j$ is a bounded operator on $L^2(\Omega)$ defined by 
\beq
\left(B_j \Psi\right)(x,u) := \left(\Psi(x,\cdot),\phi_j\right)_{L^2(\Omega)} \psi_j(u)
\eeq
for $\Psi \in L^2(\Omega)$ and $(-\Delta' + \mu_j^2 - \lambda)^{-1} $ abbreviates $(-\Delta' + \mu_j^2 - \lambda)^{-1} \otimes \JJ$.
\end{lem}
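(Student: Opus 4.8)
The plan is to establish the resolvent formula \eqref{rezolvent-suma} by a direct verification argument, and then read off the spectral inclusion from it. The starting point is the observation (from Proposition~\ref{noper}'s preliminary discussion) that on each transverse mode the operator $H_{\alpha_0}$ decomposes as $-\Delta' + \mu_j^2$, so the natural candidate for the resolvent is the right-hand side of \eqref{rezolvent-suma}. First I would check that for $\lambda \in \CC \setminus [\mu_0^2,+\infty)$ the expression $R_\lambda := \sum_{j=0}^{+\infty}(-\Delta'+\mu_j^2-\lambda)^{-1}B_j$ defines a bounded operator on $L^2(\Omega)$. Each summand is bounded: the operator $(-\Delta'+\mu_j^2-\lambda)^{-1}$ is bounded with norm at most $\mathrm{dist}(\lambda,[\mu_j^2,+\infty))^{-1} \le \mathrm{dist}(\lambda,[\mu_0^2,+\infty))^{-1}$ since $\sigma(-\Delta'+\mu_j^2) = [\mu_j^2,+\infty) \subset [\mu_0^2,+\infty)$, and $B_j$ is bounded with a norm controlled by $\|\phi_j\|_{L^2(I)}\|\psi_j\|_{L^2(I)}$. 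For the convergence of the series I would use the decomposition \eqref{bi-expansion} together with the uniform estimate $\big\|\sum_{j=0}^k \Psi_j(x)\psi_j\big\|_{L^2(I)} \le C\|\Psi(x,\cdot)\|_{L^2(I)}$ proved in the preceding proposition, which shows that the partial sums of $R_\lambda \Psi$ are Cauchy in $L^2(\Omega)$; the point is that the $B_j\Psi$ are, up to the biorthogonality, almost orthogonal pieces of $\Psi$, so Bessel-type bounds apply.

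Next I would verify the two algebraic identities $(H_{\alpha_0}-\lambda)R_\lambda = \JJ$ and $R_\lambda(H_{\alpha_0}-\lambda) = \JJ$. For the first, I would apply $R_\lambda$ to an arbitrary $\Psi \in L^2(\Omega)$, note that $R_\lambda \Psi$ has transverse expansion coefficients $(-\Delta'+\mu_j^2-\lambda)^{-1}\Psi_j$, and check that this element lies in $\Dom(H_{\alpha_0}) = \Dom(H_{\alpha_0})$: the $u$-dependence through $\psi_j$ automatically satisfies the Robin boundary conditions at $\partial I$ because each $\psi_j$ does, and the $W^{2,2}$ regularity follows from the fact that $(-\Delta'+\mu_j^2-\lambda)^{-1}\Psi_j \in W^{2,2}(\RR^n)$ together with $\partial_u^2$ acting as multiplication by $-\mu_j^2$ on each mode. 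Applying $H_{\alpha_0} = -\Delta' - \partial_u^2$ then gives, mode by mode, $(-\Delta'+\mu_j^2-\lambda)(-\Delta'+\mu_j^2-\lambda)^{-1}\Psi_j = \Psi_j$, so the sum telescopes back to $\Psi$. The second identity is analogous: for $\Psi \in \Dom(H_{\alpha_0})$ one checks that the transverse coefficients of $H_{\alpha_0}\Psi$ are $(-\Delta'+\mu_j^2)\Psi_j$ (here one uses that the boundary conditions on $\Psi$ ensure no boundary terms spoil the identification of the transverse mode of $-\partial_u^2\Psi$ with $\mu_j^2\Psi_j$), and then $R_\lambda$ applied to this reproduces $\Psi$. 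Having both identities, $\lambda \in \rho(H_{\alpha_0})$ and $(H_{\alpha_0}-\lambda)^{-1} = R_\lambda$, which is exactly \eqref{rezolvent-suma}. The inclusion $\CC \setminus [\mu_0^2,+\infty) \subset \rho(H_{\alpha_0})$ is then immediate.

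The main obstacle I expect is the justification of the convergence of the operator series and, more delicately, the verification that $R_\lambda\Psi$ genuinely lies in the operator domain $\Dom(H_{\alpha_0})$ as defined by \eqref{H-waveguide}, i.e. in $W^{2,2}(\Omega)$ with the trace boundary conditions, rather than merely in a formal mode-by-mode sense. One has to control the sum $\sum_j \mu_j^2 \|(-\Delta'+\mu_j^2-\lambda)^{-1}\Psi_j\|_{L^2(\RR^n)}^2$ and the corresponding sums for the second $x$-derivatives, using that $\mu_j^2 \sim (j\pi/d)^2$ grows quadratically while $\|(-\Delta'+\mu_j^2-\lambda)^{-1}\|$ decays like $\mu_j^{-2}$ for large $j$, so the products stay bounded and the relevant series are dominated by $\sum_j |\Psi_j^{\sharp}(x)|^2$-type sums that were already shown to be summable in \eqref{suma-pro-dalsi-lemma}. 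Once the right resolvent bounds and the Bessel-type estimates are in place, closing the domain argument and the two-sided inverse identities is routine; I would lean on the analogous computation in \cite[Sec. 4]{Borisov08} for the planar case and remark that the general $n$ case differs only in replacing the one-dimensional $-\Delta'$ by its $n$-dimensional counterpart, which does not affect any of the transverse estimates.
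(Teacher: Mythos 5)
Your argument is sound and arrives at the same formula, but it closes the proof by a genuinely different mechanism than the paper. The paper also sets $U_j := (-\Delta'+\mu_j^2-\lambda)^{-1}\Psi_j$ and $R_j := U_j\psi_j$ and, using the decay estimates \eqref{odhad-U}--\eqref{odhad-UU} together with the Bessel-type bound \eqref{suma-pro-dalsi-lemma}, proves convergence of $\sum_j R_j$ -- but only in $W^{1,2}(\Omega)$, i.e. in the form domain. It then verifies the weak identity $h_{\alpha_0}(R,\Phi)-\lambda(R,\Phi)_{L^2(\Omega)}=(\Psi,\Phi)_{L^2(\Omega)}$ for all $\Phi\in W^{1,2}(\Omega)$ and invokes the form characterization $H_{\alpha_0}=\tilde H_{\alpha_0}$ (Theorem \ref{m-sec-waveguide}, whose regularity content is Lemma \ref{m-sec-lemma}) to conclude $R\in\Dom(H_{\alpha_0})$ and $(H_{\alpha_0}-\lambda)R=\Psi$; injectivity, hence $\lambda\in\rho(H_{\alpha_0})$, is supplied by the empty point spectrum (Lemma \ref{emptypoint}). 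You instead check directly that $R_\lambda\Psi$ belongs to the operator domain in \eqref{H-waveguide} and prove both one-sided inverse identities, which forces you to push the series convergence up to $W^{2,2}(\Omega)$ and pass the Robin condition to the limit by trace continuity. That route does work, with two points to watch: for the term $\partial_u^2 R_j=-\mu_j^2U_j\psi_j$ the relevant weight is $\mu_j^4\|U_j\|^2$ rather than the $\mu_j^2\|U_j\|^2$ you wrote, which is still summable since $\|U_j\|\lesssim\|\Psi_j\|/\mu_j^2$ and $\sum_j\|\Psi_j\|_{L^2(\RR^n)}^2\lesssim\|\Psi\|_{L^2(\Omega)}^2$; and because $\{\psi_j\}$ is not orthogonal, every partial-sum estimate (including those involving $\psi_j'$ and $\psi_j''$) must be routed through the orthonormal systems $\chi_j^D,\chi_j^N$ as in \eqref{suma-pro-dalsi-lemma}, while the second $x$-derivatives of $U_j$ require the standard elliptic bound for $-\Delta'$ on $\RR^n$. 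What the paper's softer route buys is that no $W^{2,2}$ estimate is needed at this stage, the hard regularity having been done once and for all in Lemma \ref{m-sec-lemma}; what your route buys is a self-contained two-sided inverse, so you never need Lemma \ref{emptypoint} to get injectivity.
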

\begin{proof}
We proceed with the proof as in \cite[Lem. 4.3]{Borisov08}. Let $\lambda \in \CC\setminus [\mu_0^2, +\infty)$ and $\Psi \in L^2(\Omega)$. We denote $\Psi_j(x) := (\phi_j,\Psi(x,\cdot))_{L^2(I)} \in L^2(\RR^n)$ and $U_j := (-\Delta' + \mu_j^2 - \lambda)^{-1} \Psi_j \in L^2(\RR^n)$ for $j\geq 0$. Its norm can be estimated as
\beq \label{odhad-U}
\bal
\|U_j\|_{L^2(\RR^n)} &\leq \frac{\|\Psi_j\|_{L^2(\RR^n)}}{\mathrm{dist}(\lambda,[\mu_j^2,+\infty))} \leq C_1 \frac{\|\Psi_j\|_{L^2(\RR^n)}}{j^2 + 1}.
\eal
\eeq
The constant $C_1$ depends only on $|\alpha_0|$, $d$ and $\lambda$. Similarly, we estimate $|\partial_{x_l} U_j|$ for every $j\geq 0$, $l \geq 1$ by its gradient in $\RR^n$ and we obtain 
\beq
\bal \label{odhad-UU}
\|\nabla' U_j\|_{L^2(\RR^n)}^2 \leq C_1 \frac{\|\Psi_j\|_{L^2(\RR^n)}^2}{j^2 +1} + C_1^2 |\mu_j^2 - \lambda| \frac{\|\Psi_j\|_{L^2(\RR^n)}^2}{(j^2 + 1)^2}.
\eal
\eeq
We define a function $R_j(x) := U_j(x) \psi_j(u)$ (which is exactly the summand of the sum \eqref{rezolvent-suma}). It belongs to $W^{2,1}(\Omega)$ and this is true for their infinite sum too as we shall see. Employing estimates \eqref{suma-pro-dalsi-lemma} and \eqref{odhad-U} together with Fubini's theorem yields
\beq
\bal
\left\|\sum_{j=2}^k R_j \right\|^2_{L^2(\Omega)} 
& \leq d^2 \left( 1 + \frac{\alpha_0^2}{\mu_2^2}\right)^2 C_1^2 \sum_{j=2}^k \frac{\|\Psi_j\|_{L^2(\RR^n)}}{(j^2 + 1)^2}\\
&\leq d^2 \left( 1 + \frac{\alpha_0^2}{\mu_2^2}\right)^2 C_1^2 \int_{\RR^n} \sum_{j=2}^k |(\phi_j,\Psi(x,\cdot))_{L^2(I)}|^2 \dx \\
& \leq c^2 d^3 \left( 1 + \frac{\alpha_0^2}{\mu_2^2}\right)^3 C_1^2 \|\Psi\|_{L^2(\Omega)}.
\eal
\eeq
We remind that constant $c$ depends only on $|\alpha_0|$, $d$ and $\lambda$, just as $C_1$. In exactly the same manner we estimate $\|\sum_{j=1}^k \partial_{x_l}R_j\|_{L^2(\Omega)}$ for every $l\geq 1$ using the estimate \eqref{odhad-UU} instead of \eqref{odhad-U}. Employing the estimate $|\partial_u\psi_j| \leq \alpha_0^2 + \mu_j^2$ valid for $j\geq 1$, we readily estimate the norm of $\sum_{j=2}^k \partial_u R_j$:
\beq
\bal
\left\|\sum_{j=1}^k \partial_u R_j\right\|^2_{L^2(\Omega)} 
&\leq d^2  C_1^2 \sum_{j=2}^k \left(\frac{\mu_j^2 + \alpha_0^2}{j^2 + 1}\right)^2 \|\Psi_j\|_{L^2(\RR^n)}^2\\
&\leq d^2  C_1^2 C_2^2 \int_{\RR^n} \sum_{j=2}^k |(\phi_j,\Psi(x,\cdot))_{L^2(I)}|^2 \dx \\
&\leq c^2 d \left( 1 + \frac{\alpha_0^2}{\mu_2^2}\right) C_1^2 C_2^2 \|\Psi\|_{L^2(\Omega)},
\eal
\eeq
where $C_2$ is a constant bounding the sequence $\left(\frac{\mu_j^2 + \alpha_0^2}{j^2 + 1}\right)_{j=2}^{+\infty}$, depending only on $|\alpha_0|$ and $d$. Regarding the sum of the first two terms, we obtain
\beq
\bal
\left\|\sum_{j=0}^1 R_j\right\|^2_{L^2(\Omega)} \leq d^2\left(1 + \frac{\alpha_0^2}{\mu_0^2}\right)^2 C_1^2 \sum_{j=0}^1 \frac{\|\Psi_j\|^2_{L^2(\RR^n)}}{(j^2 + 1)^2} \leq c^2 d^3\left(1 + \frac{\alpha_0^2}{\mu_0^2}\right)^3 C_1^2 \|\Psi\|_{L^2(\Omega)}
\eal
\eeq
and similarly for $\partial_{x_l}R_j$ and $\partial_u R_j$. Altogether we uniformly estimated the partial sum of $R_j$ and of its derivatives, and therefore the series $\sum_{j=0}^{+\infty}R_j$ converges in $W^{1,2}(\Omega)$ to a function $R$ and 
\beq
\|R\|_{W^{1,2}(\Omega)} \leq K \|\Psi\|_{L^2(\Omega)},
\eeq
where $K$ depends only on $|\alpha_0|$, $d$ and $\lambda$. It is easily seen that $R$ satisfies the identity 
\beq
h_{\alpha_0}(R,\Phi) - \lambda(R,\Phi)_{L^2(\Omega)} = (\Psi,\Phi)_{L^2(\Omega)}
\eeq
for all $\Phi \in W^{1,2}$. Therefore, $R \in \Dom(H_{\alpha_0})$ and $(H_{\alpha_0} - \lambda) R = \Psi$, i.e. $R = (H_{\alpha_0} - \lambda)^{-1}\Psi$.
\end{proof}
\begin{proof}[Proof of Proposition \ref{noper}]
The first inequality follows from Lemma \ref{emptypoint}. From Lemmas \ref{ess-lem1} and \ref{ess-lem2} we know that the second equality holds for all $\alpha_0$ satisfying \eqref{simple-spectrum}. This result extends to all $\alpha_0$ in view of the fact that $H_{\alpha_0}$ forms a holomorphic family of operators of type (B) with respect to $\alpha_0$ (cf. Remark \ref{extend-point}).
\end{proof}
%
\subsection{Stability of the essential spectrum}
%
Our goal is to find conditions under which a single bound state arises as a consequence of a perturbation of the boundary conditions. Generally, it could happen that although it appears, the essential spectrum changes in such a way that it is absorbed in it. Therefore, we first investigate the stability of the essential spectrum under perturbations of uniform boundary conditions studied in detail in previous section and conclude with the proof of Theorem \ref{per}. Let us state an auxiliary lemma.
\begin{lem} \label{stability-lemma}
Let $\alpha_0 \in \RR$ and $\varphi \in L^2(\partial\Omega)$. There exist positive constants $c$ and $C$, depending on $d$ and $|\alpha_0|$, such that any weak solution $\Psi \in W^{1,2}(\Omega)$ of the boundary value problem
\beq \label{stability-equation}
\begin{dcases}
(-\Delta - \lambda)\Psi \!\!\!&= 0 \qquad \mathrm{in}\quad \Omega,\\
(\partial_u + \ii \alpha_0) \Psi \!\!\!&= \varphi \qquad \mathrm{on} \quad \partial\Omega,
\end{dcases}
\eeq
with any $\lambda\leq -c$, satisfies the estimate
\beq \label{nejaky-odhad}
\|\Psi\|_{W^{1,2}(\Omega)} \leq C \|\varphi\|_{L^2(\partial \Omega)}.ě
\eeq
\end{lem}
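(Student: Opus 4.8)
The plan is to establish the a priori estimate \eqref{nejaky-odhad} by testing the weak formulation of \eqref{stability-equation} with $\Psi$ itself and then absorbing the boundary term using Lemma \ref{prvni-lemma}. First I would write out what it means for $\Psi$ to be a weak solution: for every $\Phi \in W^{1,2}(\Omega)$ one has
\beq
\int_\Omega \overline{\nabla\Phi}\cdot\nabla\Psi \dx\du + \ii\, h^2_{\alpha_0}[\Phi,\Psi] - \lambda(\Phi,\Psi)_{L^2(\Omega)} = (T\Phi,\varphi)_{L^2(\partial\Omega)},
\eeq
which is just the form identity $h_{\alpha_0}(\Phi,\Psi) - \lambda(\Phi,\Psi) = (T\Phi,\varphi)$ coming from integrating the PDE against $\Phi$ and using the inhomogeneous boundary condition. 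Setting $\Phi = \Psi$ gives
\beq
\|\nabla\Psi\|_{L^2(\Omega)}^2 + \ii\, h^2_{\alpha_0}[\Psi] - \lambda\|\Psi\|_{L^2(\Omega)}^2 = (T\Psi,\varphi)_{L^2(\partial\Omega)}.
\eeq

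Next I would take the real part. Since $h^2_{\alpha_0}[\Psi]$ is real (it is $\int_{\RR^n}\alpha_0(|\Psi(x,d)|^2 - |\Psi(x,0)|^2)\dx$), the term $\ii\,h^2_{\alpha_0}[\Psi]$ is purely imaginary and drops out, leaving
\beq
\|\nabla\Psi\|_{L^2(\Omega)}^2 + (-\lambda)\|\Psi\|_{L^2(\Omega)}^2 = \Re(T\Psi,\varphi)_{L^2(\partial\Omega)} \leq \|T\Psi\|_{L^2(\partial\Omega)}\|\varphi\|_{L^2(\partial\Omega)}.
\eeq
This is the clean point of the argument: the non-self-adjoint boundary term contributes nothing to the energy balance, exactly as the balanced gain/loss structure suggests. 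The right-hand side is controlled by the trace inequality $\|T\Psi\|_{L^2(\partial\Omega)} \leq C_T\|\Psi\|_{W^{1,2}(\Omega)}$; alternatively, to get a sharper constant one can use the same differentiate-$|\Psi|^2$ trick as in Lemma \ref{prvni-lemma} to bound $\|T\Psi\|_{L^2(\partial\Omega)}^2 \leq \delta'\|\nabla\Psi\|_{L^2(\Omega)}^2 + C(\delta')\|\Psi\|_{L^2(\Omega)}^2$. Combining with Young's inequality on the product $\|T\Psi\|\,\|\varphi\|$, one obtains
\beq
\|\nabla\Psi\|_{L^2(\Omega)}^2 + (-\lambda)\|\Psi\|_{L^2(\Omega)}^2 \leq \tfrac12\|\nabla\Psi\|_{L^2(\Omega)}^2 + C(\delta')\|\Psi\|_{L^2(\Omega)}^2 + C\|\varphi\|_{L^2(\partial\Omega)}^2,
\eeq
and choosing $c := 2C(\delta')$ (so that for $\lambda \leq -c$ the term $C(\delta')\|\Psi\|^2$ is absorbed into $\tfrac12(-\lambda)\|\Psi\|^2$) yields $\|\nabla\Psi\|_{L^2(\Omega)}^2 + \|\Psi\|_{L^2(\Omega)}^2 \lesssim \|\varphi\|_{L^2(\partial\Omega)}^2$, which is \eqref{nejaky-odhad}.

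The main obstacle, such as it is, is bookkeeping rather than conceptual: one must be careful that the trace term really is absorbable, i.e. that the constant $C(\delta')$ coming from the trace estimate depends only on $d$ (and, via the $\ii\alpha_0$ coefficient, on $|\alpha_0|$) and not on $\lambda$ or $\Psi$, so that the threshold $c$ is uniform. A secondary point is justifying the test-function computation itself for a merely $W^{1,2}$ weak solution — but this is immediate since $\Psi \in W^{1,2}(\Omega) = \Dom(h_{\alpha_0})$ is an admissible test function in the form identity, so no density argument beyond the one already invoked in Lemma \ref{prvni-lemma} is needed. It is also worth noting that the same computation shows $\Re\lambda$ being very negative forces coercivity regardless of $\Im\lambda$, so the restriction $\lambda \leq -c$ real can be relaxed to $\Re\lambda \leq -c$ if later needed; I would state it as in the lemma for simplicity.
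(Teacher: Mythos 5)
Your proof is correct and takes essentially the same route as the paper: test the weak formulation with $\Psi$ itself, control the boundary contributions by a trace estimate plus Young's inequality (with constants depending only on $d$ and $|\alpha_0|$), and absorb them by taking $\lambda$ sufficiently negative. The only, harmless, difference is that you take the real part so the purely imaginary term $\ii\alpha_0\int_{\RR^n}\bigl(|\Psi(x,d)|^2-|\Psi(x,0)|^2\bigr)\dx$ drops out identically, whereas the paper keeps it and bounds its modulus via the $\partial_u|\Psi|^2$ trick before absorbing it.
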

\begin{proof}
Multiplying the first equation of \eqref{stability-equation} by $\overline{\Psi}$ and integrating over $\Omega$ yields
\beq
\bal \label{one-use}
\int_{\Omega}\overline{\Psi(x,u)}(-\Delta-\lambda)\Psi(x,u) \dx\du &= \ii \int_{\RR^n} \alpha_0 |\Psi(x,d)|^2 dx - \ii \int_{\Omega} \alpha_0 |\Psi(x,0)|^2 \dx \\
& - \int_{\RR^n} \overline{\Psi(x,d)}\varphi(x,d) \dx + \int_{\RR^n} \overline{\Psi(x,0)}\varphi(x,0) \dx \\
& + \|\nabla\Psi\|_{L^2(\Omega)}^2 - \lambda\|\Psi\|_{L^2(\Omega)}^2 = 0
\eal
\eeq
We readily estimate using Schwarz and Young inequality
\beq
\bal
\left| \int_{\RR^n} \alpha_0 |\Psi(x,d)|^2 dx - \int_{\Omega} \alpha_0 |\Psi(x,0)|^2 \dx \right| &= \left|\int_{\Omega} \alpha_0\partial_u |\Psi(x)|^2 \dx\right| \\
&= 2 \left|\alpha_0\right|\left| \Re (\partial_u \Psi, \Psi)\right| \\
&\leq 2 \left|\alpha_0\right| \|\partial_u \Psi\|_{L^2(\Omega)} \|\Psi\|_{L^2(\Omega)}  \\
& \leq \left|\alpha_0\right| \left(\delta \|\nabla \Psi\|_{L^2(\Omega)}^2 + \delta^{-1} \|\Psi\|_{L^2(\Omega)}^2\right)
\eal
\eeq
and
\beq
\bal
\left| - \int_{\RR^n} \overline{\Psi(x,d)}\varphi(x,d) \dx + \int_{\RR^n} \overline{\Psi(x,0)}\phi(x,0) \dx \right| &\leq
2 \|T \Psi\|_{L^2(\partial \Omega)} \|\varphi\|_{L^2(\partial \Omega)} \\
&\leq \delta \widetilde C \|\Psi\|_{W^{1,2}(\Omega)}^2 + \delta^{-1} \|\varphi\|_{L^2(\partial \Omega)}^2,
\eal
\eeq
where $\delta >0$ and $\widetilde C$ is the constant from the embedding of $W^{1,2}(\Omega)$ in $L^2(\Omega)$ depending only on $d$. Putting these estimates into \eqref{one-use} we get
\beq
\left(1 - \delta |\alpha_0| - \delta \widetilde C\right) \|\Psi\|_{W^{1,2}(\Omega)}^2 \leq \left(1-\delta |\alpha_0| + \delta^{-1} |\alpha_0| +\lambda\right)\|\Psi\|_{L^2(\Omega)}^2+ \delta^{-1} \|\varphi\|_{L^2(\partial \Omega)}^2 
\eeq
Taking $\delta$ sufficiently small and $\lambda$ sufficiently large negative, coefficients standing by $\|\Psi\|_{W^{1,2}(\Omega)}$ and $\|\Psi\|_{L^2(\Omega)}$ are positive and this yields the inequality \eqref{nejaky-odhad}.
\end{proof}
Using this lemma we are able to prove the following result.
\begin{prop} \label{compres}
Let $\alpha - \alpha_0 \in W^{1,\infty}(\RR)$ with $\alpha_0 \in \RR$ such that \eqref{constantinfinity} holds. Then $(H_{\alpha}-\lambda)^{-1} - (H_{\alpha_0} - \lambda)^{-1}$ is compact in $L^2(\Omega)$ for any $\lambda \in \rho(H_{\alpha})\cap\rho(H_{\alpha_0})$.
\end{prop}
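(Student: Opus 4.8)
The idea is to realise the resolvent difference as the solution operator of an auxiliary boundary value problem whose inhomogeneity lives on $\partial\Omega$ and is proportional to $\beta:=\alpha-\alpha_0$, and then to exploit that $\beta$ vanishes at infinity: multiplication by such a function is a \emph{compact} map from $W^{1,2}$ into $L^2$ on the (unbounded) boundary, and it is this fact that upgrades the bounded solution estimate of Lemma \ref{stability-lemma} to compactness of the resolvent difference.

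First take a real $\lambda\leq -c$, with $c$ the constant of Lemma \ref{stability-lemma} (such $\lambda$ lies in $\rho(H_\alpha)\cap\rho(H_{\alpha_0})$ by the parabolic spectral enclosure). For $\Psi\in L^2(\Omega)$ put $\Phi:=(H_\alpha-\lambda)^{-1}\Psi$, $\Phi_0:=(H_{\alpha_0}-\lambda)^{-1}\Psi$ and $W:=\Phi-\Phi_0\in W^{1,2}(\Omega)$. Using the boundary conditions defining $\Dom(H_\alpha)$ and $\Dom(H_{\alpha_0})$ one checks that $W$ is a weak solution of
\beq
\begin{dcases}
(-\Delta-\lambda)W = 0 & \mathrm{in}\ \Omega,\\
(\partial_u+\ii\alpha_0)W = -\ii\,\beta\,\Phi & \mathrm{on}\ \partial\Omega,
\end{dcases}
\eeq
the boundary datum being $(\partial_u+\ii\alpha_0)\Phi-(\partial_u+\ii\alpha_0)\Phi_0=-\ii(\alpha-\alpha_0)\Phi$; it is moreover the unique weak solution, since any solution with zero datum lies in $\Dom(H_{\alpha_0})$ and is annihilated by $H_{\alpha_0}-\lambda$. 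Applying Lemma \ref{stability-lemma} with $\varphi=-\ii\,\beta\,T\Phi$, where $T\colon W^{1,2}(\Omega)\to L^2(\partial\Omega)$ is the trace, gives
\beq
\left\|\left((H_\alpha-\lambda)^{-1}-(H_{\alpha_0}-\lambda)^{-1}\right)\Psi\right\|_{W^{1,2}(\Omega)}=\|W\|_{W^{1,2}(\Omega)}\leq C\,\|\beta\,T\Phi\|_{L^2(\partial\Omega)}.
\eeq

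The next step is to show that $\Psi\mapsto\beta\,T(H_\alpha-\lambda)^{-1}\Psi$ is compact from $L^2(\Omega)$ to $L^2(\partial\Omega)$; granting this, the displayed bound forces $(H_\alpha-\lambda)^{-1}-(H_{\alpha_0}-\lambda)^{-1}$ to map weakly null sequences to norm null ones in $L^2(\Omega)$, hence to be compact. I would factor that map as the composition of $(H_\alpha-\lambda)^{-1}$, bounded from $L^2(\Omega)$ into $\Dom(H_\alpha)\subset W^{2,2}(\Omega)$ (elliptic regularity and the closed graph theorem, as implicit in Lemma \ref{m-sec-lemma}); the trace, bounded from $W^{2,2}(\Omega)$ to $W^{1,2}(\partial\Omega)$; and multiplication by $\beta$, \emph{compact} from $W^{1,2}(\partial\Omega)$ to $L^2(\partial\Omega)$. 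For the last point, $\partial\Omega$ consists of two copies of $\RR^n$ and $\beta\in L^\infty(\RR^n)$ with $\beta(x)\to0$ as $|x|\to\infty$: the truncations $\beta_R:=\beta\,\mathbf{1}_{\{|x|<R\}}$ act compactly $W^{1,2}(\partial\Omega)\to L^2(\partial\Omega)$ by Rellich--Kondrachov on a bounded set, while $\|(\beta-\beta_R)\cdot\|\leq\sup_{|x|\geq R}|\beta(x)|\to0$, so the limit operator is compact as well.

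Finally, to pass from $\lambda\leq -c$ to an arbitrary $\lambda\in\rho(H_\alpha)\cap\rho(H_{\alpha_0})$, write $K(\mu):=(H_\alpha-\mu)^{-1}-(H_{\alpha_0}-\mu)^{-1}$; a short computation with the resolvent identities yields, for a fixed $\lambda_0\leq -c$,
\beq
K(\lambda)=\left(I+(\lambda-\lambda_0)(H_\alpha-\lambda)^{-1}\right)K(\lambda_0)\left(I+(\lambda-\lambda_0)(H_{\alpha_0}-\lambda)^{-1}\right),
\eeq
which exhibits $K(\lambda)$ as a bounded operator times the compact operator $K(\lambda_0)$ times a bounded operator, hence compact. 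The only genuinely delicate point I anticipate is the compactness of multiplication by $\beta$ on the unbounded boundary $\partial\Omega$: ordinary Rellich compactness is unavailable there, and one must combine the local compactness with the decay hypothesis \eqref{constantinfinity} via the truncation argument; everything else is bookkeeping resting on Lemma \ref{stability-lemma} and standard trace and elliptic estimates.
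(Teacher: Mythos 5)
Your proposal is correct and follows essentially the same route as the paper: reduce to a sufficiently negative $\lambda$, write the resolvent difference as the weak solution of the auxiliary boundary value problem with datum $-\ii\,\beta\,T(H_\alpha-\lambda)^{-1}\Psi$, invoke the a priori bound of Lemma \ref{stability-lemma}, prove compactness of $\beta\,T(H_\alpha-\lambda)^{-1}$ by truncating $\beta$ (Rellich--Kondrachov on bounded pieces plus operator-norm convergence from the decay \eqref{constantinfinity}), and extend to general $\lambda$ via the resolvent identity. The only cosmetic difference is your explicit remark on uniqueness of the weak solution, which the a priori estimate renders unnecessary.
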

\begin{proof}
The proof is inspired by the proof of \cite[Prop. 5.1]{Borisov08}. It suffices to prove the result only for one $\lambda \in \rho(H_{\alpha})\cap\rho(H_{\alpha_0})$ sufficiently negative. (Since both $H_{\alpha_0}$ and $H_{\alpha}$ are m-sectorial, their spectra are bounded from below.) The result can be then extended to any other $\lambda' \in \rho(H_{\alpha})\cap\rho(H_{\alpha_0})$ due to the first resolvent identity. Let us denote for this purpose $R(H_{\alpha};\lambda):=(H_{\alpha} - \lambda)^{-1}$ and $R(H_{\alpha_0};\lambda):=(H_{\alpha_0} - \lambda)^{-1}$. Then we have
\beq
\bal
R(H_{\alpha};\lambda') &- R(H_{\alpha_0};\lambda') \\
 & = R(H_{\alpha};\lambda')\left(\JJ + (\lambda'-\lambda)R(H_{\alpha};\lambda')\right) - \left(\JJ + (\lambda'-\lambda)R(H_{\alpha};\lambda')\right)R(H_{\alpha_0};\lambda')\\
& =\left(\JJ + (\lambda'-\lambda)R(H_{\alpha};\lambda')\right)\left(R(H_{\alpha};\lambda) - R(H_{\alpha_0};\lambda)\right)\left(\JJ + (\lambda'-\lambda)R(H_{\alpha};\lambda')\right).\\
\eal
\eeq 
From the assumption $R(H_{\alpha};\lambda) - R(H_{\alpha_0};\lambda)$ is compact and $\JJ + (\lambda'-\lambda)R(H_{\alpha};\lambda')$ and $\JJ + (\lambda'-\lambda)R(H_{\alpha};\lambda')$ are bounded. The claim then follows from the two side ideal property of compact operators. Given an arbitrary $\Phi \in L^2(\Omega)$, let us define $\Psi := (H_{\alpha} - \lambda)^{-1}\Phi - (H_{\alpha_0} - \lambda)^{-1}\Phi$. $\Psi$ clearly satisfies the first equation in (\ref{stability-equation}). Plugging it into the second one we get
\beq
\bal
(\partial_2 + \ii \alpha_0)\Psi = (\partial_2 + \ii \alpha_0)\left((H_{\alpha} - \lambda)^{-1}\Phi - (H_{\alpha_0} - \lambda)^{-1}\Phi \right)= - \ii (\alpha - \alpha_0) (H_{\alpha} - \lambda)^{-1} \Phi,
\eal
\eeq
therefore our $\varphi = - \ii (\alpha - \alpha_0) T (H_{\alpha} - \lambda)^{-1} \Phi$, where $T$ is a trace operator from $W^{2,2}(\Omega)$ to $W^{1,2}(\partial \Omega)$. Due to the estimate \ref{nejaky-odhad} it is enough to show that $(\alpha-\alpha_0)T(H_{\alpha} - \lambda)^{-1}$ is compact. Indeed if this is true then given any sequence $(\Phi_n)_{n=1}^{+\infty} \subset L^2(\Omega)$ we know there is a strictly increasing sequence $(k_n)_{n=1}^{+\infty} \subset \NN$ such that for every $\varepsilon > 0$ there is $n_0 \in \NN$ such that for all $m,n > n_0$ inequality $\left\|\left((\alpha-\alpha_0)T(H_{\alpha} - \lambda)^{-1}\right)(\Phi_m - \Phi_n)\right\|<\varepsilon$ holds. It follows the same is true for $(H_{\alpha} - \lambda)^{-1} - (H_{\alpha_0} - \lambda)^{-1}$ since
\beq
\bal
\left\|\left(H_{\alpha}-\lambda\right)^{-1}\right.&\left.- \left(H_{\alpha_0}-\lambda\right)^{-1})(\Phi_m-\Phi_n)\right\|_{L^2(\Omega)} \leq C \left\|\left((\alpha-\alpha_0)T(H_{\alpha} - \lambda)^{-1}\right)(\Phi_m - \Phi_n)\right\|_{L^2(\Omega)}.
\eal
\eeq
We denote $\beta := \alpha -  \alpha_0$ and define functions
\beq
\beta_n(x):=
\begin{dcases}
\beta(x), \quad x\in (-n,n) \\
0 \quad \mathrm{otherwise}.\\
\end{dcases}
\eeq
These bounded continuous functions with compact support converge to $\beta(x)$ in $L^{\infty}(\RR^n)$ norm. $\beta_n T(H_{\alpha} - \lambda)^{-1}$ is a compact operator since $W^{1,2}(\partial\Omega)$ is compactly embedded in $L^2(\omega)$ for every bounded subset $\omega$ of $\partial\Omega$, due to the Rellich-Kondrachov theorem \cite[Sec. VI]{Adams1975}. In other words, every set $A$, which is bounded in the topology of $W^{1,2}(\partial\Omega)$, is precompact in the topology of $L^2(\omega)$. The claim then follows from the two sided ideal property of the set compact operator if we show that the compact operators $\beta_n T (H_{\alpha} - \lambda)^{-1}$ converge in the uniform $L^{2}(\partial\Omega)$ topology to our operator $\beta_n T (H_{\alpha} - \lambda)^{-1}$. We have
\beq
\|\beta T (H_{\alpha} - \lambda)^{-1} - \beta_n T (H_{\alpha} - \lambda)^{-1} \| \leq \|\beta - \beta_n\|_{L^{\infty}(\RR^n)} \|T (H_{\alpha} - \lambda)^{-1} \|,
\eeq
which converges to $0$ for $n\rightarrow +\infty$. 
\end{proof}
\begin{proof}[Proof of Theorem \ref{per}]
Since the difference of the resolvents is a compact operator according to Proposition \ref{compres}, it follows from the Weyl's essential spectrum theorem \cite[Thm. XIII.14]{Reed4} that the essential spectra of $H_{\alpha}$ and $H_{\alpha_0}$ are identical.
\end{proof}
%
\section{Weakly coupled bound states} \label{sekcesest}
%
Another possible influence of the perturbation of the boundary conditions on the spectrum is studied in this section. We shall employ the form 
\beq
\alpha(x) = \alpha_0 + \varepsilon \beta(x)
\eeq
for $\alpha$ further on. Here $\alpha_0 \in \RR$, $\beta \in W^{2,\infty}(\RR^n)$ and $\varepsilon > 0$. This section contains some preliminary and auxilliary results and culminates with the proof of Theorem \ref{vetavet}
%
\subsection{Unitary transformation of \texorpdfstring{$H_{\alpha}$}{H alpha}}
%
The form $\eqref{H-waveguide}$ is not very convenient for the study of bound states, the unitary transformation is therefore applied to simplify the boundary conditions for the cost of an adding of a differential operator.
\begin{prop}
$H_{\alpha}$ is unitarily equivalent to the operator $H_{\alpha_0} + \varepsilon Z_{\varepsilon}$, where
\beq
Z_{\varepsilon} := 2 \ii u \nabla'\beta(x) \cdot \nabla' + 2 \ii \beta(x)\frac{\partial}{\partial u} + \left(\varepsilon\beta^2(x) + \ii \Delta' \beta(x)u + \varepsilon u^2 |\nabla' \beta|^2\right).
\eeq
and $\Dom(H_{\alpha_0} + \varepsilon Z_{\varepsilon}) = \Dom(H_{\alpha_0})$.
\end{prop}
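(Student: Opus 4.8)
The plan is to exhibit an explicit unitary operator $U_\varepsilon$ on $L^2(\Omega)$ that straightens out the $x$-dependent Robin boundary conditions into the constant ones, and then to compute $U_\varepsilon H_\alpha U_\varepsilon^{-1}$ by the chain rule. The natural choice is the multiplication/composition operator
\beq
(U_\varepsilon \Psi)(x,u) := \ee^{\ii \varepsilon \beta(x) u}\,\Psi(x,u),
\eeq
which is unitary because $\varepsilon\beta(x)u$ is real-valued, with inverse $U_\varepsilon^{-1}\Psi = \ee^{-\ii\varepsilon\beta(x)u}\Psi$. First I would check that $U_\varepsilon$ maps $\Dom(H_{\alpha_0})$ onto $\Dom(H_\alpha)$: since $\beta \in W^{2,\infty}(\RR^n)$, multiplication by $\ee^{\pm\ii\varepsilon\beta u}$ preserves $W^{2,2}(\Omega)$, and a direct computation of the trace of $\partial_u(U_\varepsilon\Psi) + \ii\alpha_0 (U_\varepsilon\Psi)$ at $u=0$ and $u=d$ shows it vanishes precisely when $\partial_u\Psi + \ii\alpha\Psi = 0$ there, using that the extra term $\ii\varepsilon\beta(x)u$ in the exponent contributes $\ii\varepsilon\beta(x)\ee^{\ii\varepsilon\beta u}\Psi$ to the normal derivative, which combines with $\ii\alpha_0$ to give $\ii\alpha = \ii(\alpha_0 + \varepsilon\beta)$. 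This confirms $\Dom(H_{\alpha_0}+\varepsilon Z_\varepsilon) = \Dom(H_{\alpha_0})$ once we show the transformed operator agrees with $H_{\alpha_0}+\varepsilon Z_\varepsilon$.

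Next I would carry out the conjugation of the Laplacian. Writing $\Psi = U_\varepsilon^{-1}\Phi = \ee^{-\ii\varepsilon\beta u}\Phi$ and applying $-\Delta = -\Delta' - \partial_u^2$, one expands the derivatives: $\partial_{x_l}(\ee^{-\ii\varepsilon\beta u}\Phi) = \ee^{-\ii\varepsilon\beta u}(\partial_{x_l}\Phi - \ii\varepsilon u\,\partial_{x_l}\beta\,\Phi)$ and $\partial_u(\ee^{-\ii\varepsilon\beta u}\Phi) = \ee^{-\ii\varepsilon\beta u}(\partial_u\Phi - \ii\varepsilon\beta\Phi)$, then differentiate once more and multiply by $\ee^{\ii\varepsilon\beta u}$ on the left. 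Collecting terms: the $-\Delta'$ part produces $-\Delta'\Phi + 2\ii\varepsilon u\,\nabla'\beta\cdot\nabla'\Phi + \ii\varepsilon(\Delta'\beta)u\,\Phi + \varepsilon^2 u^2|\nabla'\beta|^2\Phi$ (the first-order cross term from differentiating the phase once, plus the $\Delta'$ acting on the phase, plus the square of the first-order phase gradient); the $-\partial_u^2$ part produces $-\partial_u^2\Phi + 2\ii\varepsilon\beta\,\partial_u\Phi + \varepsilon^2\beta^2\Phi$. Adding these and recognising $-\Delta' - \partial_u^2 = -\Delta$, the transformed action is $-\Delta\Phi + \varepsilon Z_\varepsilon\Phi$ with $Z_\varepsilon$ exactly as stated — the factor $\varepsilon$ having been pulled out of each correction term, leaving one residual $\varepsilon$ inside the $\varepsilon\beta^2$ and $\varepsilon u^2|\nabla'\beta|^2$ pieces. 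Since on $\Dom(H_{\alpha_0})$ the operator $-\Delta$ with constant Robin conditions is precisely $H_{\alpha_0}$, this gives $U_\varepsilon H_\alpha U_\varepsilon^{-1} = H_{\alpha_0} + \varepsilon Z_\varepsilon$.

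The main obstacle is not any single estimate but the bookkeeping: one must be careful that the boundary-condition check and the operator identity are genuinely equivalent statements, i.e. that the formal chain-rule computation is justified for all $\Psi \in W^{2,2}(\Omega)$ (it is, since $\ee^{\pm\ii\varepsilon\beta u}$ and its first two derivatives in all variables are bounded thanks to $\beta \in W^{2,\infty}(\RR^n)$, so every product appearing lies in $L^2(\Omega)$), and that $Z_\varepsilon$ is indeed $H_{\alpha_0}$-bounded so that $\Dom(H_{\alpha_0}+\varepsilon Z_\varepsilon)=\Dom(H_{\alpha_0})$ makes sense as written. I would close by remarking that $Z_\varepsilon$ is relatively bounded with respect to $H_{\alpha_0}$ — the first-order terms $2\ii u\nabla'\beta\cdot\nabla'$ and $2\ii\beta\,\partial_u$ are controlled by the gradient (hence by $h^1_{\alpha_0}$, as in Lemma \ref{prvni-lemma}) and the zeroth-order terms are bounded multiplication operators — which both legitimises the domain statement and will be the starting point for the Birman--Schwinger analysis in the sequel.
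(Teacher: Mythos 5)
Your proposal is correct and follows essentially the same route as the paper: conjugation by the multiplication operator $\ee^{\pm\ii\varepsilon\beta(x)u}$, an explicit chain-rule computation producing exactly the stated $Z_{\varepsilon}$, verification that the phase factor converts the $\alpha$-boundary condition into the $\alpha_0$-one, and the domain identity $\Dom(H_{\alpha_0}+\varepsilon Z_{\varepsilon})=\Dom(H_{\alpha_0})$ obtained from the boundedness of $\beta$ and its first two derivatives, just as in the paper's term-by-term estimates. Only note that, with your convention $U_\varepsilon=\ee^{\ii\varepsilon\beta u}$, it is $U_\varepsilon$ that maps $\Dom(H_{\alpha})$ onto $\Dom(H_{\alpha_0})$ (and $U_\varepsilon^{-1}$ the other way), which is in fact what your trace computation shows, so the slight misstatement of direction in your first paragraph is harmless.
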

\begin{proof}
We are going to show that the relation
\beq
U_{\varepsilon}^{-1}H_{\alpha} U_{\varepsilon} = H_{\alpha_0} + \varepsilon Z_{\varepsilon},
\eeq
holds in operator sense with the unitary operator of multiplication $U_{\varepsilon}$ acting on $\Psi \in L^2(\Omega)$ as $\left(U_{\varepsilon}\Psi\right)(x,u):=\ee^{-\ii \varepsilon \beta(x) u}\Psi(x,u)$. First we show that $\Dom(U_{\varepsilon}^{-1}H_{\alpha} U_{\varepsilon}) = \Dom(H_{\alpha_0} + \varepsilon Z_{\varepsilon})$. Simple calculations show that $\Dom(H_{\alpha_0}) = \Dom(U_{\varepsilon}^{-1}H_{\alpha} U_{\varepsilon})$. Further, $U_{\varepsilon}^{-1}H_{\alpha} U_{\varepsilon}$ and $ H_{\alpha_0} + \varepsilon Z_{\varepsilon}$ act in the same on functions from their domain. Now we prove that $\Dom(H_{\alpha_0} + \varepsilon Z_{\varepsilon}) = \Dom(H_{\alpha_0})$. It is clear that domain of $H_{\alpha_0} + \varepsilon Z_{\varepsilon}$ is a subset of the domain of $H_{\alpha_0}$. Taking $\Psi \in \Dom(H_{\alpha_0}) \subset W^{2,2}(\Omega)$ we estimate every action of $Z_{\varepsilon}$ as
\beq
\bal
\|2 \ii u (\nabla' \beta)(x)\cdot \nabla'\Psi\|_{L^2(\Omega)} &\leq 2 n d \|\beta\|_{W^{2,\infty}(\RR^n)} \|\Psi\|_{W^{2,2}(\Omega)},\\
\|2 \ii \beta \partial_u \Psi\|_{L^2(\Omega)} &\leq 2\|\beta\|_{W^{2,\infty}(\RR^n)} \|\Psi\|_{W^{2,2}(\Omega)}, \\
\|\varepsilon \beta^2 \Psi\|_{L^2(\Omega)} &\leq \varepsilon \|\beta\|_{W^{2,\infty}(\RR^n)}^2 \|\Psi\|_{W^{2,2}(\Omega)},\\
\|-\ii \Delta' \beta u \Psi\|_{L^2(\Omega)} &\leq d \|\beta\|_{W^{2,\infty}(\RR^n)} \|\Psi\|_{W^{2,2}(\Omega)},\\
\|-\varepsilon u^2 |\nabla' \beta|^2 \Psi\|_{L^2(\Omega)} &\leq \varepsilon d^2 \|\beta\|_{W^{2,\infty}(\RR^n)}^2 \|\Psi\|_{W^{2,2}(\Omega)}^2.\\
\eal
\eeq
In other words we just showed that $\Dom(H_{\alpha_0}) \subset \Dom(H_{\alpha_0} + \varepsilon Z_{\varepsilon})$ and the equality of domains is proven. 
\end{proof}
Overall, we were able to transform away the perturbed boundary conditions at the cost of adding a differential operator to the unperturbed Hamiltonian. Since unitarily equivalent operators possesses identical spectra, further on we are going to study the operator $H_{\alpha_0} + \varepsilon Z_{\varepsilon}$. Hereafter, a straightforward calculation inspired by \cite{BEGK} proves that
\beq \label{prvni-rozklad}
Z_{\varepsilon} = \sum_{i=1}^{n+2}A_i^*B_i + \varepsilon \sum_{i=n+3}^{2n+3}A_{i}^*B_{i},
\eeq
where $A_i$ and $B_i$ are first-order differential operators, specifically
\beq
\bal
A_1^* &:= 2\ii\left(\partial_{x_1}\beta(x)\right)_{1/2}u, & B_1 &:= \left|\partial_{x_1}\beta(x)\right|^{1/2}\frac{\partial}{\partial x_1}, \\
\vdots &   &  &\vdots \\
A_n^* &:= 2 \ii \left(\partial_{x_n} \beta(x)\right)_{1/2} u, & B_n &:= \left|\partial_{x_n} \beta(x)\right|^{1/2}\frac{\partial}{\partial x_n}, \\
A_{n+1}^* &:= 2\ii\beta(x)_{1/2},     & B_{n+1} &:= |\beta(x)|^{1/2}\frac{\partial}{\partial u}, \\
A_{n+2}^* &:= -\ii \left(\Delta'\beta(x)\right)_{1/2} u,& B_{n+2} &:= \left|\Delta'\beta(x)\right|^{1/2}, \\
A_{n+3}^* &:= \beta(x) u^2, & B_{n+3} &:= \beta(x),\\
A_{n+4}^* &:= \partial_{x_1}\beta(x) u^2, & B_{n+4} &:= \partial_{x_1}\beta(x),\\
\vdots & & & \vdots \\
A_{2n+3}^* &:= \partial_{x_n}\beta(x) u^2, & B_{2n+3} &:= \partial_{x_n}\beta(x),
\eal
\eeq
where $\left(f(x)\right)_{1/2} := \mathrm{sgn}\left(f(x)\right) \left|f(x)\right|^{1/2}$ for any function $f$. We define a pair of operators $C_{\varepsilon}, D: L^2(\Omega) \rightarrow L^2(\Omega) \otimes \CC^{2n+3}$ by
\beq
\bal
(C_{\varepsilon} \varphi)_i &:=
\begin{dcases}
A_i \varphi, \qquad i = 1,2,\cdots, n+2,\\
\varepsilon A_i \varphi, \qquad i=n+3,\cdots,2n+3, 
\end{dcases} \\
(D\varphi)_i &:= B_i \varphi, \qquad i=1, \dots, 2n+3.
\eal
\eeq
Then \eqref{prvni-rozklad} finally becomes.
\beq
U_{\varepsilon}^{-1} H_{\alpha} U_{\varepsilon} = H_{\alpha_0} + \varepsilon C_{\varepsilon}^* D.
\eeq
(Note that $C_{\varepsilon}^*$ project from $L^2(\Omega) \otimes \CC^{2n+3}$ to $L^2(\Omega)$ according to the definition of adjoint operator.) 
%
\subsection{Birman-Schwinger principle}
%
We introduce a useful technique for studying certain types of partial differential equations, particularly in the analysis of the point spectrum of differential operators. It was developed independently by Russian mathematician M. Sh. Birman \cite{birman61} and American physicist J. Schwinger \cite{schwinger61} in the year 1961 for estimating the number of negative eigenvalues of a self-adjoint Schr\"{o}dinger operator. Since its origin it was applied in finding weakly coupled bound states \cite{Simon-76}, studying behaviour of the resolvent \cite{klaus82}, localizing the spectrum \cite{Davies2014} and also finding eigenvalue bounds in non-self-adjoint operators \cite{davies01, frank11, laptev09}. Generally it enables us to solve an eigenvalue problem for differential operators by solving an eigenvalue problem for integral operators. In this paper we apply it on the non-self-adjoint operator. Since $Z_{\varepsilon}$ is a differential operator, we will have to employ regularity of functions involved and integration by parts to obtain an integral operator (cf. proof of Lemmas \ref{lemma3} and \ref{lemma5}).
\begin{prop}
Let $\lambda \in \CC\setminus[0,+\infty)$, $\varepsilon\in \RR$, $\beta \in W^{2,\infty}(\RR^n)$ such that 
\beq
\lim_{|x|\rightarrow +\infty} \beta(x) = \lim_{|x|\rightarrow +\infty} \partial_{x_j}\beta(x) = \lim_{|x|\rightarrow +\infty} \partial_{x_j}^2\beta(x)=0
\eeq
for all j = 1, \dots, n. Denoting $K_{\varepsilon}^{\lambda} := \varepsilon D (H_{\alpha_0} - \lambda)^{-1} C_{\varepsilon}^*$, then
\beq
\lambda \in \sigma_{\mathrm{p}}(H_{\alpha}) \quad \Leftrightarrow \quad -1 \in \sigma_{\mathrm{p}}(K_{\varepsilon}^{\lambda}).
\eeq
\end{prop}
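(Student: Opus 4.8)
The plan is to establish the Birman--Schwinger correspondence by the usual two-sided factorisation argument, adapted to the fact that $\varepsilon C_\varepsilon^* D$ is a differential (hence unbounded) perturbation rather than a multiplication operator. Since $H_\alpha$ is unitarily equivalent to $H_{\alpha_0} + \varepsilon C_\varepsilon^* D$, it suffices to prove the equivalence with $H_\alpha$ replaced by this operator. Fix $\lambda \in \CC \setminus [0,+\infty)$, so that $\lambda \in \rho(H_{\alpha_0})$ by Proposition~\ref{noper}, and set $R_0 := (H_{\alpha_0} - \lambda)^{-1}$. The decay hypotheses on $\beta$ and its derivatives guarantee (as in the proof of Proposition~\ref{compres}) that the coefficients of $B_i$ vanish at infinity, so that each $B_i R_0$ is bounded; combined with the fact that $R_0$ maps into $\Dom(H_{\alpha_0}) \subset W^{2,2}(\Omega)$, on which the first-order operators $A_i$, $B_i$ act boundedly into $L^2(\Omega)$, this makes $K_\varepsilon^\lambda = \varepsilon D R_0 C_\varepsilon^*$ a well-defined bounded operator on $L^2(\Omega) \otimes \CC^{2n+3}$.

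First I would prove the forward implication. Suppose $\Psi \neq 0$ solves $(H_{\alpha_0} + \varepsilon C_\varepsilon^* D - \lambda)\Psi = 0$ with $\Psi \in \Dom(H_{\alpha_0})$. Then $(H_{\alpha_0} - \lambda)\Psi = -\varepsilon C_\varepsilon^* D \Psi$, hence $\Psi = -\varepsilon R_0 C_\varepsilon^* D\Psi$ (the right-hand side makes sense because $D\Psi \in L^2(\Omega)\otimes\CC^{2n+3}$ and $C_\varepsilon^*$ maps back into $L^2(\Omega)$). Applying $D$ and putting $\eta := D\Psi$ gives $\eta = -\varepsilon D R_0 C_\varepsilon^* \eta = -K_\varepsilon^\lambda \eta$, so $-1 \in \sigma_{\mathrm p}(K_\varepsilon^\lambda)$ provided $\eta \neq 0$; but $\eta = 0$ would force $C_\varepsilon^* D \Psi = C_\varepsilon^* \eta = 0$ (after checking that $C_\varepsilon^* D\Psi$ is recovered from $\eta$ via $C_\varepsilon^*$, equivalently that $C_\varepsilon^*$ applied to $D\Psi$ only depends on $D\Psi$), whence $(H_{\alpha_0}-\lambda)\Psi = 0$ and $\Psi = 0$ by Proposition~\ref{noper}, a contradiction.

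For the reverse implication, suppose $\eta \in L^2(\Omega)\otimes\CC^{2n+3}$, $\eta \neq 0$, with $K_\varepsilon^\lambda \eta = -\eta$, i.e. $\varepsilon D R_0 C_\varepsilon^* \eta = -\eta$. Set $\Psi := R_0 C_\varepsilon^* \eta \in \Dom(H_{\alpha_0})$. Then $D\Psi = D R_0 C_\varepsilon^* \eta = -\tfrac1\varepsilon \eta$, so $\varepsilon C_\varepsilon^* D\Psi = -C_\varepsilon^*\eta$ and therefore $(H_{\alpha_0} - \lambda)\Psi = C_\varepsilon^*\eta = -\varepsilon C_\varepsilon^* D\Psi$, which is exactly $(H_{\alpha_0} + \varepsilon C_\varepsilon^* D - \lambda)\Psi = 0$. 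It remains to see $\Psi \neq 0$: if $\Psi = 0$ then $\eta = -\varepsilon D\Psi = 0$, again a contradiction. Hence $\lambda \in \sigma_{\mathrm p}(H_{\alpha_0} + \varepsilon C_\varepsilon^* D) = \sigma_{\mathrm p}(H_\alpha)$.

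I expect the main obstacle to be the bookkeeping needed to justify that all the compositions $D R_0 C_\varepsilon^*$, $C_\varepsilon^* D$, $R_0 C_\varepsilon^*$ are legitimate and that the algebraic manipulations $C_\varepsilon^* \eta$ versus $\varepsilon C_\varepsilon^* D\Psi$ are consistent — in particular verifying that $R_0$ indeed lands in $W^{2,2}(\Omega)$ with the correct boundary conditions so that the first-order operators $A_i, B_i$ and the formal identity $\varepsilon C_\varepsilon^* D = Z_\varepsilon$ are applicable, and that $D\Psi$ determines $C_\varepsilon^* D\Psi$ (so no information is lost when passing from $\Psi$ to $\eta = D\Psi$). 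The decay assumptions on $\beta, \partial_{x_j}\beta, \partial_{x_j}^2\beta$ enter precisely here, ensuring boundedness of $B_i R_0$ and hence of $K_\varepsilon^\lambda$; once boundedness is in hand the equivalence is a routine factorisation, as in the classical self-adjoint Birman--Schwinger principle.
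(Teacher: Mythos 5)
Your proposal is correct and follows essentially the same route as the paper: pass to the unitarily equivalent operator $H_{\alpha_0}+\varepsilon C_{\varepsilon}^{*}D$, set $\eta:=D\Psi$ for the forward direction and $\Psi:=(H_{\alpha_0}-\lambda)^{-1}C_{\varepsilon}^{*}\eta$ (up to sign) for the converse, using boundedness of $C_{\varepsilon}$, $D$ on $W^{2,2}(\Omega)$ to justify the compositions. Your explicit checks that $\eta\neq 0$ and $\Psi\neq 0$ are a welcome addition that the paper's proof leaves implicit.
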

\begin{proof}
$\Rightarrow:$ Assuming $H_{\alpha} \Phi = \lambda \Phi$ holds for some $\Phi \in \Dom(H_{\alpha})$ we define $\Psi := D \Phi$. $\Psi \in L^2(\Omega) \otimes \CC^{2n+3}$ since we have for each $(D\Phi)_i$ the following estimate:
\beq
\bal
\|(D \Phi)_i\|_{L^2(\Omega)} \leq c_1\|\nabla \Phi\|_{L^2(\Omega)} + c_2\|\Phi\|_{L^2(\Omega)} \leq (c_1 + c_2)\|\Phi\|_{W^{2,2}(\Omega)} < + \infty,
\eal
\eeq
where $c_1$ and $c_2$ are constants arising from the boundedness of $\beta$, its derivatives and their square roots (cf. \eqref{odhad-D}). For this $\Psi$ we then have
\beq
\bal
K_{\varepsilon}^{\lambda} \Psi = \varepsilon D (H_{\alpha_0} - \lambda)^{-1} C_{\varepsilon}^* D \Phi= - D (H_{\alpha_0} - \lambda)^{-1} (H_{\alpha_0} - \lambda) \Phi  = - \Psi.
\eal
\eeq
$\Leftarrow:$ Let us assume that $\Psi \in  L^2(\Omega) \otimes \CC^{2n+3}$ is an eigenfunction of $K_{\varepsilon}^{\lambda}$ pertaining to the eigenvalue $-1$. The assumptions imply that $\beta, \partial_{x_j}\beta$ and $\partial_{x_j}^2\beta$ are bounded for all $j = 1, \dots n$, therefore the operator $C_{\varepsilon}$ is bounded and the same applies for its adjoint (cf. \eqref{odhad-C}). Then $\Phi := - (H_{\alpha_0} - \lambda)^{-1}C_{\varepsilon}^* \Psi \in W^{2,2}(\Omega)$ and 
\beq
\bal
(H_{\alpha_0} - \lambda)\Phi &= (H_{\alpha_0} - \lambda)(H_{\alpha_0} - \lambda)^{-1}C_{\varepsilon}^* \Psi =- \varepsilon C_{\varepsilon}^* D (H_{\alpha_0} - \lambda)^{-1} C_{\varepsilon}^*\Psi = - \varepsilon C_{\varepsilon}^* D \Phi.
\eal
\eeq
\end{proof}
%
\subsection{Structure of \texorpdfstring{$K_{\varepsilon}^{\lambda}$}{K epsilon lambda}}
%
To analyze the structure of $K_{\varepsilon}^{\lambda}$ we take a closer look on the resolvent operator $(H_{\alpha_0}- \lambda)^{-1}$. We have shown in Lemma \ref{ess-lem2} that the biorthonormal-basis-type relations (\ref{biortho-relation}) enable us to decompose the resolvent of $H_{\alpha_0}$ into the transverse biorthonormal-basis. Its integral kernel then for every $\lambda \in \CC \setminus [\mu_0^2, + \infty)$ reads
\beq \label{resolvent-kernel}
\left((H_{\alpha_0} - \lambda)^{-1}\right)(x,u,x',u') = \sum_{j=0}^{+\infty} \psi_j(u) \mathcal{R}_{\mu_j^2 - \lambda}(x,x') \overline{\phi_j(u')} ,
\eeq 
where $\psi_j$ and $\phi_j$ were defined in (\ref{function}) and (\ref{adjoint-function}), respectively, and $\mathcal{R}_{\mu_j^2 - \lambda}(x,x')$ is the integral kernel of $(-\Delta' + \mu_j^2 - \lambda)$. This naturally differs for various ``longitudal'' dimensions $n$.  It is an integral operator with the integral kernel
\beq \label{reduced-kernel}
\mathcal{R}_{z}(x,x') =
\begin{dcases}
\frac{\ee^{-\sqrt{-z}|x-x'|}}{2\sqrt{-z}} &\mbox{if} \quad n=1,\\
\frac{1}{2\pi} K_0(\sqrt{-z}|x-x'|) & \mbox{if} \quad n = 2,\\
\frac{\ee^{-\sqrt{-z}|x-x'|}}{4 \pi |x-x'|}& \mbox{if} \quad n \geq 3.
\end{dcases}
\eeq
Here $K_0$ is Macdonald's function \cite[9.6.4]{Abramowitz}. In this paper we are interested only in the case $n = 1,2$ (cf. Remark \ref{singularita-rmrk}). To study $K_{\varepsilon}^{\lambda}$ it is necessary to somehow deal with the singularity arising from the first term in the sum (\ref{resolvent-kernel}) when $\lambda$ tends to $\mu_0^2$. Hence, following \cite{Simon-76} we decompose it into two operators, $K_{\varepsilon}^{\lambda} = L_{\varepsilon}^{\lambda} + M_{\varepsilon}^{\lambda}$, separating the diverging part in the operator, $L_{\varepsilon}^{\lambda} := \varepsilon D L_{\lambda} C_{\varepsilon}^*$, where $L_{\lambda}$ is an integral operator with the kernel
\beq
\mathcal{L}_{\lambda}(x,u,x',u') := 
\begin{dcases}
\psi_0(u) \frac{1}{2\sqrt{\mu_0^2 - \lambda}}\, \overline{\phi_0(u')} & \mbox{if} \quad n = 1,\\
- \frac{1}{2\pi}\psi_0(u) \ln\sqrt{\mu_0^2 - \lambda} \,\overline{\phi_0(u')} &\mbox{if} \quad n = 2.
\end{dcases}
\eeq
We see that indeed the integral kernel of $L_{\lambda}$ diverges for $\lambda$ tending to $\mu_0^2$. For technical reasons the regular part $M_{\varepsilon}^{\lambda} := \varepsilon D M_{\lambda} C_{\varepsilon}^*$ can again be divided into two terms, $N_{\varepsilon}^{\lambda} := \varepsilon D N_{\lambda} C_{\varepsilon}^*$ and $\varepsilon D R_{\alpha_0}^{\perp}(\lambda) C_{\varepsilon}^*$. The integral kernels of $N_{\lambda}$ and $R_{\alpha_0}^{\perp}(\lambda)$ are
\beq
\mathcal{N}_{\lambda}(x,u,x',u') := 
\begin{dcases}
\psi_0(u)\frac{\ee^{-\sqrt{\mu_0^2 - \lambda}|x - x'|} - 1}{2 \sqrt{\mu_0^2 - \lambda}}\,\overline{\phi_0(u')} & \mbox{if} \quad n = 1,\\
\frac{1}{2\pi}\psi_0(u)\left(K_0\left(\sqrt{\mu_0^2 - \lambda}|x-x'|\right) + \ln\sqrt{\mu_0^2 - \lambda}\right)\overline{\phi_0(u')} & \mbox{if} \quad n = 2,
\end{dcases}
\eeq
and
\beq \label{projected-kernel}
\mathcal{R}^{\perp}_{\alpha_0}(x,u,x',u';\lambda) := 
\begin{dcases}
\sum_{j=1}^{+\infty}\psi_j(u) \frac{\ee^{-\sqrt{\mu_j^2 - \lambda}|x-x'|}}{2\sqrt{\mu_j^2 - \lambda}}  \,\overline{\phi_j(u')} & \mbox{if} \quad n=1,\\ 
-\frac{1}{2\pi}\sum_{j=1}^{+\infty}\psi_j(u) K_0\left(\sqrt{\mu_j^2 - \lambda}|x-x'|\right)\overline{\phi_j(u')} & \mbox{if} \quad n=2,
\end{dcases}
\eeq
respectively. We see that $R^{\perp}_{\alpha_0}$ is nothing else than the projection of the resolvent of $H_{\alpha_0}$ on higher transversal modes. We define new variable 
\beq
k:= 
\begin{dcases}
\sqrt{\mu_0^2 - \lambda} & \mbox{if} \quad n=1, \\
\left(\ln \sqrt{ \mu_0^2- \lambda}\right)^{-1} & \mbox{if} \quad n=2,
\end{dcases}
\eeq
and show that $M_{\varepsilon}^{\lambda}$ is well-behaved with respect to this variable including the region where $k=0$ (i.e. where $\lambda = \mu_0^2$). This will hold whenever $\beta$ and its derivatives decay sufficiently fast in $\pm \infty$. We divide the proof of this fact into several lemmas.
%
\subsubsection{Behaviour of the projected resolvent}
%
Independently on the specific form of the integral kernel \eqref{projected-kernel} of the projected resolvent $R_{\alpha_0}^{\perp}(\lambda)$, we are able to establish its boundedness and analyticity.
\begin{lem} \label{lemma1}
$D R_{\alpha_0}^{\perp}(\lambda) C_{\varepsilon}^*$ as a function of $k$ defined in $\left\{ k \in \CC \left|\, \Re k > 0\right.\right\}$ for $n=1$ or \\in $\left\{ k \in \CC \left|\, \Re k < 0\right.\right\}$ for $n=2$ is a bounded operator-valued function.
\end{lem}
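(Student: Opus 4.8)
The plan is to show that $D R_{\alpha_0}^{\perp}(\lambda) C_{\varepsilon}^*$ is bounded uniformly on the relevant $k$-halfplane by estimating the Hilbert--Schmidt (or, where that fails, the operator) norm of the integral kernel obtained after moving the first-order differential operators in $D$ and $C_\varepsilon^*$ onto the kernel $\mathcal{R}^\perp_{\alpha_0}$ by integration by parts. The key observation is that $R^\perp_{\alpha_0}(\lambda)$ is the projection of $(H_{\alpha_0}-\lambda)^{-1}$ onto the transverse modes $j\geq 1$, so it is the norm-convergent sum $\sum_{j\geq 1}(-\Delta'+\mu_j^2-\lambda)^{-1}B_j$ from Lemma \ref{ess-lem2}; since $\mu_j^2\geq (\pi/d)^2>\mu_0^2$ for $j\geq 1$, the distance $\mathrm{dist}(\lambda,[\mu_j^2,+\infty))$ stays bounded away from zero uniformly as $\lambda\to\mu_0^2$, i.e. as $k\to 0$. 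Thus the singularity that forced the splitting $K_\varepsilon^\lambda=L_\varepsilon^\lambda+M_\varepsilon^\lambda$ is absent here, and the issue is purely that $D$ and $C_\varepsilon$ are unbounded (first-order) operators.

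First I would record precise bounds for the building blocks. The components $B_i$ of $D$ are of the form (bounded function)$\times\partial_{x_l}$, (bounded function)$\times\partial_u$, or multiplication by a bounded function; similarly the components $A_i^*$ of $C_\varepsilon^*$ are multiplication by $u\cdot$(bounded function), $\beta(x)u^2$, etc. So $D R_{\alpha_0}^{\perp}(\lambda) C_{\varepsilon}^*$, written out, is a finite matrix of operators each of which is a bounded multiplication operator composed with a derivative of $R^\perp_{\alpha_0}(\lambda)$ in at most one $x$-variable and at most one $u$-variable, composed with another bounded multiplication operator. Using the product structure $\mathcal{R}^\perp_{\alpha_0}(x,u,x',u';\lambda)=\sum_{j\geq 1}\psi_j(u)\mathcal{R}_{\mu_j^2-\lambda}(x,x')\overline{\phi_j(u')}$, the $u$- and $u'$-derivatives fall only on $\psi_j,\phi_j$ (which by \eqref{function} are bounded in $L^\infty(I)$ together with their derivatives by $C(1+\mu_j^2)$), while the longitudinal derivatives fall on the reduced kernel $\mathcal{R}_{\mu_j^2-\lambda}(x,x')$.

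The heart of the estimate is therefore to bound, for the transverse sum, operators with kernels like $\partial_{x_l}\mathcal{R}_{\mu_j^2-\lambda}(x,x')$ or $\partial_{x'_l}\mathcal{R}_{\mu_j^2-\lambda}(x,x')$ composed with bounded multiplications. I would use the following two facts uniformly in $\lambda$ with $\Re k\geq 0$ (resp. $\Re k\leq 0$ for $n=2$): (i) $\|(-\Delta'+\mu_j^2-\lambda)^{-1}\|_{L^2(\RR^n)\to L^2(\RR^n)}\leq C/(j^2+1)$, exactly as in \eqref{odhad-U}, and (ii) the gradient bound $\|\nabla'(-\Delta'+\mu_j^2-\lambda)^{-1}\|\leq \|(-\Delta')^{1/2}(-\Delta'+\mu_j^2-\lambda)^{-1}\|\leq C/(j^2+1)^{1/2}$, since $\|(-\Delta')^{1/2}(-\Delta'+w)^{-1}\|=\sup_{t\geq 0}\sqrt t/|t+w|\leq C/\sqrt{|w|}$ when $w$ stays in a fixed sector away from $(-\infty,0]$, which $w=\mu_j^2-\lambda$ does uniformly since $\Re w\geq (\pi/d)^2-\mu_0^2\geq 0$ grows with $j$. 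Combining (i)--(ii) with the $u$-bounds $|\psi_j|,|\phi_j|\leq C$ and $|\psi_j'|,|\phi_j'|\leq C(1+\mu_j^2)$, the polynomial growth $\mu_j^2\sim j^2$ is controlled by the decay $1/(j^2+1)$ from the resolvent and one more power supplied by the square-root gain in the gradient estimate, together with the square-integrable weights in $C_\varepsilon$ (powers of $u$ on the bounded interval $I$) and in $D$; summing the geometric/polynomial series over $j\geq 1$ as in the proof of Lemma \ref{ess-lem2} gives a finite bound depending only on $|\alpha_0|$, $d$, $\|\beta\|_{W^{2,\infty}}$ and $\varepsilon$, but \emph{not} on $k$ in the closed halfplane. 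Finally, analyticity: each summand $(-\Delta'+\mu_j^2-\lambda)^{-1}$ is analytic in $\lambda$ on $\CC\setminus[\mu_j^2,+\infty)\supset\CC\setminus[\mu_0^2,+\infty)$, hence analytic in $k$ on the stated halfplane; the uniform-in-$k$ bounds just obtained make the series converge locally uniformly in operator norm, so the limit is an operator-valued analytic (hence in particular bounded) function of $k$, which is the assertion. The main obstacle is item (ii)—getting the correct \emph{half} power of decay in $j$ from the longitudinal derivative, uniformly down to $k=0$—since a naive estimate would only give $|\mu_j^2-\lambda|^{1/2}$ in the numerator and lose the summability; the sectoriality of $\mu_j^2-\lambda$ (guaranteed because $\lambda$ is bounded and $\mu_j^2\to+\infty$) is what rescues it, and I would state this as a short separate sublemma before assembling the estimate.
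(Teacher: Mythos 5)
Your route is viable but genuinely different from the paper's. The paper never expands the kernel: it writes $R_{\alpha_0}^{\perp}(\lambda)=R_{\alpha_0}(\lambda)\mathcal{P}_0^{\perp}$, notes that the reduced operator on $\mathcal{P}_0^{\perp}L^2(\Omega)$ has spectrum starting at $\mu_1^2$ (so no singularity at $\lambda=\mu_0^2$), observes that $C_{\varepsilon}^*$ is a bounded multiplication operator, and then controls the derivatives in $D$ abstractly through the form identity $\|\nabla R_{\alpha_0}^{\perp}(\lambda)\Psi\|^2\leq\|R_{\alpha_0}^{\perp}(\lambda)\|\,\|\Psi\|^2+|\lambda|\,\|R_{\alpha_0}^{\perp}(\lambda)\|^2\|\Psi\|^2$ — three short estimates, no mode sum, and no integration by parts is needed at all (the $A_i^*$ are multiplications and $D$ differentiates the output variables directly). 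You instead decompose into transverse modes and estimate mode by mode, in the spirit of Lemma \ref{ess-lem2}; that buys explicit per-mode decay rates but requires two corrections. First, your per-mode operator norms are only $O(1)$ (the $\partial_u$ term: $|\partial_u\psi_j|\lesssim \mu_j^2+\alpha_0^2$ against the resolvent decay $1/(j^2+1)$) or $O(1/j)$ (the longitudinal gradient, cf. \eqref{odhad-UU}), so a literal summation of operator norms over $j$ diverges; what rescues the argument is the Parseval/biorthonormal-basis mechanism of Lemma \ref{ess-lem2} ($\sum_j\|\Psi_j\|^2\lesssim\|\Psi\|^2$ plus uniform-in-$j$ bounds on the per-mode factors), not a ``geometric/polynomial series'' — you point to the right lemma, but this step must be made explicit. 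Second, the claimed $k$-independence of the bound on the closed half-plane is too strong: for $\lambda$ with large real part and small imaginary part (still corresponding to $\Re k>0$) the numbers $w=\mu_j^2-\lambda$ for small $j$ are not confined to a sector avoiding $(-\infty,0]$, and indeed $\|R_{\alpha_0}^{\perp}(\lambda)\|$ blows up as $\lambda$ approaches $[\mu_1^2,+\infty)$, which meets the boundary $\Re k=0$ away from $k=0$. What the lemma asserts, and all that the later weak-coupling analysis uses, is boundedness with constants depending on $\lambda$, locally uniformly up to $k=0$; restricted to that, your argument goes through.
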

\begin{proof}
Let us define a projection $\mathcal{P}_0$ onto the subspace in $L^2(\Omega)$ of the functions of the form $\varphi \otimes \psi_0$, where $\varphi \in L^2(\RR^n)$ and $\psi_0$ was defined in \eqref{function}. We denote $\mathcal{P}_0^{\perp} := \JJ - \mathcal{P}_0$ projection onto its orthogonal complement. Now $R_{\alpha_0}^{\perp}(\lambda) = R_{\alpha_0}(\lambda) \mathcal{P}_0^{\perp}$ has an analytic continuation into the region $\CC \setminus [\mu_1^2, +\infty)$ since the lowest point in the spectrum of $H_{\alpha_0}\mathcal{P}_0^{\perp}\upharpoonright \mathcal{P}_0^{\perp}L^2(\Omega)$ is $\mu_{1}^2$. (Recall that its spectrum lies on the positive real half-line.) This includes the studied region $\CC \setminus [\mu_0^2,+\infty)$.
\paragraph{}In fact, we need show that $D R_{\alpha_0}^{\perp}(\lambda) C_{\varepsilon}^*$ is bounded. It is straightforward to see since every action of $C_{\varepsilon}$ on any $\Psi \in L^2(\Omega)$ can be estimated as follows:
\beq \label{odhad-C}
|(C_{\varepsilon}\Psi)_j| \leq
\begin{dcases}
2 \left\|\left|\partial_{x_j}\beta\right|^{1/2}\right\|_{L^{\infty}(\RR^n)}\,d\, |\Psi| \qquad &\mathrm{for}\, j = 1, \dots n,\\
2 \left\||\beta|^{1/2}\right\|_{L^{\infty}(\RR^n)}|\Psi| \qquad &\mathrm{for}\, j = n+1, \\
\left\|\left(\Delta'\beta\right)\right\|_{L^{\infty}(\RR^n)}\,d\,|\Psi| \qquad &\mathrm{for}\, j=n+2\\
\varepsilon \|\partial_{x_j} \beta\|_{L^{\infty}(\RR^n)}\,d^2|\Psi| \qquad &\mathrm{for}\, j = n+3, \dots, 2n+3,
\end{dcases}
\eeq
and we see that $C_{\varepsilon}$ is bounded and the same holds for $C_{\varepsilon}^*$. To show that $D R_{\alpha_0}^{\perp}(\lambda)$ is also bounded, we prepare several estimates. The partial differentiations in $D$ may be estimated by either gradient or identity. We set the constant $c$ equal to the maximum of the norms derived from $\beta$ appearing in \eqref{odhad-C} and estimate
\beq \label{odhad-D}
|(D\P_0^{\perp}\Psi)_j| \leq
\begin{dcases}
c\,|\nabla\P_0^{\perp}\Psi| \qquad & \mathrm{for} \, j= 1, \dots, n+1,\\
c\,|\Psi| &\mathrm{for} \, j= n+2, \dots, 2n+3.
\end{dcases}
\eeq
The action of the gradient on the resolvent may be estimated as well:
\beq \label{dva}
\bal
\|\nabla R_{\alpha_0}^{\perp}(\lambda)\Psi\|_{L^2(\Omega)}^2 
=& \left(\Psi, R_{\alpha_0}^{\perp}(\lambda)\Psi\right) + \lambda \left\|R_{\alpha_0}^{\perp}(\lambda)\Psi\right\|^2\\
&\leq \|R_{\alpha_0}^{\perp}(\lambda)\| \|\Psi\|_{L^2(\Omega)}^2 + |\lambda| \|R_{\alpha_0}^{\perp}(\lambda)\|^2 \|\Psi\|_{L^2(\Omega)}^2.
\eal
\eeq
Putting (\ref{dva}) and (\ref{odhad-D}) together we obtain
\beq
\bal
\|D R^{\perp}_{\alpha_0}(\lambda)\Psi\|_{L^2(\Omega)}^2 
&\leq  (n+1) \|\nabla R^{\perp}_{\alpha_0}(\lambda) \Psi\|_{L^2(\Omega)}^2 + (n+2) \|R^{\perp}_{\alpha_0}(\lambda)\Psi\|_{L^2(\Omega)}^2\\
&\leq (n+1)\|R_{\alpha_0}^{\perp}(\lambda)\| \|\Psi\|_{L^2(\Omega)}^2 + \left((n+1)|\lambda|+n+2\right) \|R_{\alpha_0}^{\perp}(\lambda)\|^2 \|\Psi\|_{L^2(\Omega)}^2\\
&<+\infty.
\eal
\eeq
It follows that $D R^{\perp}_{\alpha_0}(\lambda)$ is a bounded operator on $L^2(\Omega)$ and consequently also $D R^{\perp}_{\alpha_0}(\lambda)C_{\varepsilon}^*$.
\end{proof}
\begin{lem} \label{lemma2}
$(\psi, D R_{\alpha_0}^{\perp}(\lambda) C_{\varepsilon}^* \phi)$ as a function of $k$ is analytic in $\left\{ k \in \CC \left|\, \Re k > 0\right.\right\}$ for $n=1$ or in $\left\{ k \in \CC \left|\, \Re k < 0\right.\right\}$ for $n=2$ for every $\psi, \phi \in L^2(\Omega) \otimes \CC^{2n+3}$.
\end{lem}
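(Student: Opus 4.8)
The plan is to establish analyticity first with respect to the spectral parameter $\lambda$ and then transfer it to $k$ through the holomorphic change of variables relating the two. Recall from the proof of Lemma \ref{lemma1} that $R_{\alpha_0}^{\perp}(\lambda)=R_{\alpha_0}(\lambda)\mathcal{P}_0^{\perp}$ is (the product with $\mathcal{P}_0^{\perp}$ of) the resolvent of the part of $H_{\alpha_0}$ in $\mathcal{P}_0^{\perp}L^2(\Omega)$, whose spectrum starts at $\mu_1^2$; hence $\lambda\mapsto R_{\alpha_0}^{\perp}(\lambda)$ is an analytic $\mathcal{B}(L^2(\Omega))$-valued function on $\CC\setminus[\mu_1^2,+\infty)$, which contains the region $\CC\setminus[\mu_0^2,+\infty)$ of interest. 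Since $\mathcal{P}_0^{\perp}$ is a projection commuting with $R_{\alpha_0}(\lambda)$, this continued family satisfies the first resolvent identity $R_{\alpha_0}^{\perp}(\lambda)-R_{\alpha_0}^{\perp}(\lambda_0)=(\lambda-\lambda_0)R_{\alpha_0}^{\perp}(\lambda)R_{\alpha_0}^{\perp}(\lambda_0)$, which I will use repeatedly.

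The crux — and the only place where I expect real work — is that the unbounded differential factor $D$ must not destroy analyticity. First I would upgrade the statement to: $\lambda\mapsto D R_{\alpha_0}^{\perp}(\lambda)$ is analytic as a function with values in the bounded operators on $L^2(\Omega)$. Applying $D$ to the resolvent identity turns the relevant difference quotient into $D R_{\alpha_0}^{\perp}(\lambda)R_{\alpha_0}^{\perp}(\lambda_0)$, and applying the identity once more gives
\beq
\left\| D R_{\alpha_0}^{\perp}(\lambda)R_{\alpha_0}^{\perp}(\lambda_0)-D R_{\alpha_0}^{\perp}(\lambda_0)^2\right\|\leq|\lambda-\lambda_0|\,\left\|D R_{\alpha_0}^{\perp}(\lambda)\right\|\,\left\|R_{\alpha_0}^{\perp}(\lambda_0)\right\|^2 .
\eeq
Here $\|D R_{\alpha_0}^{\perp}(\lambda)\|<\infty$ by Lemma \ref{lemma1}, and the bound obtained there depends on $\lambda$ only through $|\lambda|$ and $\|R_{\alpha_0}^{\perp}(\lambda)\|$ — the latter continuous, indeed analytic — so $\|D R_{\alpha_0}^{\perp}(\lambda)\|$ stays bounded for $\lambda$ near $\lambda_0$. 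Letting $\lambda\to\lambda_0$ then shows that $D R_{\alpha_0}^{\perp}(\cdot)$ is complex-differentiable, with derivative $D R_{\alpha_0}^{\perp}(\lambda)^2$, hence analytic. Composing on the right with the fixed, $\lambda$-independent, bounded operator $C_{\varepsilon}^*$ (bounded by \eqref{odhad-C}) and pairing with the fixed vectors $\psi,\phi\in L^2(\Omega)\otimes\CC^{2n+3}$, I conclude that $\lambda\mapsto(\psi,D R_{\alpha_0}^{\perp}(\lambda)C_{\varepsilon}^*\phi)$ is analytic on $\CC\setminus[\mu_1^2,+\infty)$.

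It then remains to pass to the variable $k$. For $n=1$ the map $k\mapsto\lambda=\mu_0^2-k^2$ is holomorphic and, with the principal branch, inverts $k=\sqrt{\mu_0^2-\lambda}$, carrying $\{\Re k>0\}$ onto $\CC\setminus[\mu_0^2,+\infty)$; for $n=2$ the inverse is $k\mapsto\lambda=\mu_0^2-\ee^{2/k}$, holomorphic on $\{\Re k<0\}$, where $|\lambda-\mu_0^2|=\ee^{2\Re(1/k)}<1$, so $\lambda(k)$ remains in the domain of analyticity of $R_{\alpha_0}^{\perp}$ established in Lemma \ref{lemma1}. In either case $(\psi,D R_{\alpha_0}^{\perp}(\lambda(k))C_{\varepsilon}^*\phi)$ is the composition of the scalar analytic function of the previous paragraph with a holomorphic change of variables, hence analytic in $k$ on the stated region, which is the assertion.
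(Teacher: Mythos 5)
Your argument is essentially the paper's: both rest on the first resolvent identity for $R_{\alpha_0}^{\perp}$ together with the boundedness of $D R_{\alpha_0}^{\perp}(\lambda) C_{\varepsilon}^*$ from Lemma \ref{lemma1}; the paper computes the difference quotient weakly, at the level of the sesquilinear form $(\Phi, D R_{\alpha_0}^{\perp}(\lambda) C_{\varepsilon}^*\Psi)$, identifying the derivative as $(\Phi, D R_{\alpha_0}^{\perp}(\lambda_0)^2 C_{\varepsilon}^*\Psi)$, while you upgrade to norm analyticity of $D R_{\alpha_0}^{\perp}(\lambda)$ and then pair with fixed vectors --- a harmless strengthening. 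One small caveat concerns your $n=2$ change of variables: the bound $|\lambda(k)-\mu_0^2|=\ee^{2\Re(1/k)}<1$ by itself does not keep $\lambda(k)$ off $[\mu_1^2,+\infty)$ unless $\mu_1^2-\mu_0^2\geq 1$, but this imprecision in the $k\leftrightarrow\lambda$ correspondence is shared with (indeed glossed over by) the paper, whose own proof never leaves the variable $\lambda$.
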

\begin{proof}
The analyticity can be showed in the same manner as the boundedness in Lemma \ref{lemma1}, now using the first resolvent formula. It is equivalent to showing that the sesquilinear form 
\beq
r_{\lambda}(\Phi,\Psi) := \left(\Phi, D R_{\alpha_0}^{\perp}(\lambda) C_{\varepsilon}^* \Psi\right) 
\eeq
 is analytic as a function of $\lambda$ for every $\Phi$ and  $\Psi$ from the fundamental subset. We are in fact able to show for every $\Phi, \Psi \in L^2(\Omega)$ and every $\lambda_0 \in \CC\setminus[\mu_0^2,+\infty)$
\beq
\bal
r_{\lambda_0}'(\Phi,\Psi) &:= \lim_{\lambda\rightarrow\lambda_0} \frac{r_{\lambda}(\Phi,\Psi)-r_{\lambda_0}(\Phi,\Psi)}{\lambda - \lambda_0} \\
&=  \lim_{\lambda\rightarrow\lambda_0}\frac{\left(\Phi, \left(D R_{\alpha_0}^{\perp}(\lambda) C_{\varepsilon}^* - D R_{\alpha_0}^{\perp}(\lambda_0) C_{\varepsilon}^*\right)\Psi\right)}{\lambda-\lambda_0}\\
&=  \lim_{\lambda\rightarrow\lambda_0}\frac{\left(\Phi, D \left((\lambda - \lambda_0)R_{\alpha_0}^{\perp}(\lambda)R_{\alpha_0}^{\perp}(\lambda_0)\right) C_{\varepsilon}^* \Psi \right)}{\lambda-\lambda_0}\\
&=  \lim_{\lambda\rightarrow\lambda_0}\frac{\left(D^* \Phi, \left((\lambda - \lambda_0)R_{\alpha_0}^{\perp}(\lambda)R_{\alpha_0}^{\perp}(\lambda_0)\right) C_{\varepsilon}^* \Psi \right)}{\lambda-\lambda_0}\\
&= \left(\Phi, D R_{\alpha_0}^{\perp}(\lambda_0)^2 C_{\varepsilon}^* \Psi \right).
\eal
\eeq
(The dash denotes differentiation with respect to $\lambda$.) The next step would be to show boundedness of $D R_{\alpha_0}^{\perp}(\lambda_0)^2 C_{\varepsilon}^*$ which can be done exactly in the same way as the proof of the boundedness of $D R_{\alpha_0}^{\perp}(\lambda_0) C_{\varepsilon}^*$.
\end{proof}
%
\subsubsection{Behaviour of \texorpdfstring{$N_{\varepsilon}^{\lambda}$}{N epsilon lambda} in the strip (n=1)}
%
Let us now assume decay of $\beta$ and of its derivatives in $\pm \infty$, specifically
\beq \label{predpoklady}
\bal
\lim_{|x| \rightarrow +\infty} |x|^{5+\delta}\,\beta(x) &= 0, \\
\lim_{|x| \rightarrow +\infty} |x|^{5+\delta}\,\partial_{x_j}\beta(x) &= 0,\\
\lim_{|x| \rightarrow +\infty} |x|^{5+\delta}\,\partial_{x_j}^2\beta(x) &= 0,
\eal
\eeq
for all $j= 1, \dots, n$ and any $\delta > 0$. Then we are able to show that $D N_{\lambda} C_{\varepsilon}^*$ is well-behaved.
\begin{lem} \label{lemma3}
Let us assume \eqref{predpoklady}. Then $D N_{\lambda} C_{\varepsilon}^*$ as a function of $k$ defined in $\left\{ k \in \CC \left|\, \Re k > 0\right.\right\}$ is a bounded and analytic operator-valued function.
\end{lem}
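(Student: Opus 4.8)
The plan is to analyze the integral kernel of $D N_\lambda C_\varepsilon^*$ in the case $n=1$ and show that, after the integration by parts forced by the differential operators sitting inside $D$ and $C_\varepsilon^*$, one obtains an operator-valued function of $k=\sqrt{\mu_0^2-\lambda}$ that extends continuously (indeed analytically) across $k=0$ with a uniform operator-norm bound on $\{\Re k > 0\}$ together with the boundary $k=0$. The starting point is the explicit kernel
\[
\mathcal{N}_\lambda(x,u,x',u') = \psi_0(u)\,\frac{\ee^{-k|x-x'|}-1}{2k}\,\overline{\phi_0(u')},
\]
whose transversal part $\psi_0\otimes\overline{\phi_0}$ is harmless and whose longitudinal part $g_k(x,x'):=\bigl(\ee^{-k|x-x'|}-1\bigr)/(2k)$ is the object that must be controlled. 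The key elementary observations are that $g_k$ is jointly analytic in $k$ on $\Re k>0$, that $g_0(x,x') = -|x-x'|/2$ (the limit is finite — this is precisely why the diverging part $L_\lambda$ was split off first), and that $\partial_{x}g_k$, $\partial_{x'}g_k$ are bounded by a constant independent of $k$ on $\{\Re k\ge 0\}$ since $|\partial_x g_k| = \tfrac12|\ee^{-k|x-x'|}| \le \tfrac12$. The price for the finite limit at $k=0$ is linear growth of $g_0$ in $|x-x'|$, and this is exactly where the decay hypothesis \eqref{predpoklady} enters.

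The main steps, in order, would be: (i) write $D N_\lambda C_\varepsilon^* = \sum_{i,j}$ of rank-structured pieces indexed by the components of $D$ (operators $B_i$) and $C_\varepsilon^*$ (adjoints of $A_i$), each of which acts in the longitudinal variable as an integral operator with kernel of the form $a_i(x)\,\partial^{\#}_x\partial^{\#}_{x'} g_k(x,x')\,b_j(x')$ times a bounded transversal factor, where $a_i,b_j$ are built from $\beta$, its first and second derivatives, their square roots, and the bounded weights $u$, $u^2$; (ii) for the pieces where at least one derivative $\partial_x$ or $\partial_{x'}$ falls on $g_k$, use $|\partial_x g_k|\le\tfrac12$ and a Schur test (or Hilbert–Schmidt bound) using only $\beta,\partial\beta,\partial^2\beta\in L^1\cap L^\infty$ — these require essentially no decay beyond integrability; (iii) for the single worst piece, $A_{n+2}^*$ paired with $B_{n+2}$, where no longitudinal derivative hits $g_k$ and one is left with the kernel $(\Delta'\beta)(x)\,g_k(x,x')\,(\Delta'\beta)(x')$ (up to bounded transversal and $u$-weight factors), bound $|g_k(x,x')|\le \tfrac12(|x|+|x'|+1)$ uniformly on $\{\Re k\ge0\}$ and invoke \eqref{predpoklady} to see that $(1+|x|)\,\partial_{x}^2\beta(x)\in L^2(\RR)$ (in fact $L^1\cap L^2$), so that the operator is Hilbert–Schmidt with norm controlled uniformly in $k$; the analogous but milder pieces involving $\beta$ and $\partial_x\beta$ paired against each other with no derivative on $g_k$ are handled the same way and need even less decay. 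Collecting these bounds gives uniform boundedness of $D N_\lambda C_\varepsilon^*$ on $\{\Re k>0\}$, continuous up to $k=0$.

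For analyticity I would argue as in Lemma~\ref{lemma2}: differentiate the kernel in $\lambda$ (equivalently in $k$), noting $\partial_k g_k(x,x')$ is again bounded on compact subsets of $\{\Re k>0\}$ and, crucially, is controlled by $C(1+|x-x'|^2)\ee^{-\Re k|x-x'|}$, so the same decay hypothesis makes the differentiated operator Hilbert–Schmidt locally uniformly; hence the weak derivative exists and $k\mapsto D N_\lambda C_\varepsilon^*$ is a bounded-operator-valued analytic function on $\Re k>0$. I expect the main obstacle to be bookkeeping rather than conceptual: one must track which component pairings of $D$ and $C_\varepsilon^*$ leave no derivative on $g_k$ (these are the ones demanding polynomial decay) versus which do (these are automatically fine), and one must be careful that the integration by parts transferring $\partial_x$ from $B_i$ onto the kernel produces no uncontrolled boundary terms — here the decay of $\beta$ and the factorization through $\psi_0,\phi_0$ in the transversal variable ensure the boundary terms vanish. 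The exponent $5+\delta$ in \eqref{predpoklady} is the quantitative margin that makes all the Schur/Hilbert–Schmidt integrals converge with room to spare; a sharper analysis would lower it, as the authors note.
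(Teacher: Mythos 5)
Your proposal follows essentially the same route as the paper: estimate the transversal factors uniformly, reduce to one-dimensional integral operators whose kernels are $g_k$ and $\partial_x g_k$ sandwiched between functions decaying as in \eqref{predpoklady}, bound these kernels pointwise by $\tfrac12|x-x'|$ and $\tfrac12$ respectively, conclude boundedness via Hilbert--Schmidt estimates, and get analyticity by differentiating the kernel in $k$ and repeating the same weighted estimates. The only small inaccuracy is that $C_{\varepsilon}^*$ contains no derivatives (the $A_i^*$ are multiplication operators), so the derivatives in $D$ are simply pushed onto the kernel by differentiating under the integral sign, no integration by parts is performed, and the boundary terms you worry about never arise.
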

\begin{proof}
We are able to obtain an integral operator from $D N_{\lambda} C_{\varepsilon}^*$ by immersing the differentiations in $D$ into the inside of the integral operator $N_{\lambda} C_{\varepsilon}^*$. (This operation is justified, if the new integral kernel will be integrable and that is the object of our proof anyway.) Now, in the integral kernel, every part depending on $u$ can be uniformly estimated. Therefore we may check only the boundedness and analyticity of integral operators $h \tilde{N_{\lambda}}h$ and $h \partial\tilde{N_{\lambda}}h$ with kernels $h n_{\lambda} h$ and $h \partial n_{\lambda} h$, respectively, where
\beq
\bal
n_{\lambda}(x,x') &:= \frac{\ee^{-\sqrt{\mu_0^2 - \lambda}|x - x'|} - 1}{2 \sqrt{\mu_0^2 - \lambda}},\\
\partial n_{\lambda}(x,x') &:= -\frac{1}{2} \frac{x - x'}{|x - x'|} \ee^{-\sqrt{\mu_0^2 - \lambda}|x-x'|},
\eal
\eeq
with $h(x)$ being a bounded continuous function in $\RR$. Its specific form is not important, the main role plays its behaviour in infinity. Since $h$ arises from the terms inside of $C_{\varepsilon}$ and $D$, $h$ decays in $\pm \infty$ faster than $|x|^{5/2 + \delta/2}$. As a consequence, $h \in L^2(\RR,(1 +x^2+x^4) \dx)$ since it is bounded and its absolute value can be estimated near $\pm \infty$ by $1/|x|^{5/2 +\delta/2}$. Using the Hilbert-Schmidt norm we get
\beq
\bal \label{odhad-vyzkumak}
\left\|h \tilde{N_{\lambda}}h\right\|^2 
&\leq \frac{1}{4}\int\limits_{\mathbb{R}^2}|h(x)|^2 \,|x - x'|^2\,|h(x')|^2 \dx \dx' \\
&\leq \frac{1}{2}\int\limits_{\mathbb{R}^2}|h(x)|^2 \,\left(|x|^2 + |x'|^2\right)\,|h(x')|^2 \dx \dx' \\
&\leq \frac{1}{2}\left(\int \limits_{\mathbb{R}}|h(x)|^2(1 + x^2) \dx \right)^2 \!< +\infty.
\eal
\eeq
In the same manner the boundedness of $h \partial\tilde{N_{\lambda}}h$ can be shown:
\beq
\bal
\left\|h \tilde{N_{\lambda}}'h\right\| 
&\leq \frac{1}{4} \int_{\RR^2} |h(x)^2| \ee^{-2\sqrt{\mu_0^2 - \lambda}|x-x'|}|h(x')|^2 \dx \dx' \\
&\leq \frac{1}{4} \int_{\RR}|h(x)^2|\dx \int_{\RR}|h(x')^2|\dx' < + \infty.
\eal
\eeq
To verify the second inequality in (\ref{odhad-vyzkumak}) it is sufficient to see that
\beq \label{trapny-odhad}
\left|\frac{\ee^{a + \ii b} - 1}{-(a + \ii b)}\right|^2 \leq 1
\eeq
holds for all $a,b \in \mathbb{R}, a < 0$. After an explicit calculation of the absolute value on left-hand side of the inequality and a simple algebraic manipulation, we reformulate our problem to verification that
\beq
1 + \ee^{2a} - 2 \ee^a \cos b - a^2 - b^2 \leq 0 
\eeq
holds. We employ the estimate $\cos b \geq 1 - b^2/2$ which holds for all $b \in \mathbb{R}$ to get
\beq
\bal
1 + \ee^{2a} - 2 \ee^a \cos b - a^2 - b^2 &\leq 1 + \ee^{2a} - 2 \ee^a\left(1 - \frac{b^2}{2}\right) - a^2 - b^2 \\
&\leq 1 + \ee^{2a} - 2 \ee^a1 + b^2 - a^2 - b^2 \\
&= 1 + \ee^{2a} - 2 \ee^a - a^2.
\eal
\eeq
Using calculus of functions of one variable it is now easy to check that $f(a) := 1 + \ee^{2a} - 2 \ee^a - a^2 \leq 0$.
\paragraph{}For proving the analyticity we need to check the finiteness of the norms of derivatives of the integral kernels
\beq
\bal
\frac{\mathrm{d}n_{\lambda}}{\mathrm{d}k}(x,x') &= \frac{-k|x - x'|\ee^{- k|x - x'|} - \ee^{-k|x - x'|}+1}{2 k^2} \\
\frac{\mathrm{d}n_{\lambda}'}{\mathrm{d}k}(x,x') &=\frac{1}{2} (x - x')  \ee^{-k |x-x'|}.
\eal
\eeq 
We estimate
\beq
\left|\frac{-k|x - x'|\ee^{- k|x - x'|} - \ee^{-k|x - x'|}+1}{2 k^2}  \right|\leq |x-x'|^2.
\eeq
(This can be proven in exactly the same way as \eqref{trapny-odhad}). Similarly as in \eqref{odhad-vyzkumak} we calculate the bound and we obtain
\beq
\bal
\left\|h \frac{\mathrm{d}n_{\lambda}'}{\mathrm{d}k} h\right\| 
&\leq \int\limits_{\mathbb{R}^2}|h(x)|^2 |x-x'|^4|h(x')|^2\dx \dx' \\
&\leq 8 \int\limits_{\mathbb{R}^2} |h(x)|^2 (|x|^4 + |x'|^4)|h(x')|^2\dx \dx' \\
&\leq 8\left( |h(x)|^2 (1 + |x|^4)\dx\right)^2 < + \infty.
\eal
\eeq
We conduct the estimate of $\frac{\mathrm{d}n_{\lambda}'}{\mathrm{d}k}$ in the same way:
\beq
\bal
\left\|h \frac{\mathrm{d}n_{\lambda}}{\mathrm{d}k} h\right\| 
&\leq\int\limits_{\mathbb{R}^2}|h(x)|^2 \left(|x|^2 + |x'|^2\right)|h(x')|^2\dx \dx' \\
&\leq \left( |h(x)|^2 (1 + |x|^2)\dx\right)^2 < + \infty.
\eal
\eeq

\end{proof}
%
\subsubsection{Behaviour of \texorpdfstring{$N_{\varepsilon}^{\lambda}$}{N epsilon lambda} in the layer (n=2)}
%
For the layer, there is a different requirement on the decay of $\beta$ and of its derivatives in $\pm \infty$, specifically
\beq \label{predpoklady2}
\bal
\lim_{|x| \rightarrow +\infty} |x|^{4 + \delta}\,\beta(x) &= 0, \\
\lim_{|x| \rightarrow +\infty} |x|^{4 + \delta}\,\partial_{x_j}\beta(x) &= 0,\\
\lim_{|x| \rightarrow +\infty} |x|^{4 + \delta}\,\partial_{x_j}^2\beta(x) &= 0,
\eal
\eeq
for all $j= 1, \dots, n$, where $\delta$ is an arbitrarily small positive number. Note that these conditions differ from \eqref{predpoklady}. This is caused by both different dimension of the problem and by using a different estimate method.
\begin{lem} \label{lemma5}
Let us assume \eqref{predpoklady2}. Then $D N_{\lambda} C_{\varepsilon}^*$ as a function of $k$ defined in $\left\{ k \in \CC \left|\, \Re k < 0\right.\right\}$ is a bounded and analytic operator-valued function
\end{lem}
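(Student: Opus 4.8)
The plan is to follow the architecture of the proof of Lemma \ref{lemma3}, with one essential modification: the Hilbert--Schmidt estimate that sufficed for the one--dimensional exponential kernel must be supplemented by the Schur--Holmgren test, because in two dimensions the Macdonald kernel, although only logarithmically singular before differentiation (hence still square--integrable once multiplied by the decaying weights coming from $\beta$), acquires a $|x-x'|^{-1}$ singularity after one longitudinal derivative, and this is not square--integrable on $\RR^2$. This is the ``different estimate method'' announced before the statement.

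First I would reduce $D N_\lambda C_\varepsilon^*$ to a finite family of scalar integral operators on $L^2(\RR^2)$, exactly as in Lemma \ref{lemma3}: the differentiations contained in $D$ are moved inside the integral operator $N_\lambda C_\varepsilon^*$ (legitimate \emph{a posteriori} once the resulting kernels are shown to be admissible), and the factors depending on $u,u'\in(0,d)$ -- namely $\psi_0(u)$, $\psi_0'(u)$, $\overline{\phi_0(u')}$ and the powers of $u,u'$ stemming from the entries of $C_\varepsilon^*$ -- are bounded by constants depending only on $|\alpha_0|$ and $d$ and pulled out. One is then left to control, uniformly in $k$ on compact subsets of $\{\Re k<0\}$, the $L^2(\RR^2)$--operator norm of integral operators with kernels of the two shapes
\[
h_L(x)\,\nu_\lambda(x,x')\,h_R(x') \qquad \text{and} \qquad h_L(x)\,\bigl(\nabla_x\nu_\lambda\bigr)(x,x')\cdot h_R(x'),
\]
where $\nu_\lambda(x,x') := \tfrac{1}{2\pi}\bigl(K_0(\sqrt{\mu_0^2-\lambda}\,|x-x'|)+\ln\sqrt{\mu_0^2-\lambda}\bigr)$ and $h_L,h_R$ are bounded continuous functions assembled from $\beta$, $\partial_{x_j}\beta$, $\partial_{x_j}^2\beta$ and their square roots; by \eqref{predpoklady2} each of them decays faster than $|x|^{-(2+\delta/2)}$ at infinity, hence $h_L,h_R\in L^\infty(\RR^2)\cap L^1(\RR^2)$ and they belong to $L^2\bigl(\RR^2,(1+|x|^\epsilon)\,\dx\bigr)$ for all sufficiently small $\epsilon>0$.

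The core of the argument is a pair of bounds, uniform in $k$ on compact subsets of $\{\Re k<0\}$: $|\nu_\lambda(x,x')|\le C\bigl(1+|\ln|x-x'||\bigr)$ and $|\nabla_x\nu_\lambda(x,x')|\le C|x-x'|^{-1}$. Writing $\kappa:=\sqrt{\mu_0^2-\lambda}=\ee^{1/k}$, so that $|\kappa|<1$ whenever $\Re k<0$, these follow from the classical small--argument expansions $K_0(z)=-\ln(z/2)-\gamma+\mathcal O(z^2\ln z)$, $K_0'=-K_1$, $K_1(z)=z^{-1}+\mathcal O(z\ln z)$ together with the exponential decay of $K_0,K_1$ at infinity, after splitting $\{|\kappa|\,|x-x'|\le1\}$ off from $\{|\kappa|\,|x-x'|>1\}$; on the second region $|x-x'|>1$ and $|\ln|\kappa||\le\ln|x-x'|$, which is precisely what tames the subtracted logarithm $\ln\sqrt{\mu_0^2-\lambda}$. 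The first bound makes $h_L\nu_\lambda h_R$ Hilbert--Schmidt, since $\iint|h_L(x)|^2\bigl(1+|\ln|x-x'||\bigr)^2|h_R(x')|^2\,\dx\,\dx'<+\infty$: using $\bigl(1+|\ln r|\bigr)^2\le C_\epsilon(r^{-\epsilon}+r^\epsilon)$, the factor $r^{-\epsilon}$ is integrable near the diagonal (as $|x-x'|^{-\epsilon}\in L^1_{\mathrm{loc}}(\RR^2)$ and $h_L,h_R\in L^\infty$), while $r^\epsilon\le|x|^\epsilon+|x'|^\epsilon$ is controlled at infinity by the $|x|^\epsilon$--weighted square--integrability of $h_L,h_R$. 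The gradient kernel $|x-x'|^{-1}$ is not Hilbert--Schmidt on $\RR^2$, so there I would instead use the Schur--Holmgren bound: both $\sup_x\int_{\RR^2}|x-x'|^{-1}|h_R(x')|\,\dx'$ and the symmetric quantity with $h_L$ are finite (split $|x-x'|\le1$, where $|x-x'|^{-1}\in L^1_{\mathrm{loc}}(\RR^2)$ and $h_R\in L^\infty$, from $|x-x'|>1$, where $h_R\in L^1$ closes the bound), which together with $h_L,h_R\in L^\infty$ gives $L^2$--boundedness with a bound uniform in $k$. Analyticity in $k$ then follows along the lines of Lemmas \ref{lemma2} and \ref{lemma3}: the kernels are analytic in $k$ for fixed $x\ne x'$, their $k$--derivatives are of the type $\tfrac{1}{2\pi k^2}\bigl(\kappa|x-x'|K_1(\kappa|x-x'|)-1\bigr)$ and its $\nabla_x$--analogue, which obey the uniform bounds $C|x-x'|^2\bigl(1+|\ln|x-x'||\bigr)$ and $C|x-x'|$ respectively, so that feeding them into the same Hilbert--Schmidt / Schur estimates shows the difference quotients converge in operator norm.

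The step I expect to be the main obstacle is the bound on the differentiated kernel together with its use: unlike the one--dimensional situation, where the first derivative of $\mathcal N_\lambda$ stays bounded, here $K_1$ contributes a genuine $|x-x'|^{-1}$ singularity that destroys the Hilbert--Schmidt estimate, so one must pass to the Schur test, and it is exactly the $L^1(\RR^2)$--integrability of $h_L,h_R$ -- equivalently the full decay rate $|x|^{4+\delta}$ imposed in \eqref{predpoklady2} on $\beta$ and on its first and second derivatives -- that makes the Schur integrals converge. The secondary delicate point is the competition, as $k\to0$, between the logarithmic near--diagonal growth of $\nu_\lambda$ and the slow stabilisation of the subtracted term $\ln\sqrt{\mu_0^2-\lambda}$, which is handled by the region split $\{|\kappa|\,|x-x'|\le1\}$ versus $\{|\kappa|\,|x-x'|>1\}$ described above.
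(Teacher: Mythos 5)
Your proposal is correct and follows essentially the same route as the paper's proof: reduce $D N_{\lambda} C_{\varepsilon}^*$ to weighted scalar Macdonald-kernel operators on $L^2(\RR^2)$, use the decay \eqref{predpoklady2} to place the weights in $L^1\cap L^{\infty}$ (respectively weighted $L^2$), tame the $|x-x'|^{-1}$ singularity of the differentiated kernel by the Schur--Holmgren test, and get analyticity from bounds on the $k$-derivatives of the kernels via $K_0' = -K_1$ and the small-argument behaviour of $K_1$. The only deviation is cosmetic: you use a Hilbert--Schmidt bound (after the pointwise estimate $C\bigl(1+\bigl|\ln|x-x'|\bigr|\bigr)$) for the undifferentiated kernel and split at $|\kappa|\,|x-x'|=1$, whereas the paper applies Schur--Holmgren to both pieces with a fixed-radius split and its list of uniform Macdonald-function inequalities.
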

\begin{proof}
Throughout this proof we employ various properties of the Macdonald function $K$ which can be found e.g. in \cite[9.6-7]{Abramowitz}. Similarly as in the proof of Lemma \ref{lemma3} we get rid of the derivatives in $D$ and we may check the boundedness of integral operators $h \tilde N_{\lambda}h$ and $h \partial_{\mu}\tilde N_{\lambda} h$ with kernels $h n_{\lambda} h$ and $h \partial_{\mu}n_{\lambda} h$, respectively, where
\beq
\bal
n_{\lambda}(x,x') &:= \frac{1}{2\pi}\left(K_0\left(w_0(\lambda)|x-x'|\right) + \ln w_0(\lambda)\right),\\
\partial_{\mu} n_{\lambda}(x,x') &:= -\frac{1}{2\pi} \frac{x_{\mu} - x'_{\mu}}{|x - x'|}  w_0(\lambda) K_1(w_0(\lambda) |x-x'|),
\eal
\eeq
with $\mu = 1,2$ and x $\partial_{\mu}$ means the derivative with respect to $x_{\mu}$. We adopted the notation $w_0(\lambda) = \sqrt{\mu_0^2 - \lambda} $. We used the differentiation formula for Macdonald functions, $K_0' = - K_1$. For the purpose of the estimates, we use several other formulae, which are valid for any $z \in (0, +\infty)$:
\beq
\bal
\left| (K_0(z) + \ln z)\ee^{-z}\right| &\leq c_1,\\
\left| K_1(z) - 1/z\right| &\leq c_2,\\
\left| K_1(z) - z\left(K_0(z) + K_2(z)\right)/2\right| &\leq c_3,\\
\left| z K_1(z)\right| &\leq 1,\\
\left| \left(K_0(z) + \ln z\right)/z\right| & \leq c_4.
\eal
\eeq
In the calculation of the integral bounds we make use of the polar coordinates
\beq
(x'_1,x_2') = (x_1 - \rho \cos \varphi, x_2 - \rho \sin \varphi)
\eeq
and employ the estimate via Schur-Holmgren bound, holding for every integral operator $K$ with the integral kernel $\mathscr{K}(\cdot,\cdot)$ acting on $L^2(M)$, where $M$ is an open subset of $\RR^n$\cite[Lem. 2.2]{BEGK}:
\beq \label{Schur-Holmgren}
\|K\| \leq \|K\|_{SH} := \left( \sup_{x\in M} \int_M |\mathscr{K}(x,y)|\dy \;\sup_{y\in M} \int_M |\mathscr{K}(x,y)|.\dx \right)^{1/2}
\eeq
Since $h$ is continuous, bounded and $|x||h(x)| \leq 1/|x|^{1+\delta}$ for sufficiently high $|x|$, then $h \in L^1(\RR^2, (1+|x|)\dx)$. We obtain
\beq
\bal
\left\|h \tilde{N_{\lambda}} h\right\| \leq& \frac{1}{2\pi} \sup_{x\in\RR^2} |h(x)| \int_{\RR^2}\left|\left(K_0\left(w_0(\lambda)|x-x'|\right) + \ln w_0(\lambda)|x-x'| - \ln|x-x'|\right) h(x') \right|\dx' \\
\leq& c_1 \|h\|^2_{L^{\infty}(\RR^2)} \left(\int_0^{R}\ee^{w_0(\lambda)\rho}\rho \drho + \int_{0}^{R}|\ln \rho|\rho\drho\right) \\
&+ \frac{1}{2 \pi} \sup_{x\in\RR^2} |h(x)|\sup_{z\in (R,+\infty)} \frac{K_0(w_0(\lambda)z) - \ln w_0(\lambda) z + \ln z}{z}\int_{\RR^2}(|x| + |x'|)h(x')\dx'\\
\leq& c_1 \|h\|^2_{L^{\infty}(\RR^2)} R \left( R \ee^{w_0(\lambda)R} + \max\left\{ \ee^{-1}, R \ln R\right\}\right) \\
& + (c_4 + c_5) \left(\sup_{x\in\RR^2} |x h(x)| \|h\|_{L^1(\RR^2)} + \sup_{x\in\RR^2} |h(x)| \| h\|_{L^1(\RR^2,|x|\dx)}\right) < + \infty,
\eal
\eeq
where $R > 0$ arbitrary and $c_5 := \sup_{z\in (R,+\infty)} \ln z / z$. The estimates of $\|h \tilde \partial_{\mu} N_{\lambda} h\|_{SH}$ yield
\beq
\bal
\left\|h \partial_{\mu}\tilde N_{\lambda} h\right\|_{SH} 
 \leq& \|h\|^2_{L^{\infty}(\RR^2)} \int_0^R \frac{\rho \drho}{\rho} + \sup_{x\in \RR^2}|h(x)|w_0(\lambda)\sup_{z\in(R,+\infty)} K_1(w_0(\lambda) z) \|h\|_{L^1(\RR^2)} \\
 \leq & \|h\|^2_{L^{\infty}(\RR^2)} R + \frac{1}{R} \|h\|_{L^{\infty}(\RR^2)} \|h\|_{L^1(\RR^2)} < +\infty.
\eal
\eeq
\paragraph{}Checking the analyticity means, according to its definition, checking the analyticity of the two sesquilinear forms $(\Phi, N_{\lambda}\Psi)$ and $(\Phi, \partial_{\mu}N_{\lambda}\Psi)$ with arbitrary $\Phi,\Psi \in L^2(\RR^2)$, taken as functions of $k$. This can be done by checking the finiteness of the norms of $\text{d}N_{\lambda}/\text{d}k$ and $\text{d}(\partial_{\mu}N_{\lambda})/\text{d}k$. Using the formula $K_1'(z) = (K_0(z) + K_2(z))/2$ and employing the notation $z:= w_0(\lambda)|x-x'|$ we arrive at 
\beq
\bal
\frac{\text{d} n_{\lambda}}{\text{d}k} &= \frac{1}{2 \pi} \frac{z}{k^2} \left(K_1(z) - \frac{1}{z} \right), \\
\frac{\text{d} (\partial_{\mu} n_{\lambda})}{\text{d}k} &= \frac{1}{2\pi}\frac{x_{\mu} - x'_{\mu}}{|x - x'|} \frac{w_0(\lambda)}{k^2}\left( K_1(z) - z \frac{K_0(z) + K_2(z)}{2}\right).
\eal
\eeq
Now we use the inequality $\ee^{k^{-1}}/k^2 \leq c_6$, valid for all $k \in (-\infty,0)$ and estimate
\beq
\bal
\left\|h\frac{\text{d} n_{\lambda}}{\text{d}k}h\right\| \leq& \frac{c_2 c_6}{2\pi} \sup_{x \in \RR^2} |h(x)| \int_{\RR^2} \left(|x| + |x'|\right) |h(x')| \dx'\\
\leq& \frac{c_2 c_6}{2\pi}\left( \sup_{x\in\RR^2} |x h(x)| \|h\|_{L^1(\RR^2)} + \sup_{x\in\RR^2} |h(x)| \|h\|_{L^1(\RR^2,x\dx)} \right) < +\infty.
\eal
\eeq
The estimate of $\text{d} \partial_{\mu}n_{\lambda}/\text{d}k$ can also be carried out without further difficulties:
\beq
\bal
\left\|h \frac{\text{d} (\partial_{\mu} n_{\lambda})}{\text{d}k} h\right\| 
&\leq \frac{c_3 c_6}{2\pi} \|h\|_{L^{\infty}(\RR^2)} \|h\|_{L^1(\RR^2)} <+\infty.
\eal
\eeq
\end{proof}
%
\subsection{The bound state}
%
Now we are able to summarise the results about both parts of $M_{\varepsilon}^{\lambda}$ and state that it is well-behaved in the right half-plane, as we suspected.
\begin{lem} \label{WClemma}
Let us assume \eqref{predpoklady} if $n=1$ or \eqref{predpoklady2} $n=2$. Then $M_{\varepsilon}^{\lambda}(\lambda(k))$ as a function of $k$ defined in $\left\{ k \in \CC \left|\, \Re k > 0\right.\right\}$ for $n=1$ or in $\left\{ k \in \CC \left|\, \Re k < 0\right.\right\}$ for $n=2$ is a bounded and analytic operator-valued function which can be analytically continued to the region $\left\{ k \in \CC \left|\, \Re k \geq 0\right.\right\}$ or $\left\{ k \in \CC \left|\, \Re k \leq 0\right.\right\}$, respectively.
\end{lem}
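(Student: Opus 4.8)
The plan is to read the statement off the two preceding families of lemmas, since $M_\varepsilon^\lambda$ has already been split into exactly the pieces they control. First I would recall that for $\lambda\in\CC\setminus[\mu_0^2,+\infty)$, equivalently for $k$ in the open half-plane $\{\Re k>0\}$ ($n=1$) or $\{\Re k<0\}$ ($n=2$),
\beq
M_\varepsilon^\lambda = N_\varepsilon^\lambda + \varepsilon\, D\, R_{\alpha_0}^{\perp}(\lambda)\, C_\varepsilon^*, \qquad N_\varepsilon^\lambda := \varepsilon D N_\lambda C_\varepsilon^*.
\eeq
Boundedness and analyticity of $N_\varepsilon^\lambda$ as a function of $k$ on the stated open half-plane is Lemma \ref{lemma3} for $n=1$ and Lemma \ref{lemma5} for $n=2$; for the term $\varepsilon D R_{\alpha_0}^{\perp}(\lambda)C_\varepsilon^*$ it is Lemmas \ref{lemma1} and \ref{lemma2}. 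A sum of bounded analytic operator-valued functions is again bounded and analytic, so the first assertion is immediate.

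The substance of the lemma is the extension across the boundary line $\Re k=0$, and in particular through $k=0$, i.e. through the threshold $\lambda=\mu_0^2$. I would stress that this is precisely what removing the $j=0$ term (the operator $L_\varepsilon^\lambda$) from $K_\varepsilon^\lambda$ was engineered to deliver: the diverging factors $1/\sqrt{\mu_0^2-\lambda}$ ($n=1$) and $\ln\sqrt{\mu_0^2-\lambda}$ ($n=2$) have been extracted, so the kernels $\mathcal{N}_\lambda$ and $\mathcal{R}^{\perp}_{\alpha_0}$ stay regular at the threshold. For $N_\varepsilon^\lambda$, the pointwise kernel bounds used in the proofs of Lemmas \ref{lemma3} and \ref{lemma5} -- the inequality \eqref{trapny-odhad} and its variants in the strip, and the Macdonald-function estimates fed into the Schur--Holmgren bound \eqref{Schur-Holmgren} in the layer -- were established in the respective open half-planes but persist by continuity on their common boundary $\Re k=0$ (the elementary inequalities hold with the closed condition, e.g. for all $a\le 0$ rather than $a<0$); since the resulting Hilbert--Schmidt, respectively Schur, bounds do not depend on $k$, $N_\varepsilon^\lambda$ stays uniformly bounded and operator-norm continuous up to and on $\{\Re k=0\}$. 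For the second summand, the proof of Lemma \ref{lemma1} already furnishes an analytic continuation of $R_{\alpha_0}^{\perp}(\lambda)=R_{\alpha_0}(\lambda)\mathcal{P}_0^{\perp}$ to all $\lambda\notin[\mu_1^2,+\infty)$, a region whose image in the $k$-variable contains a genuine neighbourhood of the part of $\{\Re k\ge 0\}$ around $k=0$, because $\mu_1^2>\mu_0^2$. Adding the two pieces, $M_\varepsilon^\lambda$ extends to a bounded operator-valued function on $\{\Re k\ge 0\}$ (resp. $\{\Re k\le 0\}$), analytic in the interior, continuous up to the boundary by dominated convergence from the uniform kernel bounds; for the analyticity of the matrix elements one uses that the kernels depend holomorphically on $k$ away from $k=0$, and that in the strip $n_\lambda$, $\partial n_\lambda$ are in fact entire in $k$, so that the continuation through $k=0$ is genuinely holomorphic when $n=1$, whereas for $n=2$ the Macdonald kernel contributes only exponentially small, non-holomorphic corrections at $k=0$ -- exactly the dichotomy visible in \eqref{expansion}.

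I do not expect a real obstacle here: all the analytic work was done in Lemmas \ref{lemma1}--\ref{lemma5}, and this statement is essentially their bookkeeping. The single point that needs a line of care is confirming that none of the constants produced in Lemmas \ref{lemma3} and \ref{lemma5} degenerate as $\Re k\downarrow 0$ -- equivalently, that the elementary bounds on the $k$-dependent factors of $n_\lambda$ (such as \eqref{trapny-odhad}) are sharp enough to survive at the threshold -- which they are, having been proved with room to spare. With that in hand, local boundedness together with analyticity of the matrix elements in the interior, continuous up to $\{\Re k=0\}$, is exactly the input the Birman--Schwinger analysis leading to Theorem \ref{vetavet} will use.
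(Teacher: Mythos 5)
Your proposal is correct and follows essentially the same route as the paper: the paper's proof of Lemma \ref{WClemma} is exactly this bookkeeping, combining Lemmas \ref{lemma1} and \ref{lemma2} for the term $\varepsilon D R_{\alpha_0}^{\perp}(\lambda)C_{\varepsilon}^*$ with Lemma \ref{lemma3} (strip) or Lemma \ref{lemma5} (layer) for $N_{\varepsilon}^{\lambda}$, and observing that the operator and its $k$-derivatives stay bounded as $\Re k \rightarrow 0$, which permits the continuation to the closed half-plane. Your additional checks (that the kernel bounds such as \eqref{trapny-odhad} do not degenerate at $\Re k = 0$, and that $R_{\alpha_0}^{\perp}$ continues past the threshold because $\mu_1^2 > \mu_0^2$) are consistent with, and slightly more explicit than, the paper's argument.
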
 
\begin{proof}
Using Lemmas \ref{lemma1}, \ref{lemma2} and Lemma \ref{lemma3} in the case of the strip or Lemma \ref{lemma5} in the case of the layer, we see that $M_{\varepsilon}^{\lambda}(\lambda(k))$ and its derivatives are bounded when $\Re k \rightarrow 0$, therefore  $M_{\varepsilon}^{\lambda}(\lambda(k))$ can be analytically continued to the region where $\Re k=0$.
\end{proof}
Equipped with Lemma \ref{WClemma} we may proceed to the main proof of this section.
%
\subsubsection{Proof of Theorem \ref{vetavet}}
%
Our goal is to find the condition to ensure that the operator $\varepsilon K_{\varepsilon}^{\lambda}$ has an eigenvalue $-1$. First we restrict ourselves to the case $n=1$. Using Lemma \ref{WClemma} we may choose $\varepsilon$ so small that $\|M_{\varepsilon}^{\lambda}\| < 1$ so the operator $(I + \varepsilon M_{\lambda})^{-1}$ exists and is analytic in the region $\left\{ k \in \CC \left|\, \Re k \geq 0\right.\right\}$. We may write
\beq
\bal
(I + K_{\varepsilon}^{\lambda})^{-1} &= \left( (I + M_{\varepsilon}^{\lambda})(I + (I+ M_{\varepsilon}^{\lambda})^{-1} L_{\varepsilon}^{\lambda})\right)^{-1} = \left(I + (I + M_{\varepsilon}^{\lambda})^{-1}L_{\varepsilon}^{\lambda} \right)^{-1}(I + M_{\varepsilon}^{\lambda})^{-1}.
\eal
\eeq
and therefore only determine whether the operator $P^{\lambda}_{\varepsilon}:=(I+ M_{\varepsilon}^\lambda)^{-1} L_{\varepsilon}^{\lambda}$ has eigenvalue $-1$. Since $L_{\varepsilon}^{\lambda}$ is a rank-one operator by definition, we can write
\beq
P^{\lambda}_{\varepsilon}(\cdot) = \Phi (\Psi, \cdot),
\eeq
with
\beq
\bal
\overline{\Psi(x,u)}&:= \varepsilon \psi_0(u)\frac{1}{2 \sqrt{-\lambda}}C^*_{\varepsilon}, \\
\Phi(x,u)&:=\left((I +  M_{\varepsilon}^{\lambda})^{-1}D \overline{\phi_0}\right)(x,u).
\eal
\eeq
(Recall that $C_{\varepsilon}^*$ is just an operator of multiplication by a function.) The operator $P^{\lambda}_{\varepsilon}$ can have only one eigenvalue, namely $(\Psi,\Phi)$. Putting it equal to $-1$ we get the condition
\beq \label{implicit}
-1 = \frac{\varepsilon}{2 \sqrt{\mu_0^2 -\lambda}} \int_{\Omega} \psi_0(u) \left(C_{\varepsilon}^*(I +  M_{\varepsilon}^{\lambda})^{-1}D \overline{\phi_0}\right)(x,u) \dx \du.
\eeq
Let us define the function
\beq
G(k,\varepsilon) := - \frac{\varepsilon}{2} \int_{\Omega} \psi_0(u) \left(C_{\varepsilon}^*(I +  M_{\varepsilon}^{\lambda})^{-1}D \overline{\phi_0}\right)(x,u) \dx \du.
\eeq
We shall return to the proof of existence of the eigenvalue later on, let us now for a moment assume that there is a solution to the implicit equation \eqref{implicit}. Using the formula
\beq
(I +  M_{\varepsilon}^{\lambda})^{-1} = I - M_{\varepsilon}^{\lambda}(I + M_{\varepsilon}^{\lambda})^{-1}  = I - M_{\varepsilon}^{\lambda} +  \left(M_{\varepsilon}^{\lambda}\right)^2 (I + M_{\varepsilon}^{\lambda})^{-1}
\eeq
we derive its asymptotic expansion in $0$:
\beq
\bal
k(\varepsilon) &=  \frac{\varepsilon}{2} \int_{\Omega} \psi_0 C_{0}^*D \overline{\phi_0} + \mathcal{O}(\varepsilon^2)= \frac{\varepsilon}{2}\left(C_{\varepsilon}^*D \phi_0,\psi_0\right) + \mathcal{O}(\varepsilon^2)
\eal
\eeq
for $\varepsilon$ tending to $0$. Since $B_j \phi_0 = 0$ for $j=1,\dots,n$, $\int_{\RR^n} \Delta'\beta(x) \dx = 0$ (due to the decay in infinity) and $(C_{\varepsilon})_{l} = \mathcal{O}(\varepsilon)$ for $l = n+3,\dots,2n+3$, after simple calculation we have
\beq \label{neco}
\bal
k(\varepsilon)&=\frac{\varepsilon}{2} \left(B_{n+1} \phi_0, A_{n+1} \psi_0\right)+ \mathcal{O}(\varepsilon^2) = \ii \varepsilon \langle\beta\rangle \left(\frac{\partial}{\partial u} \phi_0 ,\psi_0\right) +\mathcal{O}(\varepsilon^2) = - \varepsilon\langle\beta\rangle \alpha_0 + \mathcal{O}(\varepsilon^2).
\eal
\eeq
Here we used $\alpha_0 < \pi/d$. Clearly $k \rightarrow 0$ when $\varepsilon \rightarrow 0$ and if $\lambda$ ought to be an eigenvalue outside the essential spectrum, $\Re k \geq 0$ must hold. This is if $\langle\beta\rangle \alpha_0 < 0$. If $\langle\beta\rangle \alpha_0 > 0$ no eigenvalue can exist. The expansion of $k$ reads $k(\varepsilon) = \sqrt{\mu_0^2 - \lambda} = \varepsilon \langle\beta\rangle \alpha_0 + \mathcal{O}(\varepsilon^2)$ and this gives
\beq
\lambda(\varepsilon) = \mu_0^2 - \varepsilon^2 \langle\beta\rangle^2 \alpha_0^2 + \mathcal{O}(\varepsilon^3)
\eeq
as $\varepsilon$ goes to $0$.
\paragraph{}So far we only found out what our solution had to meet, if it existed. Equipped with the knowledge of the asymptotic expansion \eqref{neco} we apply the Rouch\'{e}'s theorem \cite[Thm. 10.43 b)]{Rudin87} in the disc $B(k_0,r)$, where 
\beq
k_0 := - \varepsilon\langle\beta\rangle \alpha_0
\eeq
and the radius $r$ is so small that the whole disc lies in the half-plane $\Re k > 0$. First we show that $G(k,\varepsilon)$ is analytic as a function of $k$ in the region $\Re k \geq 0$. We prepare formula for differentiating of $\left( 1 + M_{\varepsilon}^{\lambda}\right)^{-1}$:
\beq
\bal
\frac{\partial}{\partial k} \left( 1 + M_{\varepsilon}^{\lambda}\right)^{-1} &= \lim_{k' \rightarrow k} \frac{\left( 1 + M_{\varepsilon}^{\lambda}\right)^{-1} - \left( 1 + M_{\varepsilon}^{\lambda'}\right)^{-1}}{ k - k'} \\
&= \lim_{k' \rightarrow k} \frac{\left( 1 +  M_{\varepsilon}^{\lambda}\right)^{-1}( M_{\varepsilon}^{\lambda} - \varepsilon M_{\varepsilon}^{\lambda'})\left( 1 + M_{\varepsilon}^{\lambda'}\right)^{-1}}{ k - k'}\\
&= \left( 1 + M_{\varepsilon}^{\lambda}\right)^{-1}\frac{\partial M_{\varepsilon}^{\lambda}}{\partial k}  \left( 1 +  M_{\varepsilon}^{\lambda}\right)^{-1}.
\eal
\eeq
And we have for $G(k,\varepsilon)$ in the region $\Re k \geq 0$:
\beq \label{odhadG}
\bal
\left|\frac{\partial G(k,\varepsilon)}{\partial k}\right| &= \frac{\varepsilon}{2}\left|\int_{\Omega} \psi_0(u)\left(C_{\varepsilon}^* \frac{\partial}{\partial k} \left( 1 + M_{\varepsilon}^{\lambda}\right)^{-1} D \overline{\phi_0} \dx \du \right)(x,u)\right|\\
&= \frac{\varepsilon}{2}\left|\int_{\Omega} \psi_0(u)\left(C_{\varepsilon}^*\left( 1 + M_{\varepsilon}^{\lambda}\right)^{-1}\frac{\partial M_{\varepsilon}^{\lambda}}{\partial k}  \left( 1 +  M_{\varepsilon}^{\lambda}\right)^{-1}D \overline{\phi_0} \dx \du \right)(x,u)\right|\\
&\leq \frac{\varepsilon}{2} \|\psi_0\|_{L^2(I)} \|C_{\varepsilon}^*\| \left\|\left( 1 +  M_{\varepsilon}^{\lambda}\right)^{-1}\right\|^2 \left\|\frac{\partial M_{\varepsilon}^{\lambda}}{\partial k}\right\| \|D \phi_0\|_{L^2(\Omega)} \\
& = K \varepsilon,
\eal
\eeq
where we used analyticity of $M_{\varepsilon}^{\lambda}$ in the region $\Re k \geq 0$ (Lemma \ref{WClemma}) and properties of operators $C_{\varepsilon}^*$ and $D$. With sufficiently small $r$ we can expand $G(k,\varepsilon)$ in Taylor series in the neighbourhood of the point $k_0$
\beq \label{Taylor}
G(k,\varepsilon) = G(k_0,\varepsilon) + (k - k_0) \frac{\partial G(k, \varepsilon)}{\partial k}(k_0) + \mathcal{O}((k- k_0)^2),
\eeq
We employ Rouch\'{e}'s theorem to show that the equation \eqref{implicit} possesses one simple and unique solution in the half-plane $\textrm{Re}\, k >0$. We prove that the holomorphic functions $G(k, \varepsilon) - k$ and $k_0 - k$ have the same number of zeros (counted as many times as their multiplicity) in $B(k_0,r)$ (i.e. one simple zero). It suffices to show that absolute value of their difference, $\left|G(k,\varepsilon) - k_0\right|$, is strictly smaller than $|k_0 - k|$. It directly follows for all $k \in B(k_0,r)$ from \eqref{neco}, \eqref{odhadG} and \eqref{Taylor}
\beq
\left|G(k,\varepsilon) - k_0 \right| \leq \left|\frac{\partial G(k, \varepsilon)}{\partial k} + o(1)\right| |k - k_0|, 
\eeq
where $o(1)$ tends to $0$ as $k$ tends to $k_0$. Using \eqref{odhadG} and setting $\varepsilon$ and $r$ sufficiently small, we can make the coefficient by $|k - k_0|$ strictly smaller than $1$.
\paragraph{}The reality of the obtained eigenvalue is ensured by the $\PT$-symmetry of the operator $H_{\alpha}$ (cf. Proposition \ref{PTH}). Indeed, from the relation \eqref{PTsymmetry} follows that if $\lambda$ is an eigenvalue of $H_{\alpha}$, then $\overline{\lambda}$ is its eigenvalue as well. From the uniqueness follows that $\lambda=\overline{\lambda}$ and it is therefore real.
\paragraph{}The proof for the case $n=2$ proceeds in the same manner. The equation \eqref{implicit} becomes
\beq
-1 = -\frac{\varepsilon}{2 \pi}\ln \sqrt{\mu_0^2-\lambda} \int_{\Omega} \psi_0(u) \left(C_{\varepsilon}^*(I +  M_{\varepsilon}^{\lambda})^{-1}D \overline{\phi_0}\right)(x,u) \dx \du
\eeq
and solving it yields the asymptotic expansion
\beq
\bal
k(\varepsilon) &= -\frac{\varepsilon}{2\pi}\left(C_{\varepsilon}^* D \phi_0,\psi_0\right) + \mathcal{O}(\varepsilon^2)\\
&= \frac{\varepsilon}{2 \pi} \langle \beta\rangle \alpha_0 + \mathcal{O}(\varepsilon^2).
\eal
\eeq
Now from the requirement that $\Re k \leq 0$ must hold, we obtain the condition $\langle \beta\rangle \alpha_0 > 0$ again. The expansion of $\lambda(\varepsilon)$ reads
\beq
\lambda(\varepsilon) = \mu_0^2 - \ee^{2/w(\varepsilon)} + \mathcal{O}(\varepsilon^3),
\eeq
where $w(\varepsilon) = \frac{\varepsilon}{\pi} \langle\beta\rangle \alpha_0$, for $\varepsilon \rightarrow 0$. The proof of existence and uniqueness holds without change. 
\begin{rmrk} \label{singularita-rmrk}
Note the important role of the singularity of the resolvent function on the existence of the bound state. For this purpose it was necessary for $K_{\varepsilon}^{\lambda}$ to have an eigenvalue $-1$, a necessity for this is $\|K_{\varepsilon}^{\lambda}\| \geq 1$. It would not be possible in the limit $\varepsilon \rightarrow 0$ if the resolvent function inside $K_{\varepsilon}^{\lambda}$ had not a singularity in the limit $\lambda \rightarrow \mu_0^2$. Since the resolvent function in dimension $n\geq 3$ does not possess a singularity, it can not be expected that a weak perturbation of the boundary would yield a bound state. More likely there would be a critical value of the parameter $\varepsilon$, giving a lower bound on $\varepsilon$ enabling a bound state.
\end{rmrk}
\section*{Acknowledgements}
The research was supported by the Czech Science Foundation within the project 14-06818S and by Grant Agency of the Czech Technical University in Prague, grant No. SGS13/217/OHK4/3T/14. The author would like to express his gratitude to David Krej\v ci\v r\' ik and Petr Siegl for valuable discussions and comments.
%
\bibliographystyle{acm}
%

%
\end{document}